\documentclass[a4paper, UKenglish, cleveref, numberwithinsect]{false-lipics-v2021}

\usepackage[normalem]{ulem}
\usepackage{csquotes}
\usepackage[bibliography=common]{apxproof}

\usepackage{multicol}
\usepackage{tikz}
\usepackage{mathtools}
\usepackage{adjustbox}
\usepackage{nth}
\usetikzlibrary{arrows.meta,calc,backgrounds,shapes.misc,decorations.pathreplacing}

\definecolor{Wonggold}{HTML}{E69F00}
\definecolor{Wonglightblue}{HTML}{56B4E9}
\definecolor{Wongteal}{HTML}{009e73}
\definecolor{Wongyellow}{HTML}{f0e442}
\definecolor{Wongblue}{HTML}{0072b2}
\definecolor{Wongorange}{HTML}{d55e00}
\definecolor{Wongpink}{HTML}{cc79a7}

\colorlet{darkWongorange}{Wongorange!90!black}
\colorlet{darkWongblue}{Wongblue!90!black}
\colorlet{darkWongpink}{Wongpink!90!black}
\colorlet{darkWongyellow}{Wongyellow!90!black}
\colorlet{darkWongteal}{Wongteal!90!black}

\definecolor{ibmblue}{HTML}{648FFF}
\definecolor{ibmpurple}{HTML}{785EF0}
\definecolor{ibmpink}{HTML}{DC267F}
\definecolor{ibmorange}{HTML}{FE6100}
\definecolor{ibmgold}{HTML}{FFB000}

\definecolor{Tolpurple}{HTML}{332288}
\definecolor{Tollightblue}{HTML}{88CCEE}
\definecolor{Tolteal}{HTML}{44AA99}
\definecolor{Tolgreen}{HTML}{117733}
\definecolor{Tololive}{HTML}{999933}
\definecolor{Tolyellowish}{HTML}{DDCC77}
\definecolor{Tolpink}{HTML}{CC6677}
\definecolor{Tolmagenta}{HTML}{AA4499}

\colorlet{outercolor}{Wongorange}
\colorlet{innercolor}{Wongblue}
\colorlet{finitarycolor}{Wongteal!90!green}

\AddToHook{env/property/begin}{\crefalias{theorem}{property}}
\AddToHook{env/proposition/begin}{\crefalias{theorem}{proposition}}
\AddToHook{env/lemma/begin}{\crefalias{theorem}{lemma}}
\AddToHook{env/example/begin}{\crefalias{theorem}{example}}
\AddToHook{env/definition/begin}{\crefalias{theorem}{definition}}
\AddToHook{env/corollary/begin}{\crefalias{theorem}{corollary}}
\AddToHook{env/claim/begin}{\crefalias{theorem}{claim}}
\AddToHook{env/remark/begin}{\crefalias{theorem}{remark}}
\AddToHook{env/conjecture/begin}{\crefalias{theorem}{conjecture}}

\newif\ifcomments\commentstrue \newcommand{\mcomment}[3][blue]{\ifcomments\ifmmode\margincomment[#1]{#2}{#3}\else\footcomment[#1]{#2}{#3}\fi\fi}
\newcommand{\footcomment}[3][blue]{\ifcomments{\small\textcolor{#1}{\textbf{(#2)}}}\footnote{\textcolor{#1}{\textbf{#2:} #3}}\fi}
\newcommand{\margincomment}[3][blue]{\ifcomments{\color{#1}\textbf{(#1)}}\footnotemark\marginnote{\tiny\textsuperscript{\thefootnote}\textbf{#1:} #2}\fi}
\newcommand{\intextcomment}[3][blue]{\ifcomments\textcolor{#1}{\textbf{#2:} #3}\fi}
\newcommand{\inremarkcomment}[3][blue]{\ifcomments{\color{#1}\begin{remark}[#2]#3\end{remark}}\fi}
\newcommand{\framedcomment}[3][blue]{\ifcomments\par\medskip\noindent{\fboxrule=1pt\color{#1}\fbox{\begin{minipage}{\dimexpr\linewidth-2\fboxsep-2\fboxrule\relax}\parindent=0pt\stepcounter{footnote}\textbf{(\thefootnote)\,#2}: #3\end{minipage}}}\par\medskip\fi}

\makeatletter
\newcommand{\register@comment}[3][blue]{\colorlet{#2color}{#1}    \global\expandafter\def\csname#2\endcsname{\@ifstar{\intextcomment[#1]{#3}}{\mcomment[#1]{#3}}}
    \global\expandafter\def\csname#3\endcsname{\@ifstar{\inremarkcomment[#1]{#3}}{\framedcomment[#1]{#3}}}
}
\makeatother

\newcommand{\defemph}[1]{{\color{darkblue}\textbf{#1}}}

\renewcommand{\b}[1]{\texttt{#1}}

\colorlet{darkorange}{orange!66!black}
\colorlet{darkblue}{blue!75!black}

\newcommand{\ppreleqnostar}[1]{\mathrel{\adjustbox{raise=-.15ex}{$\overline{\mathsf{PP}}_{#1}$}}}
\newcommand{\ppreleqstar}[1]{\mathrel{\adjustbox{raise=-.15ex}{$\overline{\mathsf{PP}}^{\,*}_{#1}$}}}
\makeatletter
\newcommand{\ppreleq}{\@ifstar{\ppreleqstar}{\ppreleqnostar}}
\makeatother

\newcommand{\pprelnostar}[1]{\mathrel{\mathsf{PP}_{#1}}}
\newcommand{\pprelstar}[1]{\mathrel{\mathsf{PP}^{\,*}_{#1}}}
\makeatletter
\newcommand{\pprel}{\@ifstar{\pprelstar}{\pprelnostar}}
\makeatother

\DeclareMathOperator{\src}{\mathsf{Src}}
\DeclareMathOperator{\tgt}{\mathsf{Tgt}}

\DeclareMathOperator{\peri}{\mathsf{Peri}}
\DeclareMathOperator{\crit}{\mathsf{Crit}}
\DeclareMathOperator{\evperi}{\mathsf{EvPeri}}

\DeclareMathOperator{\Walks}{\mathsf{Walks}}
\newcommand{\walks}[2][]{\Walks_{#1}(#2)}

\newcommand{\N}{\mathbb{N}}

\makeatletter
\newcommand{\R}[1]{\@ifstar{\mathbin{{R_{#1}}^{\hspace*{-.5ex}*}}}{\mathbin{R_{#1}}}}
\makeatother

\newcommand{\walkind}[1]{\ensuremath{\mathsf{#1}}}
\renewcommand{\L}{\walkind{L}}
\newcommand{\I}{\walkind{I}}
\renewcommand{\R}{\walkind{R}}
\newcommand{\F}{\walkind{F}}

\newtheoremrep{theorem}{Theorem}[section]
\newtheoremrep{proposition}[theorem]{Proposition}

\newtheoremrep{property}[theorem]{Property}
\newtheoremrep{lemma}[theorem]{Lemma}
\crefname{claim}{Claim}{Claims}
\crefname{property}{Property}{Properties}
\crefname{properties}{Properties}{Properties}
\crefname{openproblem}{Open problem}{Open problems}

\newtheorem{openproblem}[theorem]{Open problem}

\theoremstyle{claimstyle}
\newtheorem{falseclaiminner}{Claim}
\newenvironment{falseclaim}[1]{\renewcommand{\thefalseclaiminner}{#1}\begin{falseclaiminner}}{\end{falseclaiminner}}

\title{Graphs with Bounded Push-Pop Walk Orderings}

\author{Antoine Amarilli}
{LTCI, Télécom Paris, Institut Polytechnique de Paris, France
\and Univ.~Lille, Inria, CNRS, Centrale Lille, UMR 9189 CRIStAL, F-59000 Lille, France
\and \url{https://a3nm.net}}
       {antoine.amarilli@telecom-paris.fr}
       {https://orcid.org/0000-0002-7977-4441}
       {}
\author{Claire David}
       {Univ.~Gustave Eiffel, CNRS, LIGM, F-77454 Marne-la-Vallée, France}
       {claire.david@univ-eiffel.fr}
       {https://orcid.org/0000-0002-3041-7013}
       {}
\author{Nadime Francis}
       {Univ.~Gustave Eiffel, CNRS, LIGM, F-77454 Marne-la-Vallée, France}
       {nadime.francis@univ-eiffel.fr}
       {https://orcid.org/0009-0009-4531-7435}{}
\author{Victor Marsault}
       {Univ.~Gustave Eiffel, CNRS, LIGM, F-77454 Marne-la-Vallée, France
        \and \url{https://victor.marsault.xyz}}
       {victor.marsault@univ-eiffel.fr}
       {https://orcid.org/0000-0002-2325-6004}
       {}
\author{Mikaël Monet}
       {Univ.~Lille, Inria, CNRS, Centrale Lille, UMR 9189 CRIStAL, F-59000 Lille, France}
       {mikael.monet@inria.fr}
       {https://orcid.org/0000-0002-6158-4607}
       {}
\author{Yann Strozecki}
       {Université Paris-Saclay, UVSQ, Données et Algorithmes pour une ville intelligente et durable, 78035, Versailles, France.}
       {yann.strozecki@uvsq.fr}
       {https://orcid.org/0000-0002-0891-3766}
       {}

\authorrunning{A.~Amarilli, C.~David, N.~Francis, V.~Marsault, M.~Monet and Y.~Strozecki}

\Copyright{Antoine Amarilli, Claire David, Nadime Francis, Victor Marsault, Mikaël Monet, and Yann Strozecki}

\makeatletter
\let\c@author\relax
\makeatother
\usepackage[,sorting=anyt,minalphanames=2,maxalphanames=3,maxbibnames=100,giveninits=true,style=alphabetic]{biblatex}

\bibliography{biblio}

\keywords{Push-pop Distance, Walk Ordering, Walk Enumeration, Combinatorial Gray Code, Graphs} 

\ccsdesc[500]{Mathematics of computing~Enumeration}
\ccsdesc[300]{Mathematics of computing~Graph theory}

\acknowledgements{Antoine Amarilli is partially supported by the project
  ANR-25-CE23-1215 (``Expand''); Victor Marsault and Nadime Francis by ANR-21-CE48-0015 (``VeriGraph'').}

\nolinenumbers

\hideLIPIcs

\begin{document}

\maketitle

\begin{abstract}
    Consider a directed graph~$G$ and the (generally infinite) set~$\walks{G}$ of the walks
    in that graph. 
We are interested in ordering $\walks{G}$ as $w_1, w_2,
    \ldots$ in a way that ensures that each walk is obtained from the previous one
    by a \emph{bounded} number of elementary operations, namely, 
    \emph{push operations} (which add a new edge at  either end of
    the walk) and \emph{pop operations} (which remove the first or last edge of the
    walk). 
Our goal is to optimize the aforementioned bound, when it exists,
    in particular when imposing the additional requirement that walks must be ordered in increasing order of length.
    
    To this end, we introduce a condition on graphs inspired by earlier work, called
    \emph{inner-connectedness},
    which is testable in linear time; we show that
    inner-connected graphs are precisely those which admit bounded walk
    orderings.
Focusing on
    such graphs, we then introduce and study the \emph{pp-rank} and the
    \emph{pp-increasing-rank}. Namely, the
    \emph{pp-rank} of a graph is the best
    possible bound on the number of push-pop operations in
    a bounded walk ordering for that graph; and the \emph{pp-increasing-rank}
    is the same where we further impose that shorter walks appear before longer walks in the ordering.
We show that
    the pp-rank of inner-connected graphs is in fact always bounded by~$3$. The pp-increasing-rank of inner-connected graphs is also always finite,
    but it can be up to linear in the
    input graph.
More precisely, we show that the pp-increasing-rank is related to a new parameter we introduce, the \emph{peripherality} of graphs.
    In particular, for cyclic graphs, peripherality is related to the pp-increasing-rank up to 
multiplicative
    and additive constants. 
We last study how to compute these parameters, and
    show that while peripherality can be computed in polynomial time,
    the pp-increasing-rank itself is generally NP-hard to compute.
\end{abstract}

\vskip\topmattervskip\baselineskip\noindent\noindent\textcolor{lipicsGray}{\fontsize{9}{12}\sffamily\bfseries Usage of AI\enskip} {\fontsize{9}{12}\selectfont We used LLMs for two purposes: bibliographic search and finding examples to build our intuition or test our conjectures.  In particular, we found \cite{LaiWei1993} this way, which turned out to be crucial in the proof of \cref{p:4ord-nphard}.  
  In terms of examples, the only one in the final paper which was found by an LLM is given in \cref{f:1-orderable-dag}, in Appendix~\ref{a:open-problems}.} 

\newpage

\section{Introduction}

Efficient enumeration of combinatorial objects is a classical problem in
theoretical computer science, with applications in exhaustive search,
combinatorial generation and database query evaluation.
The efficiency of enumeration algorithms is usually measured by its
\emph{delay}, i.e., the maximal time between consecutive
outputs.
While constant-delay is ideal~\cite{segoufin2014glimpse},
it requires writing each solution in constant time, which is unrealistic
when solution size depends on input size, or when there are infinitely many solutions.
To bound the delay in such cases, we adopt the approach of \emph{combinatorial Gray codes}~\cite{Mutze23-gray}, where 
objects (e.g., binary strings, permutations or spanning trees) must be listed 
exactly once, with consecutive objects differing by a small change. 
In this setting, the time
to write the output is not accounted for, following the ``don't count the
output'' principle of combinatorial generation~\cite[p.\,8]{ruskey2003combinatorial}.

In this paper, we apply this paradigm to the enumeration of all walks in a directed graph. 
Our motivation comes from graph databases, where modern query languages (Cypher~\cite{FrancisEtAl2018} or ISO standards GQL~\cite{GQL-ISO} and SQL/PGQ~\cite{SQLPGQ-ISO})
are based on extracting all walks satisfying navigational constraints (\texttt{MATCH}).
Walks are usually computed and output in full, 
which is sometimes the
bottleneck for some tasks, e.g., enumerating all \emph{shortest} walks satisfying the constraints (\texttt{MATCH SHORTEST}, \cite{david2024distinct}).
Output handling is identified as one of the current research challenges in the
field \cite[Sec.\,7.1]{LibkinEtAl25-gems}: 
potential solutions include factorized representations \cite{MartensEtAl23,MartensEtAl24} and outputting only the difference between consecutive outputs.
Here, we 
study the latter approach in the simplest setting: unlabeled graphs with no
constraint on walks, where we 
enumerate the (possibly infinite) set of all walks.
We require each produced walk to be
at a bounded \emph{push-pop distance} from the previous one, i.e., 
to be obtained by applying at most~$d$ \emph{push operations} (add an edge to
either end of the walk) or \emph{pop operations} (remove the first or last edge of the walk), for some~$d\in \mathbb{N}$.

Our first goal is to classify which graphs are \emph{orderable}, i.e., admit
such a bounded walk ordering, and then to quantify the best possible distance
that can be achieved: we call this the \emph{pp-rank} of the graph.
Note that some graphs are not orderable, as shown in
\cref{f:inner-disconnected}; whereas others can be shown to be orderable, in
\cref{f:inner-connected}
(ignoring all edge colors and numbers for now).
Next, motivated by 
shortest walk enumeration in graph databases,
our second goal is to classify which graphs are \emph{orderable by increasing
length}, i.e., admit a bounded ordering with walks occurring in increasing order of length.
The minimal achievable distance in that case is called \emph{pp-increasing-rank} of the graph. 
Last, since it is sometimes harder to order ``short'' walks than ``long'' walks,
our third goal is to study the \emph{eventual pp-increasing-rank}, i.e.,
the smallest possible distance $d$ for which we have a walk ordering by
increasing length that 
satisfies bound~$d$, except for finitely many walks.
Note that
graphs are orderable by increasing length 
iff they are \emph{eventually} orderable by increasing length, but
their eventual pp-increasing-rank is generally smaller than their pp-increasing-rank.

\subparagraph*{Contributions.}
Our first contribution is to provide a linear-time testable characterization of graph orderability and to establish a distance bound of~$3$ for orderable graphs.
To this end, we use \emph{inner-connectedness}, inspired by a notion from~\cite{amarilli2023enumerating} (see the \emph{related work} paragraph for details). 
Their results imply that inner-\textbf{dis}connected graphs are \textbf{not} orderable.
Conversely, we show that inner-connected graphs have a pp-rank of at most~$3$, i.e., each admits a walk ordering where consecutive walks differ by at most $3$ operations.

Our second and main contribution is to study 
orderability by increasing length.
Inner-disconnected graphs are not orderable by increasing length because they
are not orderable at all.
We show that, conversely, all inner-connected graphs
are orderable by increasing length,
with a pp-increasing-rank bound which is
generally linear in the graph size.

To this end, we introduce
\emph{peripherality}
as a graph parameter that allows us to approximate the
pp-increasing-rank.
Informally, the \emph{peripherality} of a walk is the
smallest number of edges to remove from the walk in order to be able to reach a
nontrivial scc from one of its endpoints. 
The \emph{peripherality} of a graph~$G$, denoted by~$\peri(G)$, is
the maximal peripherality among its walks, and its \emph{eventual peripherality}, denoted by~$\evperi(G)$, is the
maximal peripherality achieved by infinitely many walks. 
We show that the pp-increasing and eventual pp-increasing-rank of a cyclic graph~$G$ are in~$\Theta(\peri(G))$ and~$\Theta(\evperi(G))$, respectively.
By contrast, the pp-increasing-rank of an acyclic graph is in~$O(\peri(G))$ but
there are acyclic graphs with arbitrary peripherality but constant pp-increasing-rank.

Our results on inner-connected graphs are summarized in the following theorem.
We exclude trivial graphs with zero
or one edge, as they do not follow the general bounds.

\begin{theorem}\label{t:pp-rank}
    Let~$G=(V,E)$ be an inner-connected graph with at least two edges.    \begin{enumerate}[a.]
        \item \label{t:pp-rank:classic}The pp-rank of~$G$ belongs to $[1,3]$.
        \item \label{t:pp-rank:incr}The pp-increasing-rank of~$G$ belongs
          to $[2,6k]$ if $G$ is acyclic, and to $[k,6k]$ otherwise, 
where~$k=\max(1,\peri(G))$.
        \item \label{t:pp-rank:eve-incr} The eventual pp-increasing-rank of $G$ is $0$ if $G$ is acyclic, 
        it is $1$ if the only cycle in $G$ is a self-loop on a vertex with no other incident edge, 
        and
it belongs to $[2k',6k']$ otherwise, where~$k'=\max(1,\evperi(G))$.
        \item \label{t:pp-rank:param}We have~$\evperi(G) \leq \peri(G) < |V|$ and~$\evperi(G) < \frac{|V|}{2}$.
    \end{enumerate}
\end{theorem}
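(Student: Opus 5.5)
This theorem is a summary of the paper's main results, so the plan is to prove each item separately, from results that come later in the paper. The order is: the structural bounds on the parameters (item~\ref{t:pp-rank:param}), then the pp-rank (item~\ref{t:pp-rank:classic}), then the two increasing-rank items, where most of the work lies.

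For item~\ref{t:pp-rank:param}, $\evperi(G)\le\peri(G)$ is immediate, since a peripherality value achieved by infinitely many walks is achieved by at least one walk. For $\peri(G)<|V|$, take a walk of maximal peripherality. If it reaches a nontrivial scc, its peripherality is small. Otherwise it passes only through trivial sccs, so it is a path, and removing at most all but one of its edges makes an endpoint reach some cycle; here inner-connectedness should guarantee that such a cycle exists in cyclic graphs, while acyclic graphs need a separate convention. For $\evperi(G)<|V|/2$, note that infinitely many walks must contain a cycle. Such a walk has a DAG-like prefix and suffix around its cyclic part. Its peripherality is at most the minimum of the prefix and suffix lengths, and these two pieces use disjoint trivial-scc vertices, which gives the factor $1/2$.

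For item~\ref{t:pp-rank:classic}, the lower bound $1$ is trivial because distinct walks differ. For the upper bound~$3$, I would order walks by a recursive construction in the spirit of a Gray code. Inner-connectedness lets us move between any two "inner" edges by extending on one side and popping on the other. I would build the ordering by induction on length blocks, using an interleaving so that each long walk is reached from a shorter one by a push, or by a pop followed by two pushes. This is analogous to the classical fact that cubes of connected graphs are Hamiltonian (Sekanina). Indeed, I would model walks as vertices of an infinite graph whose edges are single push/pop steps, and apply a cube-of-a-spanning-tree Hamiltonicity argument to a locally finite spanning tree rooted at the empty walk. This yields a one-way infinite Hamiltonian path in the cube, which is exactly a walk ordering of distance at most~$3$.

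For items~\ref{t:pp-rank:incr} and~\ref{t:pp-rank:eve-incr}, the upper bounds $6k$ (respectively $6k'$) come from ordering walks of length $n$ and passing from one to the next by first popping down to a "core" subwalk and then pushing back up. Any walk can be shortened by at most $k$ pops so that an endpoint reaches a cycle. Pumping around that cycle then gives a connected "transition graph" on the walks of length $n$ in which adjacent walks have distance $O(k)$. Applying the cube argument to this transition graph gives distance $3\cdot 2k$. Moving from length $n$ to $n+1$ needs the last walk of one block to be near the first walk of the next, which is arranged by choosing endpoints. For the lower bounds, take a walk $w$ of peripherality $k$. All walks of the same length within distance $<k$ of $w$ are then forced to be $w$ itself or to lie in a small set. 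An increasing ordering must enter and leave this length block, and I would argue that some transition involving $w$ costs at least $k$ (at least $2k'$ in the eventual case, since infinitely many such walks cannot all be at block ends). The special cases (acyclic gives eventual rank $0$ since the walk set is finite; a lone self-loop gives exactly $1$) are checked directly.

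The main obstacle is the upper bound for increasing length: proving that the transition graph on walks of length $n$ is connected with steps $O(k)$, and stitching consecutive length blocks together without losing the constant.
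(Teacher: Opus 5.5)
\paragraph{Item (a): the spanning-tree argument fails.} Your upper bound for item (a) does not work as stated. The graph $(\walks{G},\pprel{1})$ is connected and locally finite for \emph{every} graph $G$. If the cube of a locally finite spanning tree always had a Hamiltonian ray, every graph would therefore be $3$-orderable, contradicting \cref{p:inner-disconnected=>nonorderable}. Your step never uses inner-connectedness.

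Concretely, take one vertex with two self-loops, which is inner-connected. Its prefix tree rooted at $\varepsilon$ is the infinite binary tree. Only finitely many cube edges join the two subtrees below the root, so a ray crosses between them finitely often and cannot exhaust both. In general, the cube of an infinite tree can have a Hamiltonian ray only if the tree is one-ended, and inner-connectedness is exactly what must supply that structure.

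The paper proceeds as follows.
\begin{itemize}
\item It partitions $\walks{G}$ into \emph{finite} sets $V_i=Z_{2i}\cup Z_{2i+1}$, where $Z_j=\{w\mid |w|-\peri(w)=j\}$.
\item Each $V_i$ is connected under $\pprel{1}$, and consecutive sets are linked by prefixes of one infinite inner walk.
\item It applies the finite even--odd theorem inside each $V_i$ and concatenates the results (\cref{p:layer:even-odd}).
\end{itemize}
Connectivity of $Z_j\cup Z_{j+1}$ has two sources. First, popping a peripheral edge stays inside $Z_j$, so every walk reaches a peripherality-$0$ walk. Second, peripherality-$0$ walks of equal length are $\ppreleq*{2}$-connected (\cref{prop:inner-connected=>peripherality-0-connected}).

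That second fact is also the step you leave open for the upper bounds in items (b) and (c). It is proved by crawling a fixed-length window along $w_1\cdot w'\cdot w_2$ whenever $\src(w_2)$ is reachable from $\tgt(w_1)$. Such crawls are then chained along the weakly connected condensation of the inner subgraph. Pumping a single cycle does not connect walks living in different sccs.

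\paragraph{Items (b) and (c): the lower-bound sketch has a genuine gap.} It never uses that $G$ is cyclic, yet for acyclic graphs the bound fails: a directed path of length $k$ has peripherality $k$ and pp-increasing-rank $2$ (\cref{p:tight}).

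The obstruction is also not at block ends. By \cref{l:incr-len-orderability=>layer-orderability}, an increasing $r$-bounded ordering lists $\walks[\ell]{G}$ contiguously, so $(\walks[\ell]{G},\ppreleq{r})$ must be connected. What you need is a nonempty set of length-$\ell$ walks that is closed under $\ppreleq{2c}$ but misses some length-$\ell$ walk. A single walk of maximal peripherality is generally not isolated, since it can have cheap same-length neighbours of equal peripherality. Moreover, peripherality itself is not invariant under such moves.

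The paper instead works with criticality and \emph{critical} walks. These are the walks maximal for $\preceq$, a preorder that tie-breaks by the number of non-inner edges and separates walks encountering a nontrivial scc. It shows that a critical walk of criticality $c>0$ is only $\ppreleq{2c}$-related to critical walks (\cref{l:crit-walk-connected}). It then uses cyclicity to exhibit an inner walk of the same length with criticality $-1$, which lies outside that set.

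This yields rank $\geq 2c+2\geq\peri(G)$, using ${\ppreleq{2c+1}}={\ppreleq{2c}}$ and \cref{pp:peripherality-vs-criticality}; the case $c\le 0$ is covered by the two-edge assumption. For the eventual case it yields rank $\geq 2c'+2=2\evperi(G)$. There, critical walks at infinitely many lengths come from pumping a cycle inside a critical walk that encounters a nontrivial scc.

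Your remark that such walks ``cannot all be at block ends'' is not a valid substitute. There is one block per length, and a walk at a block end still has a same-length neighbour unless it is alone in its block.

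\paragraph{Minor point on item (d).} Inner-connectedness plays no role there, and acyclic graphs need no separate convention. Indeed $\peri(w)\le|w|$ always holds, because the empty suffix is admissible.
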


\begin{figure}
\centering
    \subfloat[]{\label{f:inner-disconnected1}\hspace*{2ex}\begin{tikzpicture}
    \newcommand{\dist}{10mm}
    \input{figure/graphstyle}
    
    \path node[vertex,inner] (v1) {} ++(60:\dist) 
          node[vertex,outer] (v2) {} ++(-60:\dist) 
          node[vertex,inner] (v3) {};

    \path[edge,outer] (v2) to (v1);
    \path[edge,outer] (v2) to (v3);
    \path[edge,inner,south loop] (v1) to (v1);
    \path[edge,inner,south loop] (v3) to (v3);
\end{tikzpicture}\vspace{-2.2em}}
    \hspace*{1cm}\subfloat[]{\label{f:inner-disconnected2}\begin{tikzpicture}
    \newcommand{\dist}{10mm}
    \input{figure/graphstyle}
    
    \path node[vertex,outer] (v1) {} ++(0:\dist) 
          node[vertex,outer] (v2) {} ++(0:\dist) 
          node[vertex,finitary] (v3) {} ++(0:\dist) 
          node[vertex,outer] (v4) {} ++(0:\dist) 
          node[vertex,outer] (v5) {};
    \path (v4) ++(-120:\dist) node[vertex,inner] (v0) {};
    \path (v2) ++(-60:\dist) node[vertex,inner] (v6) {};

    \path[edge,inner,south loop] (v0) to (v0);
    \path[edge,outer] (v0) to node[pos=.425,below right] {\b5} (v4);
    \path[edge,outer] (v1) to node[pos=.425,above] {\b1}  (v2);
    \path[edge,finitary] (v2) to node[pos=.425,above] {\b2}  (v3);
    \path[edge,finitary] (v3) to node[pos=.425,above] {\b3}  (v4);
    \path[edge,outer] (v4) to node[pos=.425,above] {\b4} (v5);
    \path[edge,outer] (v2) to node[pos=.425,below left] {\b0}  (v6);
    \path[edge,inner,south loop] (v6) to (v6);
\end{tikzpicture}\vspace{-2.2em}}
    \caption{Two graphs that are inner-disconnected, hence not orderable}
    \label{f:inner-disconnected}
\end{figure}
\begin{figure}
    \captionsetup[subfigure]{justification=centering, singlelinecheck=false}
    \newcommand{\connFigScale}{1}
    \subfloat[]{\scalebox{\connFigScale}{\begin{tikzpicture}
    \newcommand{\dist}{8mm}
    \input{figure/graphstyle}
    
\path node[vertex,inner] (v2) {} ++(60:\dist) 
          node[vertex,inner] (v3) {} ++(-60:\dist) 
          node[vertex,inner] (v4) {} ++(60:\dist) 
          node[vertex,inner] (v5) {};
    \path(v3) ++(120:\dist) node[vertex,inner] (v1){};

    \path[edge,inner] (v1) to (v3);
    \path[edge,inner] (v3) to (v2);
    \path[edge,inner] (v3) to (v4);
    \path[edge,inner] (v5) to (v4);

    \path[edge,inner,north loop] (v1) to (v1);
    \path[edge,inner,north loop] (v5) to (v5);
    \path[edge,inner,south loop] (v2) to (v2);
    \path[edge,inner,south loop] (v4) to (v4);
    
\end{tikzpicture}}\label{f:inner-connected:z}}\hfill
    \subfloat[]{\scalebox{\connFigScale}{\begin{tikzpicture}
    \newcommand{\dist}{15mm}
    \input{figure/graphstyle}
    
    \path node[vertex,inner] (v1) {} ++(60:\dist) 
          node[vertex,outer] (v2) {} ++(-60:\dist) 
          node[vertex,inner] (v3) {};

    \path[edge,outer] (v2) to (v1);
    \path[edge,outer] (v2) to (v3);
    \path[edge,inner] (v1) to (v3);
    \path[edge,inner,south loop] (v1) to (v1);
    \path[edge,inner,south loop] (v3) to (v3);
\end{tikzpicture}}\label{f:inner-connected:d}}\hfill
    \subfloat[]{\scalebox{\connFigScale}{\begin{tikzpicture}\newcommand{\dist}{11mm}\input{figure/graphstyle}
    
    \path node[vertex,inner] (v1) {} ++(-90:\dist) 
          node[vertex,outer] (v2) {} ++(-120:\dist) 
          node[vertex,outer] (v3) {};
    \path (v2)  ++(-60:\dist) node[vertex,finitary] (v4) {}
                ++(60:\dist) node[vertex,finitary] (v7) {}
    ;
    \path (v1) ++(-150:\dist) node[vertex,outer] (v5) {};
    \path (v1) ++(-30:\dist) node[vertex,outer] (v6) {};

    \path[edge,inner,north loop] (v1) to node[midway, above] {\b0} (v1);
    \path[edge,outer] (v5) to node [midway,above left] {\b1} (v1);
    \path[edge,outer] (v1) to node [midway,above right] {\b2} (v6);
    \path[edge,outer] (v1) to node[midway,left] {\b3} (v2);
    \path[edge,outer] (v2) to node[midway,above left] {\b4} (v3);
    \path[edge,finitary] (v4) to node[midway,below left] {\b5} (v2);
    \path[edge,finitary] (v7) to node[midway,above left] {\b6} (v4);
\end{tikzpicture}}\label{f:inner-connected:man}}\hfill
    \subfloat[]{\scalebox{\connFigScale}{\begin{tikzpicture}\newcommand{\dist}{8mm}\input{figure/graphstyle}
    
    \path node[vertex,inner] (v1) {} ++(-30:\dist) 
          node[vertex,inner] (v2) {} ++(-30:\dist) 
          node[vertex,inner] (v3) {} ++(-30:\dist) 
          node[vertex,inner] (v4) {};
    \path (v2) ++(60:\dist) node[vertex,outer] (v5) {};
    \path (v3) ++(-120:\dist) node[vertex,outer] (v6){};

    \path[edge,inner] (v1) to (v2);
    \path[edge,inner] (v2) to (v3);
    \path[edge,inner] (v3) to (v4);
    \path[edge,outer] (v5) to (v2);
    \path[edge,outer] (v3) to (v6);
    \path[edge,inner,north loop] (v1) to (v1);
    \path[edge,inner,south loop] (v4) to (v4);
\end{tikzpicture}}\label{f:inner-connected:x}}\caption{Four graphs that are inner-connected, hence orderable by increasing length}
    \label{f:inner-connected}
\end{figure}

The upper bounds on peripherality given in \cref{t:pp-rank:param} are virtually tight: one can easily find infinite families of graphs with
peripherality $\Omega(|V|)$ 
(see, e.g., \cref{f:high-peripherality}); and the same remark holds for eventual peripherality.
We 
also construct infinite graph families achieving three of the four lower bounds in \cref{t:pp-rank:classic,t:pp-rank:incr,t:pp-rank:eve-incr} of \cref{t:pp-rank} exactly, with the fourth off by one (\cref{p:tight}).
Finally, we show that peripherality and eventual peripherality can be
computed in polynomial time (\cref{p:peri-comp,p:crit-comp}),
but
computing
the pp-increasing-rank is NP-hard (\cref{p:4ord-nphard}).

\subparagraph{Related work.} 
The enumeration of paths and walks in graphs has been studied across multiple areas.
For instance, \cite{Eppstein98-compact-shortest} studies
algorithms to enumerate the $k$ shortest walks between two vertices in a
digraph.
In database theory, \cite{martens2018evaluation} studies efficient
enumeration of simple paths and trails that match regular expressions, and \cite{LiangEtAl2024} studies enumeration of
simple paths in graphs satisfying a restricted kind of regular expressions.
More relevant to us, \cite{adamson2024enumerating} studies constant-delay
enumeration of walks that have length~$m$ or length between
$\ell$ and~$m$. 
In their case, constant delay relies on edits to a compressed representation of the walks rather than push-pop edits on one explicit walk.

The main originality of our work
is to address two goals: (1.) producing walks by performing
a constant number of modifications on the walks (specifically via push-pop
edits); and (2.) ordering the infinite set of
all walks (rather than simple paths, or walks in a specific length range).
For goal (1.), push-pop operations are considered 
more generally
in~\cite{chung1992universal}, and enumeration with a constant push-pop distance specifically appears in 
\cite{CDCASGrayCode-icalp}.
More precisely, \cite{CDCASGrayCode-icalp} studies the enumeration of all binary words
of length $\ell$ via push-pop operations, and can be 
extended to enumerate
all binary words by increasing length (corresponding to goal (2.)).
Since binary words correspond to walks in a single-vertex graph with
two self-loops labeled 0 and 1, our work
generalizes~\cite{CDCASGrayCode-icalp} to 
arbitrary directed graphs;
by contrast, we only focus on the \emph{existence} of suitable orders and not on
\emph{computing} them in computational models with stringent time and space bounds like
in~\cite{CDCASGrayCode-icalp}.

The work most closely related to ours
is~\cite{amarilli2023enumerating}.
It studies how to enumerate the set~$L$ of the words accepted by a deterministic
finite automaton~$A$, while obeying a bound on the push-pop distance between consecutive words.
Together with~\cite{CDCASGrayCode-icalp}, this is (to our knowledge) the only
existing work that addresses goals (1.) and (2.).
Specifically, a linear-time-testable
criterion on~$A$ called \emph{interchangeability} is presented
in~\cite{amarilli2023enumerating} to decide whether a bounded ordering of~$L$ exists.
Since any directed graph can be converted
into a deterministic automaton (where walks in the graph correspond to the
words accepted by the automaton), the results of~\cite{amarilli2023enumerating}
also apply to graphs. 
Thus, our notion of \emph{inner-connectedness} specializes their
\emph{interchangeability} criterion, and their results directly imply that
inner-disconnected graphs do not admit bounded walk orderings
(\cref{p:inner-disconnected=>nonorderable}).
The novelty in our work relative to~\cite{amarilli2023enumerating} is twofold.
First, by focusing on walks in graphs,
we obtain a more favorable upper bound on 
pp-rank:
\cref{t:pp-rank:classic} of \cref{t:pp-rank} shows a bound of~$3$,
while 
their construction would give a bound of $48 (|V|+1)^2$, in fact
$48|Q|^2$ where~$Q$ is the automaton state set.
Second, our main contributions are about ordering walks \emph{by increasing length}, which is
not studied by~\cite{amarilli2023enumerating} and simpler in our setting (e.g., the language $a(aa)^* + b(bb)^* + ((a+b)(a+b))^*$ is orderable but not orderable by increasing length).

\subparagraph*{Outline.}
The rest of the paper is organized as follows.
Section~\ref{s:preliminaries}
introduces preliminary notions (graphs, walks, orderings, pp-ranks).
We then introduce terminology in \cref{s:typology} on object kinds and walk types, which we use to define \emph{inner-connectedness} and
restate the result
from~\cite{amarilli2023enumerating} 
that inner-disconnected graphs do not admit bounded walk orderings.
We then focus in \cref{s:peripherality,s:criticality} on enumerating walks by
increasing order of length and on the pp-increasing rank.
Specifically, in 
\cref{s:peripherality}, after defining \emph{peripherality} and \emph{eventual peripherality}, we establish the upper bounds of~\cref{t:pp-rank},~\cref{t:pp-rank:classic,t:pp-rank:incr,t:pp-rank:eve-incr}, as well as \cref{t:pp-rank:param}.
We then prove in \cref{s:criticality} the lower bounds
of~\cref{t:pp-rank},~\cref{t:pp-rank:incr,t:pp-rank:eve-incr}, by introducing
the \emph{criticality} and \emph{eventual criticality} graph parameters, and exhibit graph families
that (nearly) meet the lower bounds in~\cref{t:pp-rank:classic,t:pp-rank:incr,t:pp-rank:eve-incr}.
Section~\ref{s:param-comp} addresses computational aspects, showing that
peripherality and eventual peripherality are tractable to compute while pp-increasing-rank is not.
Finally, we  discuss extensions and future directions in
\cref{s:conclusion}.

Due to space constraints, the details of most proofs are deferred to Appendix~\ref{a:typology}--\ref{a:param-comp}.
We also provide some open problems in Appendix~\ref{a:open-problems}.

\begin{toappendix}
    \section{Open Problems}
\label{a:open-problems}
In this section, we implicitly assume that graph families are infinite, and that they contain graphs with arbitrarily many edges. (Since we do not assume graphs to be weakly connected, a graph can contain arbitrarily many vertices that are isolated and hence irrelevant.)

\subsection{Lower Bounds}

\begin{openproblem}
    Find a  family of 1-orderable acyclic graphs, or show that there is none.
\end{openproblem}

We found several 1-orderable acyclic graphs, but none of them has more than twelve edges. We provide in \cref{f:1-orderable-dag} a smaller example together with its walk graph.  Given the constraints imposed on a 1-bounded ordering for a single path, we deem it likely that there are only finitely many acyclic graphs that are 1-orderable.

\begin{figure}
    \centering

    \subfloat[]{\begin{tikzpicture}
    \newcommand{\dist}{15mm}
    \input{figure/graphstyle}

    \path node[vertex,finitary] (v0) {} 
        ++(30:\dist)  node[vertex,finitary] (v1) {}
        ++(-90:\dist) node[vertex,finitary] (v2) {}
        ++(30:\dist)  node[vertex,finitary] (v3) {}
        ++(-90:\dist)  node[vertex,finitary] (v4) {};
    
    \path[edge,finitary] (v0) to node[above left] {$a$} (v1);
    \path[edge,finitary] (v0) to node[below left] {$b$} (v2);
    \path[edge,finitary] (v1) to node[above left] {$c$} (v2);
    \path[edge,finitary] (v1) to node[above right] {$d$} (v3);
    \path[edge,finitary] (v2) to node[below right] {$e$} (v3);
    \path[edge,finitary] (v2) to node[below left] {$f$} (v4);
\end{tikzpicture}\vspace{-1.7em}}
    \hfill
    \subfloat[]{\begin{tikzpicture}
    \newcommand{\dist}{15mm}
    \input{figure/graphstyle}

    \path node (eps) {$\varepsilon$}
        ++(-120:\dist) node[walk] (c) {$c$}
        ++(0:\dist) node[walk] (e) {$e$}
        ++(0:\dist) node[walk] (b) {$b$}
        ++(0:\dist) node[walk] (f) {$f$};
    \path (c) 
        ++(180:\dist) node[walk] (a) {$a$}
        ++(180:\dist) node[walk] (d) {$d$};
    \path (d) ++(-60:\dist) node[walk] (ad) {$ad$};
    \path (a) ++(-60:\dist) node[walk] (ac) {$ac$};
    \path (c) ++(-60:\dist) node[walk] (ce) {$ce$};
    \path (e) ++(-60:\dist) node[walk] (be) {$be$};
    \path (b) ++(-60:\dist) node[walk] (bf) {$bf$};
    \path (ac) ++(-60:\dist) node[walk] (ace) {$ace$};
    \path (bf) ++(-60:\dist) node[walk] (acf) {$acf$};
    \path (f) ++(-60:\dist) node[walk] (cf) {$cf$};

    \path[walk edge] (eps) to (a);
    \path[walk edge] (eps) to (b);
    \path[walk edge, Wongorange] (eps) to (c);
    \path[walk edge] (eps) to (d);
    \path[walk edge, Wongorange] (eps) to (e);
    \path[walk edge] (eps) to (f);
    \path[walk edge, Wongorange] (a) to (ad);
    \path[walk edge, Wongorange] (d) to (ad);
    \path[walk edge, Wongorange] (a) to (ac);
    \path[walk edge] (c) to (ac);
    \path[walk edge] (c) to[out=-30,in=150] (cf);
    \path[walk edge, Wongorange] (c) to (ce);
    \path[over walk edge] (e) to (ce);
    \path[over walk edge, double=Wongorange] (b) to (be);
    \path[over walk edge, double=Wongorange] (e) to (be);
    \path[walk edge, Wongorange] (ac) to (ace);
    \path[walk edge] (ac) to[out=-30,in=150] (acf);
    \path[walk edge, Wongorange] (cf) to (acf);
    \path[walk edge, Wongorange] (f) to (cf);
    \path[over walk edge, double=Wongorange] (f) to (bf);
    \path[over walk edge, double=Wongorange] (b) to (bf);
    \path[over walk edge, double= Wongorange] (ce) to (ace);

\end{tikzpicture}\vspace{-2.0em}}
    
    \caption{A 1-orderable acyclic graph with 6 edges (a) and its walk graph (b) admitting a Hamiltonian path (orange)}
    \label{f:1-orderable-dag}
\end{figure}

\begin{openproblem}\label{op:peripherality-orderable}
    Find a family of cyclic graphs~$(G_k)_{k\in\N}$ where~$G_k$ has a peripherality of~$k$ and is $k$-orderable by increasing length, or show that there is none.
\end{openproblem}

We believe there is no such family and that the pp-increasing rank of a cyclic graph~$G$ is at least~$\peri(G)+1$.

\subsection{Upper Bounds}

Unlike lower bounds, we have not shown our upper bounds to be tight at all. Hence there is much work to do in that direction.

\begin{openproblem}\label{op:3-orderable}
    Find a graph (or a family of graphs) with a pp-rank of 3,
    or show that there is none.
\end{openproblem}

We tend to think that there are indeed graphs that are not 2-orderable but we suspect that all trees, and possibly all acyclic graphs, are 2-orderable.
Note that most graphs are not 1-orderable and it can be verified for instance
that the graph consisting of a single directed path is not 1-orderable as soon as it has five edges.

\begin{openproblem} Improve the upper bounds of \cref{t:pp-rank},  \cref{t:pp-rank:incr,t:pp-rank:eve-incr} or find a graph family showing they are tight.
\end{openproblem}

\subsection{Complexity of Computing Graph Ranks}

In our opinion, the most striking problems left open by our work are the
question of finding algorithms (and hopefully favorable complexity bounds) to
compute the exact (eventual) pp-(increasing-)rank of an input graph.
Indeed, we were able to show only one complexity lower bound (for the
pp-increasing-rank) and no complexity upper bound at all (not even a
decidability bound).

\begin{openproblem}\label{op:pp-rank-comp}
    Find any algorithm to compute either:
    \begin{enumerate}[a.]
      \item\label{op:pp-rank-comp:classic} the pp-rank of a given graph;
        \item\label{op:pp-rank-comp:incr} the pp-increasing-rank of a given graph; or
        \item\label{op:pp-rank-comp:eve-incr} the eventual pp-increasing-rank of a graph.
    \end{enumerate}
\end{openproblem}

Note that \cref{t:pp-rank,p:peri-comp} provide computable upper bounds for all three ranks.
Hence, \cref{op:pp-rank-comp} hinges on deciding whether, given a graph~$G$ and integer~$k$, the graph~$G$ is (eventually) $k$-orderable (by increasing length).
Recall that we do not know if any graph has a pp-rank of 3 (\cref{op:3-orderable}).  
If there is none, \cref{op:pp-rank-comp:classic} would amount to testing whether a graph pp-rank is 1, something that seems way easier.

\begin{openproblem}
    Show NP-hardness results for the pp-rank of a given graph or the eventual pp-increasing-rank of a given graph.
\end{openproblem}

Computing any of the three ranks essentially amounts to computing a Hamiltonian path in one or several walk graphs.
Hence this task should intuitively be NP-hard.
However, it turns out that it is quite hard to control the shape of a walk graph.
Computing the ranks could even be in higher classes.

\subsection{Odd Ranks}

    It is mentioned in the main text that~$\ppreleq{\ell}$ only makes sense
    if~$\ell$ is an even number. Indeed, we
    have~${\ppreleq{2k}}={\ppreleq{2k+1}}$ for every~$k\in\N$.
    Since we heavily use $\ppreleq{2k}$ to obtain both lower and upper
    approximations of pp-increasing-ranks, it may be tempting to think that the
    pp-increasing-rank of any graph is always an even number.  
    \cref{ex:odd-incr-rank}, below, shows that it is not always the case.
        
    \begin{figure}[h!]
        \centering
        \begin{tikzpicture}
    \newcommand{\dist}{10mm}\input{figure/graphstyle}

    \path 
            node[vertex, finitary] (v1) {} 
            ++(0:\dist) 
                node[vertex, finitary] (v2){}
            ++(60:\dist) 
                node[vertex, finitary] (v4){};
    \path (v2)
            ++(-30:\dist) 
                node[vertex, finitary] (v5){}
            ++(0:\dist) 
                node[vertex, finitary] (v6){};
    \path (v5)
            ++(-120:\dist) 
                node[vertex, finitary] (v3){}
    ;

    \path[edge,finitary] (v1) to node[below,pos=.3]{$a$} (v2);
    \path[edge,finitary] (v2) to node[above left,pos=.3]{$b_1$} (v4);
    \path[edge,finitary] (v2) to node[above,midway]{$b_2$} (v5);
    \path[edge,finitary] (v3) to node[below right,midway]{$b_3$} (v5);
    \path[edge,finitary] (v5) to node[above,pos=.3]{$c$} (v6);

\end{tikzpicture}%
        \caption{A graph with an odd pp-increasing-rank}
        \label{f:odd-incr-rank}
    \end{figure}
        
    \begin{example}\label{ex:odd-incr-rank}
        We consider the graph from \cref{f:odd-incr-rank} and denote it by~$G$.  
It has four walks of length~$2$ ($ab_1$, $ab_2$, $b_2c$ and $b_3c$), a single walk of length~$3$ ($ab_2c$), and no walk of length~$4$ or more. Let us show the pp-increasing-rank of~$G$ is~$3$.

        First note that~$G$ is not $1$-orderable by increasing length: $|\walks[2]{G}|>1$ and~$\pprel{1}$ never relates two different walks of the same length.

        Next, let us show that $G$ is not $2$-orderable by increasing length.
        For this, observe that~$(\walks[2]{G},\pprel{2})$ is an undirected line graph, so that there are only two 2-bounded orderings of~$\walks[2]{G}$, namely $(ab_1,ab_2,b_2 c,b_3 c)$ and its reverse.
If there were an increasing 2-bounded ordering of~$\walks{G}$, then the predecessor of~$ab_2c$ would necessarily be~$ab_1$ or~$b_3 c$, neither of which is related to~$ab_2c$ by~$\pprel{2}$, a contradiction. Hence, $G$ is not $2$-orderable by increasing length.

        On the other hand, since~$(\walks[1]{G},\pprel{3})$ is a clique and~$ab_2c \pprel{3} b_3c$, it is clear that~$G$ is $3$-orderable by increasing length.
        Hence, the pp-increasing-rank of~$G$ is~$3$.
    \end{example}

Indeed \cref{ex:odd-incr-rank} gives a case where the hardest part of the ordering is not to find an ordering for all walks of a given length, but to connect those orderings.
We don't know whether such example exists for the eventual pp-increasing rank:

\begin{openproblem}
    Find a graph with an eventual pp-increasing-rank that is an odd number, or show it is impossible.
\end{openproblem}

\end{toappendix}

\section{Preliminaries}
\label{s:preliminaries}

We first recall useful graph definitions and fix notations. Then we introduce
the push-pop distance on walks and define our orderability notions on the
walks of a graph.

\subsection{Classical Graph Definitions}

For every integer~$n$, we write $[n]=\{0,\ldots,n-1\}$. By \defemph{initial segment of~$\N$} we refer to the whole set~$\N$ or any set~$[n]$ for~$n\in\N$.  An \defemph{ordering} of a countable set~$S$ is a bijection $i \mapsto s_i$ from an initial segment~$I$ of~$\mathbb{N}$ to~$S$, written $s_0, s_1, \ldots$ or~$(s_i)_{i\in I}$; 
it may be finite or infinite depending on the finiteness of~$I$.
For~$E\subseteq X\times X$ a binary relation over a set~$X$, we use the notation~$E^{n}$, $E^{\leq n}$, and~$E^*$ as usual:
$E^0$ is the identity over~$X$, for $n\geq 0$ the relation  $E^{n+1}$ is  $\{ (x,z) \mid \exists y, x \mathrel{E^{n}} y \text{ and } y \mathrel{E} z \}$, the relation
    $E^{\leq n}$ is  $\bigcup_{i=0}^{n} E^i$, and $E^*$ is $\bigcup_{i=0}^{\infty} E^i$.

\subparagraph*{Graphs and walks.}
A directed \defemph{graph} (or simply \emph{graph})
is a pair $G = (V, E)$ where $V$ is the set of \defemph{vertices}, and $E\subseteq V\times V$ is the set of \defemph{edges}.
All graphs are implicitly finite unless stated otherwise.
For every edge~$e=(s,t)\in E$, we call $s$, resp.~$t$ and denote by~$\src(e)$, resp.~$\tgt(e)$ the \defemph{source} of~$e$, resp.~the \defemph{target} of~$e$.
Note that we allow \defemph{self-loops} of the form $(v,v)$ in~$E$.
We collectively call the elements of~$V\cup E$ the \defemph{objects} of~$G$.
A \defemph{subgraph} of $G$ is a graph~$(V',E')$ where $V'\subseteq V$ and $E'\subseteq E$.

A \defemph{walk} in a graph $G = (V, E)$ is a finite sequence $w=(e_1, \ldots, e_k)$ of edges in~$E$ with $\tgt(e_i) = \src(e_{i+1})$ for each $1 \leq i < k$; we call~$k$ the \defemph{length} of~$w$ and denote it by~$|w|$.
When~$w$ is nonempty, we also say that $w$ is a walk from $\src(w) \coloneq \src(e_1)$ to $\tgt(w) \coloneq \tgt(e_k)$.
The empty sequence of edges is called the \defemph{empty walk} and denoted by~$\varepsilon$.
For $u$ and $v$ two walks, if at least one of them is empty or if $\tgt(u) = \src(v)$, we write $u \cdot v$ or simply $uv$ for the \defemph{concatenation} of~$u$ and~$v$.
The set of all walks in~$G$ is denoted by~$\walks{G}$; we also use the notation~$\walks[n]{G}$, resp.~$\walks[\leq n]{G}$, resp.~$\walks[\geq n]{G}$ to denote the walks in~$G$ of length~$n$, resp.~at most~$n$,
resp.~at least~$n$. 

\begin{remark}\label{r:multi-graph}
We define walks as \emph{edge} sequences rather than \emph{vertex} sequences, motivated by our future application to databases.
Although we restrict our analysis to 
graphs for the sake of clarity, this convention ensures our results generalize immediately to 
multigraphs, the predominant model for graph databases.
\end{remark}

\subparagraph*{Strong connectedness.}

Two vertices $v_1$ and $v_2$ of a graph~$G$ are \defemph{strongly connected} if either $v_1=v_2$ or 
there is a nonempty walk from~$v_1$ to~$v_2$ and a nonempty walk from~$v_2$ to~$v_1$. 
Strong connectedness is an equivalence relation, whose classes are the
\defemph{strongly connected components} (\defemph{scc} for short) of~$G$. A graph is strongly connected if it has 
at most one scc. An scc
is \defemph{trivial} if it consists of a single vertex with no self-loop. A
graph is \defemph{acyclic}
if all its sccs are trivial, and \defemph{cyclic}
otherwise.
The \defemph{condensation}~$\mathcal{C}$ of~$G$~\cite{harary1969} is the 
acyclic graph
with the sccs
of~$G$ as vertices, and with an edge from~$C_1$ to~$C_2$ whenever $C_1 \neq C_2$ and~$G$ has an edge
from some $v_1 \in C_1$ to some $v_2 \in C_2$.

We say that a nonempty walk \defemph{starts in} (resp.~\defemph{ends in},
resp.~\defemph{encounters})
    an scc~$C$ if its first vertex (resp.~its last vertex, resp.~some of its vertices) belongs to~$C$.
We also say that a walk \defemph{entirely belongs}
to a nontrivial scc~$C$ if all its vertices belong to~$C$.

\subparagraph*{Weak connectedness and undirected graphs.}
We sometimes consider graphs while ignoring the orientation of their edges.
An \defemph{undirected graph} is a graph $U=(V,E)$ such that $(s,t) \in E$
implies $(t,s) \in E$ for all $s,t \in V$. 
Given a directed graph $G=(V,E)$, let $G'$ be the undirected graph $G' = (V, E \cup \{(t,s) \mid (s,t) \in E\})$. Two vertices $v,v'$
of~$G$ are \defemph{weakly connected} if they are strongly connected in~$G'$ or equivalently if either $v=v'$ or there is a walk in~$G'$
from $v$ to $v'$ or from $v'$ to $v$. We say that $G$ is \defemph{weakly connected} if $G'$ is strongly connected. Note that weak and
strong connectedness coincide for undirected graphs; we then simply say \defemph{connected}. For
clarity, 
a walk in an undirected graph is called a \defemph{chain}.

\subsection{Graph Orderability}

We are interested in the walks that can be transformed into one another with at most~$k$ push and pop operations at either end. We materialize this notion as the relation~$\pprel{k}$, defined next.
First, two walks~$w,w'$ satisfy $w \pprel{1} w'$ iff either $w=w'$, $w\cdot e = w'$ ($w'$ is obtained from~$w$ with a \emph{push-right} operation), $e\cdot w = w'$ (resp.~~\emph{push-left}), $w = w'\cdot e$ (resp.~\emph{pop-right}), $w = e\cdot w'$ (resp.~\emph{pop-left}), for some edge~$e$.
Second, $\pprel{k}$  is defined as~$\pprel{1}^{\leq k}$.
A similar notion, defined as a distance function, is used on words in~\cite{amarilli2023enumerating} and was introduced earlier under the name \emph{substring distance} in~\cite{ng2017state}.
We will also use~$\ppreleq{k}$, a refinement of~$\pprel{k}$ that only relates
walks of the same length: $u\ppreleq{k} v \iff  u\pprel{k}v \wedge |u|=|v|$.

\begin{remark}\label{r:ppreleq-expl}
    The $\ppreleq{k}$ relation is less intuitive than it seems.
First, $\ppreleq{2k+1}$ is equal to $\ppreleq{2k}$, for all~$k\in\N$.
Indeed, a single push or pop operation always changes the length of the walk, hence the extra operation is useless in that case.
Second, $\ppreleq{k}^{\leq j}$ is always included in $\ppreleq{kj}$, 
    but the converse usually fails.
Indeed, $\ppreleq{k}$ only relates walks of the same length, hence $\ppreleq{k}^{\leq j}$ requires the (up to)~$j-1$ intermediate walks to \emph{all have the same length}, while $\ppreleq{kj}$ imposes no such constraints on intermediate walks.
For instance, consider the disjoint union of two walks~$w_1,w_2$ of length~$k$.
Notice that~$w_1$ and~$w_2$ are related by~$\ppreleq{2k}$ (pop all edges from one walk and push all edges from the other) but not by~$\ppreleq{j}^{*}$, for any~$j<2k$.
\end{remark}

A \defemph{walk ordering} of a graph~$G$
is an ordering of~$\walks{G}$.
We are interested in two properties of a walk ordering
$(w_i)_{i\in I}$, both quantified over the indices~$i$ such that $i$ and~$i+1$
are in~$I$:
it is \defemph{$k$-bounded} if $w_i\pprel{k} w_{i+1}$ for all~$i$, and
\defemph{increasing} if $|w_i| \leq |w_{i+1}|$ for all~$i$. Note that the
increasing condition does not require the lengths to be \emph{strictly} increasing, as a
graph may have many walks of the same length (and the order between them is unconstrained).

\begin{definition}[Graph orderability]\label{def:orderable}
  Let~$G$ be a graph and let $k \in \mathbb{N}$. We say that~$G$ is:
  \begin{enumerate}[a.]
    \item \label{d:k-orderable}\defemph{$k$-orderable} if $\walks{G}$ admits a $k$-bounded ordering;
    \item \label{d:eve-k-orderable}\defemph{eventually $k$-orderable} if $\walks[\geq \ell]{G}$ admits a
      $k$-bounded ordering for some $\ell \in \mathbb{N}$; 
    \item \label{d:eve-k-orderable-incr}\defemph{$k$-orderable by increasing length}, resp.\
      \defemph{eventually $k$-orderable by increasing length}, if the ordering
      in \cref{d:k-orderable}, resp.~in \cref{d:eve-k-orderable}, is moreover
      required to be increasing;
    \item \defemph{orderable} if it is $k$-orderable for some~$k$; we define
      \defemph{eventually orderable}, \defemph{orderable by increasing length}
      and \defemph{eventually orderable by increasing length} similarly.   \end{enumerate}
\end{definition}

We point out that orderable and eventually orderable are the exact same notion, and the same holds for orderable by increasing length and eventually orderable by increasing length; but the distance bounds that they achieve can be distinct.
Furthermore, note that a $k$-bounded ordering is $k'$-bounded for every $k' \geq k$, so the set of
integers~$k$ witnessing each of the notions above is upward closed.
Moreover, the following minima exist:

\begin{definition}\label{def:pp-rank}
  The \defemph{pp-rank} of an orderable graph~$G$ is the least~$k$ such that~$G$
  is $k$-orderable. We define \defemph{pp-increasing-rank} and \defemph{eventual
  pp-increasing-rank} analogously.
\end{definition}

\section{Inner-Connectedness and Terminology for Vertices, Edges and Walks }
\label{s:typology}

\begin{toappendix}
    \label{a:typology}
\end{toappendix}

In this section we give a necessary structural condition for graph orderability,
called \emph{inner-connectedness}. It specializes the notion of
\emph{interchangeability} used in~\cite{amarilli2023enumerating} for automata.
To define it, we assign a \emph{kind} to
the vertices and edges of a graph according to how they are connected to
nontrivial sccs. We then define inner-connectedness and show that
inner-disconnected graphs are not orderable.
We close the section by introducing the notion of \emph{walk types}, which we will use
throughout the following sections.

\subsection{Object Kinds and Inner-Connectedness}
\begin{toappendix}
    \subsection{Object Kinds and Inner-Connectedness}
\end{toappendix}
Let~$G$ be a graph, and recall that an \emph{object} of~$G$ is a
vertex or an edge. We consider the two following properties that an object~$o$ of~$G$ may satisfy.
\begin{subequations}\label{eq:kind}
\begin{gather}
\label{eq:kind:after-scc}
\text{$o$ is part of a walk that \textbf{starts} in some nontrivial scc} \\
\label{eq:kind:before-scc}
\text{$o$ is part of a walk that \textbf{ends} in some nontrivial scc}
\end{gather}
\end{subequations}

\noindent These two properties partition the objects of~$G$ into four
\defemph{kinds}. An object is:
\begin{itemize}
  \item \defemph{inner}, or \emph{of kind~\defemph{\I}}, and coloured \textcolor{innercolor}{\textbf{blue}} in figures, if it satisfies both
    \eqref{eq:kind:after-scc} and \eqref{eq:kind:before-scc};
  \item \defemph{left-outer}, or \emph{of kind~\defemph{\L}}, and coloured \textcolor{outercolor}{\textbf{orange}} in figures, if it satisfies
    \eqref{eq:kind:before-scc} only;
  \item \defemph{right-outer}, or \emph{of kind~\defemph{\R}}, and coloured \textcolor{outercolor}{\textbf{orange}} in figures, if it satisfies
    \eqref{eq:kind:after-scc} only;
  \item \defemph{finitary}, or \emph{of kind~\defemph{\F}}, and coloured
    \textcolor{finitarycolor}{\textbf{mint green}}, if it satisfies
    neither.
\end{itemize}
Left-outer and right-outer objects are collectively called \defemph{outer}
objects.

Every vertex of a nontrivial scc is an inner object, but the converse fails: an
inner vertex may form a trivial scc on its own, and an inner edge may lie outside
every nontrivial scc, as illustrated in~\cref{f:inner-connected:x}.  We may now define inner-connectedness.

\begin{definition}
  The \defemph{inner subgraph} of~$G$ is the subgraph consisting of its inner
  objects. We say that $G$ is \defemph{inner-connected} if its inner subgraph is
  weakly connected
(in particular if the inner graph is empty), and \defemph{inner-disconnected} otherwise.
\end{definition}

Two inner-disconnected graphs are shown in Figure~\ref{f:inner-disconnected}, while
Figure~\ref{f:inner-connected} shows graphs that are inner-connected.
Finally, Figure~\ref{f:inner-connected:man} illustrates several possible ways outer and finitary edges may combine.
Inner-connectedness coincides with the notion of DFA interchangeability
of~\cite{amarilli2023enumerating} (in the special case where all transitions use distinct letters), so that
\cite[Theorem~3.9]{amarilli2023enumerating} yields the following statement in our
setting.

\begin{toappendix}
  In this appendix section, we explain why the following result is true:
\end{toappendix}

\begin{propositionrep}[\cite{amarilli2023enumerating}]\label{p:inner-disconnected=>nonorderable}
  No inner-disconnected graph is eventually orderable.
\end{propositionrep}

For the sake of completeness, we provide a self-contained proof of \cref{p:inner-disconnected=>nonorderable} in Appendix~\ref{a:typology}.
In the remainder of the paper we focus on inner-connected graphs.
Notice that this includes all acyclic graphs since their inner subgraph is empty, hence weakly connected.

\begin{toappendix}
We first explain why the result follows from \cite{amarilli2023enumerating},
  before giving a self-contained proof which is easier to follow as it only uses
  graph terminology.

  \subparagraph*{Proof of \cref{p:inner-disconnected=>nonorderable} via
  \cite{amarilli2023enumerating}.}
  To obtain the result from \cite{amarilli2023enumerating}, we will rely on the
  following:

  \begin{claim}
    \label{clm:graphtodfa}
    Given a directed graph $G = (V, E)$, we can build in linear time a DFA $A_G$ whose alphabet is
  the edge set $E$ of $G$, such that the language accepted by~$A_G$ is precisely
  the set of walks of~$G$.
  \end{claim}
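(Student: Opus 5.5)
The construction is direct. Given $G=(V,E)$, I would build $A_G$ with state set $V\cup\{q_0\}$, where $q_0$ is a fresh initial state, and alphabet $E$. The transition function is defined as follows. From $q_0$, reading a letter $e=(s,t)$ leads to $t$, for every $e\in E$. From a state $v\in V$, reading $e=(s,t)$ leads to $t$ if $s=v$, and is undefined otherwise. All states are final. I would allow partial DFAs, or else add a non-final sink state, which does not affect the argument. Determinism is immediate: for each state and letter there is at most one transition, since an edge determines its target. The automaton has $|V|+1$ states and $2|E|$ transitions (one from $q_0$ and one from $\src(e)$ for each $e$), so it can be built in linear time by a single scan of $E$.

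The correctness argument is an induction on word length, and I expect it to be routine. I would prove that a word $e_1\cdots e_k\in E^*$ with $k\geq 1$ has a run from $q_0$ iff $(e_1,\ldots,e_k)$ is a walk, and that in this case the run ends in $\tgt(e_k)$. For $k=1$ this holds by the transitions out of $q_0$. For the inductive step, the run on $e_1\cdots e_k$ sits in state $\tgt(e_k)$, and it extends by $e_{k+1}$ exactly when $\src(e_{k+1})=\tgt(e_k)$, which is precisely the walk condition. The empty word is accepted because $q_0$ is final, and it corresponds to $\varepsilon$. Since every state is final, the accepted language is exactly $\walks{G}$.

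The step that needs the most care is making sure that $A_G$ meets the format hypotheses of \cite{amarilli2023enumerating}, namely trimness and possibly completeness. All states are final, so every state is co-accessible. Vertices with no incident edge are inaccessible, and I would delete them, which is still linear time. I would also remark that every letter labels transitions leading into a single state, namely $\tgt(e)$. This is the ``distinct letters'' situation mentioned before the claim, and it is what later lets the inner and outer kinds of objects of $G$ line up with the notion of interchangeability in the DFA.
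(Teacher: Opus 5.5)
Your proposal is correct and matches the paper's construction exactly: a fresh initial state $q_0$, all states final, and transitions $q_0\xrightarrow{e}\tgt(e)$ and $\src(e)\xrightarrow{e}\tgt(e)$; your induction on word length spells out what the paper calls immediate. One minor slip in your optional remarks, irrelevant to the claim as stated: the inaccessible vertex-states are exactly those of in-degree $0$, not just isolated vertices (all of them can be deleted without changing the language), and completing with an explicit sink state would need $\Theta(|V|\cdot|E|)$ transitions, so it would not be linear.
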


  To prove \cref{clm:graphtodfa}, intuitively, we see $G$ as an automaton where each
  vertex is a state and each edge $e \in E$ is a transition labeled by the
  letter~$e$, and all states are both initial and final. The only subtlety is
  that this does not give a deterministic automaton (because of the multiple
  initial states), whereas the results of \cite{amarilli2023enumerating} only
  apply to DFAs. To address this, we add one dedicated initial state. Formally:

  \begin{proof}[Proof of \cref{clm:graphtodfa}]
    Let $G = (V, E)$ be the input directed graph. Let $Q \coloneq V \uplus
    \{q_0\}$, where $q_0$ is a fresh state that does not occur in~$V$. Let $A_G$
    be the automaton on alphabet $E$ with state space $Q$, with all states being
    final, and with the following kinds of transitions:
    \begin{itemize}
      \item For each $e \in E$, we have a transition labeled $e$ from $q_0$ to
        $\tgt(e)$;
      \item For each $e \in E$, we have a transition labeled $e$ from $\src(e)$ to
        $\tgt(e)$;
    \end{itemize}
    (Intuitively, we can also obtain $A_G$ by adding only the second kind of
    transitions together with $\epsilon$-transitions from~$q_0$ to every state
    in~$V$ and then doing $\epsilon$-transition removal.)
    It is immediate that $A_G$ as built is indeed a deterministic automaton
    (because for each letter $e \in E$ there are only two transitions
    labeled~$e$ and their source states are different), that the construction is
    in linear time, and that a sequence of edges of~$E$ is accepted by~$A_G$
    precisely if it is a walk of~$G$.
  \end{proof}

  We can now prove \cref{p:inner-disconnected=>nonorderable} by appealing to the
  results of \cite{amarilli2023enumerating}. Let $G$ be an inner-disconnected
  graph. Using \cref{clm:graphtodfa}, let $A_G$ be a DFA recognizing the
  language $L$ of the walks of~$G$. Let us show that $L$ is not eventually
  orderable.

  For this, let us first show that $A_G$ is not interchangeable in the
  sense of~\cite{amarilli2023enumerating}: we show by contraposition that if
  $A_G$ were interchangeable then $G$ would be inner-connected. Indeed, the
  non-trivial sccs of $A_G$ are in bijection with the non-trivial sccs of~$G$,
  and the loopable states of $A_G$ (in the terminology
  of~\cite{amarilli2023enumerating}) are in bijection with the vertices of the
  non-trivial sccs of~$G$.
  Further, if $A_G$ were interchangeable, then all its loopable vertices would
  be related by the transitive closure of the connectedness relation
  of~\cite{amarilli2023enumerating}: it is easy to see from the structure
  of~$A_G$ that the compatibility relation on~$A_G$ is the identity because
  non-empty runs starting at a loopable state $x$ of~$A_G$ must start with a letter $e \in
  E$ with $\src(e) = x$, so there cannot be a non-empty word that labels a loop
  on two different loopable states of~$A_G$. Thus, if $A_G$ were
  interchangeable, then the graph formed of its loopable vertices and the
  directed paths from a loopable vertex to another (possibly via non-loopable
  vertices) would be weakly connected, implying that $G$ would then be
  inner-connected. Hence, by contraposition, as $G$ is not inner-connected then
  $A_G$ is not interchangeable.

  We can then invoke \cite[Theorem 3.9]{amarilli2023enumerating}. This result
  implies that the language of $A_G$ is not orderable for the Levenshtein
  distance, hence also for the push-pop-distance, and so indeed $G$ is not an
  orderable graph. This establishes \cref{p:inner-disconnected=>nonorderable},
  noting that if $G$ were \emph{eventually orderable} with some finite distance
  bound $d$ then it would be \emph{orderable} with some finite distance bound $d'$,
  e.g., we can take $d'$ to be the maximum between $d$ and the (finite) maximal distance between two
  consecutive walks in the (finite) prefix of the witnessing sequence for
  eventual orderability to which the distance bound $d$ does not apply.

  \subparagraph*{Self-contained proof of \cref{p:inner-disconnected=>nonorderable}.}
For our self-contained proof of \cref{p:inner-disconnected=>nonorderable}, we
will rely on the following preliminary lemma, which is the analogue in our
setting of \cite[Proposition 3.10]{amarilli2023enumerating}.

\begin{lemma}\label{l:innder-disconnected=>eventually-disconnected}
    Let~$G$ be 
    an inner-disconnected graph.
    Then, for every integer~$k \geq 0$, there exists~$n$ such that we can partition $\walks[\geq n]{G}$ into two infinite sets~$W_1,W_2$ such that there are no edges of $\pprel{k}$ between $W_1$ and $W_2$.

\end{lemma}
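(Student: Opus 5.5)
The plan is to separate long walks according to which weakly connected component of the inner subgraph their "middle" lies in. Since $G$ is inner-disconnected, we fix a partition of the inner subgraph into two nonempty unions of weak components, $\mathcal{A}$ and $\mathcal{B}$. We can assume $\mathcal{A}$ contains a nontrivial scc, and likewise $\mathcal{B}$. Indeed, every inner vertex lies on a walk from a nontrivial scc to a nontrivial scc, and that whole walk is inner. So each weak component of the inner subgraph contains a nontrivial scc, and each component has infinitely many walks entirely inside it.

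The key structural fact I would establish first is the following. Let $N=|V|$. Any walk $w$ of length at least $N$ contains a repeated vertex, so it encounters a nontrivial scc. Moreover, all edges of $w$ lying between its first and last visit to nontrivial sccs are inner: each such edge is preceded in $w$ by a vertex of a nontrivial scc and followed by one. Since inner edges in one weak component only connect inner vertices of that component, this middle portion lies entirely in one component. I then define the \emph{core} of $w$ as this middle portion together with the scc vertices at its ends. The core belongs to a single component, which I call $c(w)$.

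Next I bound the non-core parts. The prefix before the first scc vertex and the suffix after the last one visit no nontrivial scc, so they cannot repeat a vertex and have length less than $N$. Hence, if $|w|\ge n$ with $n$ large, the core of $w$ has length at least $n-2N$.

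Now take $w\pprel{k}w'$. Then $w'$ is obtained by deleting at most $k$ edges from the ends of $w$ and adding at most $k$ edges. So $w$ and $w'$ share a common infix $u$ of length at least $|w|-k$. Choose $n\ge 2N+2k+N+1$. Then $u$ overlaps the core of $w$ in at least $N$ edges, hence in at least one edge. That shared edge is inner and belongs to $c(w)$. The same edge lies in $w'$, and I claim it lies in the core of $w'$. To see this, note that $u$ has length at least $N$ around this edge, so there is a repeated vertex on each side within the overlap region. More carefully, I would pick an edge of $u$ in the core of $w$ that is surrounded within $u$ by at least $N$ core edges on each side. That forces repeated vertices, hence nontrivial scc visits, on both sides within $w'$, so the edge lies between scc visits in $w'$. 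This gives $c(w)=c(w')$.

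The partition is then $W_1=\{w\in\walks[\ge n]{G} : c(w)\in\mathcal{A}\}$ and $W_2$ its complement. Both sets are infinite, since we can pump cycles of a nontrivial scc inside $\mathcal{A}$, and likewise inside $\mathcal{B}$. No $\pprel{k}$ edge crosses between them. The main obstacle is the bookkeeping in the previous step: guaranteeing that the shared edge is in the core of both walks. I would handle it by choosing $n$ generously, say $n=3N+2k+1$, so that the middle window of $u$ has repeated vertices on both sides.
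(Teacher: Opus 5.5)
Your proof is correct, but it takes a different route from the paper's.

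\textbf{The paper's route.} The paper takes $n=|E|+k+1$ and argues by counting. A walk of that length has at least $k+1$ repeated edge occurrences. Each of these lies inside a nontrivial scc, hence in the single weak component $C_i$ of the inner subgraph associated with the walk. A walk associated with another component contains no edge of $C_i$ at all. So transforming one into the other requires at least $k+1$ pops, and no $\pprel{k}$ edge crosses.

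\textbf{Your route.} You instead use the fact that push and pop act only at the ends. After at most $k$ operations, a contiguous infix $u$ of length at least $|w|-k$ survives in $w'$. You then locate an edge of $u$ that lies in the core of both walks, so the label $c(\cdot)$ is invariant under $\pprel{k}$ among long walks.

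\textbf{Comparison.} The paper's argument is shorter and does not care where the edits happen. That is exactly why it extends verbatim to the Levenshtein distance, as the paper remarks. Your argument is more positional and depends on the end-only nature of the operations. In exchange, it yields an explicit invariant: the component containing the core, i.e.\ the stretch between the first and last visits to nontrivial sccs.

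\textbf{A bookkeeping fix.} As written, asking for an edge ``surrounded within $u$ by at least $N$ core edges on each side'' is not guaranteed by $n=3N+2k+1$ when $k$ is small compared to $N$. The overlap of $u$ with the core of $w$ is only guaranteed to have length at least $n-k-2(N-1)=N+k+3$. It suffices to ask for $N$ edges of $u$ on each side, not core edges. Any edge of a walk with at least $N$ edges on both sides has a repeated vertex, hence a visit to a nontrivial scc, on each side, so it lies in the core of that walk. Since $u$ is an infix of both $w$ and $w'$, the chosen edge lies in both cores as soon as $|u|\ge 2N+1$. This holds, for instance, whenever $n\ge 2N+k+1$.
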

\begin{proof}

    Let~$G=(V,E)$ be a graph.
    Just like strong connectedness, weak connectedness defines an equivalence relation on~$V$ and we call its equivalence classes the \emph{weakly connected components} (or wcc for short) of~$G$.
    Note that each wcc is a union of sccs.

    We denote by~$I$ the inner subgraph of~$G$ and by~$C_1,\ldots,C_m$ the wccs
    of~$I$; as $G$ is inner-disconnected we have $m \geq 2$.

    Let $n \coloneq |E|+k+1$. We associate each walk~$w$ of~$G$ of length at least $n$ to one of~$C_1,\ldots,C_m$, as follows.
    Since~$n>|E|$, some edge of~$G$ occurs twice in~$w$, hence~$w$
    must encounter at least one nontrivial scc of~$G$; this scc is contained in some wcc~$C_i$ of~$I$; and we associate~$w$ with~$C_i$.
    Note that this correctly defines a mapping from walks of~$G$ of length at
    least~$n$ to the wccs of~$I$, because if $w$ encounters more than one
    nontrivial scc of~$G$ then they must all belong to the same wcc of~$I$. 
    This implies that $w$ contains no edge that belong to a wcc~$C_j$ of~$I$
    with $i\neq j$. As we have $|w| \geq |E|+k+1$, we
    know that $w$ contains at least $k+1$ edge occurrences that are repeated,
    and the corresponding edges must belong to a nontrivial scc of~$G$, hence
    to~$C_i$.

    For every~$i\in[m]$, we let~$W_i$ denote the set of the walks in $\walks[\geq n]{G}$ associated with the wcc $C_i$.
    Note that each wcc~$C_i$ contains at least one nontrivial scc (by definition of the inner subgraph), hence each~$W_i$ contains infinitely many walks.
It follows from the previous paragraph that two walks~$w_i\in W_i$
    and~$w_j\in W_j$ are never related by~$\pprel{k}$ unless $i=j$ (because at
    least $k+1$ edge occurrences have to be removed and $k+1$ edge occurrences
    have to be added in order to transform $w_i$ into $w_j$).
Hence, since~$W_1\cup\cdots\cup W_m=\walks[\geq n]{G}$ and~$m\geq 2$, the
    statement follows by partitioning $\walks[\geq n]{G}$ into $W_1$ and $W_2\cup\cdots\cup W_m$.
\end{proof}

We can now prove \cref{p:inner-disconnected=>nonorderable} similarly to the
proof of \cite[Theorem 3.9]{amarilli2023enumerating}:

\begin{proof}[Proof of \cref{p:inner-disconnected=>nonorderable}]
    For the sake of contradiction, let us assume that there is an inner-disconnected graph~$G$ that is eventually~$k$-orderable, yielding an integer~$\ell$ and an (infinite) walk sequence~$w_1,\ldots,w_i,\ldots$ that enumerates~$\walks[\geq\ell]{G}$.
Applying \cref{l:innder-disconnected=>eventually-disconnected}, let $n$ and $W_1,W_2$ be two infinite sets of vertices in~$(\walks[\geq
n]{G},{\pprel{k}})$ that are disconnected under $\pprel{k}$.
    In particular,~$W_1'=W_1\cap\walks[\geq\ell]{G}$ and~$W_2'=W_2\cap\walks[\geq\ell]{G}$ are infinite as well.
    There are finitely many walks in~$\walks[\geq\ell]{G}\cap \walks[<n]{G}$, possibly none, hence there is
    a rank~$r$ such that all walks~$w_i$ with $i\geq r$, have length at least~$n$.
Now, let $X=\{w_i \mid i < r\}$ be the set of walks produced before rank~$r$.
    As $X$ is finite, there are two walks $v_1$ in $W_1'$
    and~$v_2$ in $W_2'$ that do not belong to~$X$.
    Hence, $v_1=w_i$ and~$v_2=w_j$ for some~$i,j\geq r$; without loss of generality we may assume that~$i<j$.
    We reach a contradiction because $w_i,w_{i+1}\ldots,w_j$ provides a chain
    connecting~$w_i$ and~$w_j$, and this chain is in~$(\walks[\geq
    n]{G},\pprel{k})$ from our assumption that all walks enumerated after rank~$r$
    have length at least~$n$.
\end{proof}

\begin{remark}\label{r:levenshtein-inner-connected}
The proof of \cref{l:innder-disconnected=>eventually-disconnected} and thus of
\cref{p:inner-disconnected=>nonorderable} relies on the number of edges which
are different for walks in two different weakly connected components of the
inner subgraphs. Hence, the position of these edges does not change the proof,
which still works for the Levenshtein distance, which is a relaxed version of
the Push-Pop distance.  
(We note that the same holds in \cite{amarilli2023enumerating}, with
\cite[Theorem 3.9]{amarilli2023enumerating} applying also to the Levenshtein
distance.)
See \cref{s:conclusion} where the issue is discussed in
more detail.
\end{remark}

\end{toappendix}

\subsection{Walk Types}
\begin{toappendix}
    \subsection{Walk Types}
\end{toappendix}

Following our definition of object kinds, we now define \emph{walk types}, which we use afterwards to define peripherality and eventual peripherality.
Most proofs will also proceed by case analysis on the walk types.

A nonempty walk consisting only of edges of kind~$t \in \{\L,\I,\F,\R\}$ is
called a walk \defemph{of type~$t$}. Note that its first and last vertices may have a kind
other than~$t$ (e.g., the walk $\b3\b4$ in \cref{f:inner-connected:man} is of type~$\R$ but starts in a vertex of kind~$\I$).

A \defemph{type specification} is a nonempty sequence $\tau$ over $\{\L,\I,\F,\R\}$ in which
no element occurs twice and each element is optionally marked with~${}^?$.
For $\tau = t_0\cdots t_k$,
a walk~$w$ is
\defemph{of type~$\tau$}, or a \defemph{$\tau$-walk}, if we can write $w = w_0 \cdots w_k$, where
$w_i$ is a walk of type~$t_i$ for each $0 \leq i \leq k$, and $w_i$ is moreover
allowed to be empty when $t_i$ carries a~${}^?$.
A walk may be of several types, e.g.,
a walk can be both of type~$\I$ and of type~$\L^?\I$.
For a fixed
specification however, the factorization is unique, since the letters are
pairwise distinct and the kind of an edge of~$w$ determines which factor it
belongs to. We call it \defemph{the factorization of~$w$ with respect
to~$t_0\cdots t_k$}. For $k=0$ this agrees with the definition of the previous
paragraph.

We state a few properties of walk types below; in particular,
\cref{pp:walk-type:supertypes} provides the two main types we use in case analysis.

\begin{propertyrep}Let~$G$ be a graph.\label{pp:walk-type}
    \begin{enumerate}[a.]
        \item\label{pp:walk-type:factor} 
        Let~$w_1\cdot w_2$ be a walk such that~$w_1$ and~$w_2$ are nonempty.
        Then~$w_1\cdot w_2$
        is of type~\L{} (resp.~\R)
        if and only if~$w_2$ is of type \L{} (resp.~$w_1$ is of type~\R).
        
        \item\label{pp:walk-type:no-i&f} 
        No walk contains both an edge of kind~$\I$ and an edge of kind~$\F$.

        \item \label{pp:walk-type:supertypes}
        Each walk in~$G$ is either of type~$\L^?\I^?\R^?$ or of type~$\L^?\F\R^?$
     \end{enumerate}
\end{propertyrep}

\begin{proof}
    \cref{pp:walk-type} is a  consequence of the way walk concatenation combines with the properties in \cref{eq:kind:after-scc,eq:kind:before-scc}.
    It may indeed be verified that, if~$w$ denotes a walk and~$e,e'$ two edges
    such that $e\cdot w\cdot e'$ is a walk, then we have the following two
    implications:
    \begin{enumerate}
        \item $e$ satisfies \eqref{eq:kind:after-scc} $\implies$ $e'$ satisfies \eqref{eq:kind:after-scc}
        \item $e'$ satisfies \eqref{eq:kind:before-scc} $\implies$ $e$ satisfies \eqref{eq:kind:before-scc}
    \end{enumerate}
    \smallskip
    
    Proof of \cref{pp:walk-type:factor}. We focus on the statement about type~$\L$,
    as the statement about type~$\R$ is symmetric.
    The forward direction is immediate.
    For the converse direction, if~$w_2$ is of type~$\L$, then the last edge of~$w_2$ is of kind~$\L$, hence by definition it satisfies \cref{eq:kind:before-scc} but not \cref{eq:kind:after-scc}.
    Implication~2 above implies that every edge of~$w_1\cdot w_2$ satisfies \cref{eq:kind:before-scc},
    and the contraposition of Implication~1 implies that no edge of~$w_1\cdot
    w_2$ satisfies \cref{eq:kind:after-scc}, hence it is indeed the case that
    all edges in~$w_1$ are of kind~$\L$.

    \medskip
    
    Proof of \cref{pp:walk-type:no-i&f}.   A walk containing an $\I$-edge~$e_\I$ and an \F-edge~$e_\F$ is of the form~$w\cdot e_\I \cdot w' \cdot e_\F \cdot w''$ or of the form~$w\cdot e_\F \cdot w' \cdot e_\I \cdot w''$.
    The former case contradicts Implication 1 and the latter case Implication~2.
    
    \medskip

\cref{pp:walk-type:supertypes} is a direct consequence of \cref{pp:walk-type:factor,pp:walk-type:no-i&f}.
\end{proof}

\section{Peripherality and Upper Bounds}
\label{s:peripherality}
\begin{toappendix}
    \label{a:peripherality}
\end{toappendix}

We now have the requisite definitions 
to show the upper bounds for the
different push-pop ranks (\cref{t:pp-rank},
\cref{t:pp-rank:classic,t:pp-rank:incr,t:pp-rank:eve-incr}).
Our strategy generally relies on the 
``even-odd trick''~\cite{sekanina1960ordering,karaganis1968cube,feder1994network} that
builds a Hamiltonian path in the cube~$(V,E^{\leq 3})$ of any connected finite undirected graph~$(V,E)$.
In our settings, we will apply it to \emph{walk graphs}, whose vertices are the
walks to be ordered and where two walks are connected by an edge if they are at small
push-pop distance. Since these graphs are generally infinite,
we must adapt this technique, which we present in
\cref{s:peripherality:even-odd}: namely, \cref{p:layer:even-odd} gives a generic method to build a $3k$-bounded walk ordering by partitioning walks such that each part is connected for the $\ppreleq{k}$ relation.
Then, building on walk types, we define in \cref{s:peripherality:def} two graph parameters, \emph{peripherality} and its variant \emph{eventual peripherality}.
The values of these parameters provide us with a distance bound~$k$ suitable
for \cref{p:layer:even-odd},
as shown in \cref{s:peripherality:zero} when peripherality is equal to~$0$,
and in \cref{s:peripherality:upper} in the general case.

\subsection{Orderability and Walk Graph Connectedness}
\label{s:peripherality:even-odd}
\begin{toappendix}
    \subsection{Orderability and Walk Graph Connectedness}
\end{toappendix}

In the following, we will often consider possibly infinite undirected graphs
whose vertices are walks (e.g., $\walks[\ell]{G}$ or $\walks[\geq \ell]{G}$) and
whose edge relation is~$\pprel{k}$ or~$\ppreleq{k}$ for some~$k$; we informally
call them \emph{walk graphs}.
We now show how walk orderability of a graph $G$ can be studied through the simpler lens of connectedness of its walk graphs. Specifically, we demonstrate the approach by showing the following result:

\begin{propositionrep} \label{p:incr-len-orderability<=layer-orderability}
    Let~$G$ be a graph and~$k$ a positive integer.    
     If the undirected graph  ${(\walks[\ell]{G}, \ppreleq{k})}$ 
      is connected for every~$\ell$, then~$G$ is $3k$-orderable by increasing length.
\end{propositionrep}

We will leverage this result later to show our upper bounds on orderability by increasing length, with a choice of~$k$ given by our subsequent definitions of peripherality.

The proof of \cref{p:incr-len-orderability<=layer-orderability}  uses the classical ``even-odd
trick''~\cite{sekanina1960ordering,karaganis1968cube,feder1994network}:

\begin{theorem}[\cite{karaganis1968cube}, see also \cite{sekanina1960ordering,feder1994network}]
\label{t:even-odd}
  For every connected undirected (finite) graph~$U=(V,E)$ and distinct
  endpoint vertices ${w,w'\in V}$, there exists an ordering $v_0,v_1,\cdots,v_{n}$ of $V$, with $w=v_0$ and $v_n=w'$ such that $(v_i,v_{i+1})\in E^{\leq3}$, for every $i\in[n]$.
\end{theorem}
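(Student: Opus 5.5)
The plan is to reduce to trees and prove a stronger statement about trees by induction. Since $U$ is connected and finite, it has a spanning tree $T$, and $(v,v')\in T^{\leq 3}$ implies $(v,v')\in E^{\leq 3}$. It therefore suffices to find, in any finite tree $T$ and for any two distinct vertices $w,w'$, an ordering of the vertices from $w$ to $w'$ in which consecutive vertices are at tree distance at most~$3$.

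\emph{Step 1 (edge lemma).} First I would show, by induction on the number of vertices, that for every edge $\{u,v\}$ of a tree $T$ there is such an ordering starting at $u$ and ending at $v$.
\begin{itemize}
  \item Removing the edge $\{u,v\}$ splits $T$ into two subtrees: $T_u$ containing $u$ and $T_v$ containing $v$.
  \item If $T_u=\{u\}$, the ordering of $T_u$ is just $u$. Otherwise pick a neighbour $u'$ of $u$ in $T_u$; the induction hypothesis gives an ordering of $T_u$ from $u$ to $u'$.
  \item Symmetrically, either $T_v=\{v\}$, or we pick a neighbour $v'$ of $v$ in $T_v$ and take an ordering of $T_v$ from $v'$ to $v$.
  \item Concatenate the two orderings. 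The only new consecutive pair is (last of $T_u$, first of $T_v$), which lies on the path $u'$–$u$–$v$–$v'$ and hence is at distance at most $3$.
\end{itemize}
Distances inside a subtree equal distances in $T$, so all other consecutive pairs are also at distance at most $3$.

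\emph{Step 2 (general endpoints).} Let $w=p_0,p_1,\dots,p_m=w'$ be the tree path from $w$ to $w'$, with $m\geq 1$. Deleting the path edges splits $T$ into subtrees $T_0,\dots,T_m$, with $p_i\in T_i$. For each $i$ with $|T_i|>1$, choose a neighbour $q_i$ of $p_i$ in $T_i$; if $|T_i|=1$, set $q_i=p_i$. Using Step~1 (or the one-vertex ordering), I build the following pieces:
\begin{itemize}
  \item for $T_0$, an ordering from $p_0$ to $q_0$;
  \item for each $1\leq i\leq m$, an ordering of $T_i$ from $q_i$ to $p_i$.
\end{itemize}
Concatenating these in the order $T_0,T_1,\dots,T_m$ gives an ordering that starts at $w$ and ends at $w'$.

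At each junction, the piece for $T_{i-1}$ ends at a vertex within distance $1$ of $p_{i-1}$. This is $q_0$ when $i=1$, and $p_{i-1}$ itself when $i\geq 2$. The next piece starts at $q_i$, which is within distance $1$ of $p_i$. The connecting path passes through $p_{i-1}$ and $p_i$, so the jump has length at most $1+1+1=3$.

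\emph{Main obstacle.} The only delicate point is the bookkeeping of which end of each sub-ordering lies next to the tree path. The induction has to be phrased through the edge lemma, with endpoints that are adjacent, rather than directly for arbitrary endpoints. Otherwise the junctions could require jumps of length $4$: for instance, when $w'$ is a leaf attached to a large subtree, the ordering of that subtree has to finish exactly at the attachment vertex. Once the edge lemma is in place, every junction is checked by the same three-edge path argument.
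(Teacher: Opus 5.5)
Your proof is correct, and it takes a genuinely different route from the paper only in the sense that the paper proves nothing here: it cites Karaganis and attributes the result to the ``even-odd trick''.

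That classical argument also reduces to a spanning tree rooted at $w$. It then runs a depth-first traversal, outputting each vertex of even depth when it is first visited and each vertex of odd depth when it is last visited. This lists all vertices, starting at $w$ and ending at a child of $w$, with consecutive vertices at tree distance at most~$3$.

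Your Step~1 is essentially a recursive formulation of that traversal. Ordering $T_u$ from $u$ to $u'$ and then $T_v$ from $v'$ to $v$ is exactly what the even-odd traversal rooted at $u$ produces when $v$ is the last child explored. The parity classes stay consistent under your re-rootings because they are the two sides of the tree's bipartition. So for adjacent endpoints the two routes coincide. The explicit traversal buys a non-recursive description that is convenient to implement. Your inductive form makes the junction check (the path $u'$--$u$--$v$--$v'$) immediate.

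Your Step~2, the decomposition along the $w$--$w'$ path with each piece ending or starting next to the path, goes beyond the trick. It yields the full Hamiltonian-connectedness of $T^3$, which the statement as written requires. The paper, however, only ever applies the theorem with adjacent endpoints ($w_i$ and a neighbour $w_i'$ in the proof of \cref{p:layer:even-odd}), so your Step~1 alone would already suffice for its purposes.
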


Note that 
\cref{t:even-odd}
only applies to \emph{finite}~$U$, and not, e.g., 
to the bi-infinite chain ($V=\mathbb{Z}$ and $E=\{(i,i+1),(i+1,i)\mid i\in\mathbb{Z}\}$).
As walk graphs are generally infinite, to show
\cref{p:incr-len-orderability<=layer-orderability} we need a 
result that applies to possibly \textbf{infinite} undirected graphs:

\begin{toappendix}
  To show \cref{p:incr-len-orderability<=layer-orderability}, we first restate and show our auxiliary result:
\end{toappendix}

\begin{propositionrep}\label{p:layer:even-odd}
    Let~$U=(V,E)$ be a possibly infinite undirected graph, and let $(V_i)_{i\in I}$ be  a partition of $V$ as non-empty finite vertex sets meeting the following conditions.
    \begin{enumerate}[a.]   
\item \label{p:layer:even-odd:pre:subgraph-connected} The subgraph of~$U$ induced by~$V_i$ is connected, for every~$i\in I$.
        \item \label{p:layer:even-odd:pre:direct-seq} There is a vertex sequence~$(w_i)_{i\in I}$ such that:
        \begin{itemize}
            \item~$w_i\in V_i$ for all~$i\in I$
            \item~$(w_{i-1},w_i)\in E^{\leq2}$ for all~$i\in I \setminus\{0\}$.
        \end{itemize}
    \end{enumerate}
    Then, there exists an ordering~$(v_j)_{j\in J}$ of~$V$ meeting the following conditions.
\begin{enumerate}[a.]\setcounter{enumi}{2}
    \item \label{p:layer:even-odd:post:incr}
    For every~$i<i'$, whenever we have 
    $v_j\in V_i$ and~$v_{j'}\in V_{i'}$ for some $j,j'$, then $j<j'$.
    \item \label{p:layer:even-odd:post:chain}
    We have~$(v_{j-1},v_{j})\in E^{\leq 3}$, for every $j\in J\setminus \{0\}$.
\end{enumerate}
\end{propositionrep}

The simplest choice for the $V_i$ is to take the walks of length~$i$, which we will do to show
\cref{p:incr-len-orderability<=layer-orderability}. 
However, we have phrased  \cref{p:layer:even-odd} in a more generic way because we will later use it with sets $V_i$ defined according to the peripherality notion that we will introduce. We also note that 
\cite[Lemma~4.6]{amarilli2023enumerating} uses a similar result to \cref{p:layer:even-odd} 
(under the name of \emph{stratum-connectedness}); but in their work the sets $V_i$ always correspond to walks of a specific length \textbf{range} (called a \emph{stratum}), which does not suffice for our purposes.

\begin{proof}
    For every~$i\in I$, we define~$w_i'$ from~$w_i$, given by hypothesis~\ref{p:layer:even-odd:pre:direct-seq}, as follows.
If~$|V_i|=1$, that is if~$V_i=\{w_i\}$, we fix~$w_i'=w_i$, otherwise we have~$|V_i|\geq 2$ and we fix~$w_i'$ as any vertex in~$V_i$ such that~$(w_i,w_i')\in E$ and~$w_i\neq w_i'$, which must exist as a consequence of hypothesis~\ref{p:layer:even-odd:pre:subgraph-connected}.
    Note that in both cases~$(w_i,w_i') \in  E^{\leq 1}$.
    
    For every~$i\in I$, we now do one of two things depending on $|V_i|$. The first case is when~$|V_i|>1$. In this case, we apply \cref{t:even-odd} to the subgraph induced by $V_i$, with endpoint vertices $w_i,w_i'$.
    \cref{t:even-odd}~yields, for each integer~$i\in I$ such that~$|V_i|>1$, an ordering $(w_i=v_0^{(i)},v_1^{(i)},\ldots, v_{n_i}^{(i)}=w_i')$ of~$V_i$ such that $(v_j^{(i)},v_{j+1}^{(i)})\in E^{\leq3}$ for each $0 \leq j < n_i$.
    The second case is when $|V_i|=1$. In this case, we set~$n_i\coloneq 0$ and associate with~$i$ the 1-element sequence~$(w_i=v_0^{(i)}=v_{n_i}^{(i)}=w_i')$. Note that, in both cases, we obtain an ordering of~$V_i$ that starts at $w_i$ and ends at $w_i'$ and where consecutive pairs are in $E^{\leq 3}$.
    
    We then concatenate the vertex sequences associated with each~$i$ by increasing~$i$, yielding the following infinite sequence.
    \begin{equation*}
        \overbrace{\underbrace{v_0^{(0)}}_{=w_0},v_1^{(0)},\ldots, \underbrace{v_{n_0}^{(0)}}_{=w_0'}}^{V_0}~,~~
        \overbrace{\underbrace{v_0^{(1)}}_{=w_1},v_1^{(1)},\ldots, \underbrace{v_{n_1}^{(1)}}_{=w_1'}}^{V_1}~,~~
        \ldots~,~~
        \overbrace{\underbrace{v_0^{(i)}}_{=w_i},v_1^{(i)},\ldots, \underbrace{v_{n_i}^{(i)}}_{=w_i'}}^{V_i}~,~~
        \ldots
    \end{equation*}
    It is an ordering of~$V$ since $(V_i)_{i\in I}$ is a partition of $V$ and because $v_0^{(i)},v_1^{(i)},\ldots, v_{n_i}^{(i)}$ is an ordering of $V_i$, for each~$i\in I$.
    Item~\ref{p:layer:even-odd:post:incr} holds by construction.
    Finally, to show item \ref{p:layer:even-odd:post:chain}, we already know that the condition holds within the vertex sequences for each $V_i$, and we must only show it between the last vertex of each sequence and the first vertex of the next sequence. In other words,
we must show that for every $j\in J\setminus \{0\}$ there is a chain of length at most 3 linking $w_{j-1}'=v^{(j-1)}_{n_{j-1}}$ to $v^{(j)}_{0}=w_j$.
    This fact follows from the second item of hypothesis \ref{p:layer:even-odd:pre:direct-seq} ($w_{j-1}$ and $w_{j}$ are at distance~2 or less) and 
    the definition of~$w_{j-1}'$ (either~$w_{j-1}'=w_{j-1}$ or~($w_{j-1}',w_{j-1})\in E$).
\end{proof}

\begin{proof}[Proof sketch of \cref{p:layer:even-odd}]
    Apply \cref{t:even-odd} to each subgraph of~$U$ induced by $V_i$
    with~$w=w_i$ and~$w'$ being any vertex in~$V_i$ such that~$(w_{i},w')\in E$.
    Then, concatenate the obtained orderings.
\end{proof}

\cref{p:incr-len-orderability<=layer-orderability} is then easily obtained 
from \cref{p:layer:even-odd}. 
If~$G$ is cyclic, we use~$V_i=\walks[i]{G}$, for~$i\in\N$, and~$(w_i)_{i\in\N}$ can be set as any walk sequence such that~$|w_i|=i$ and~$w_{i}$ is a prefix of~$w_{i+1}$ (e.g., following any cycle in any nontrivial scc). 
If~$G$ is acyclic, the proof is the same, except we replace~$\N$ by $[n+1]$ where~$n$ is the length of the longest walk in~$G$.

\begin{toappendix}
  We are now ready to show \cref{p:incr-len-orderability<=layer-orderability}:
    \begin{proof}[Proof of \cref{p:incr-len-orderability<=layer-orderability}]
    Let~$G$ be a graph and~$k$ an integer such that the undirected graph~$(\walks[\ell]{G},\pprel{k})$ is connected for every~$\ell$.
    We fix~$V_\ell=\walks[\ell]{G}$, for every~$\ell\in\N$.
    
    We first treat the case where~$G$ is cyclic, in which case there exists a walk sequence~$(w_i)_{i\in\N}$ such that~$w_0=\varepsilon$ and~$w_{i+1}=w_i\cdot e_i$, for each~$i\in\N$ (e.g., following any cycle in any nontrivial scc).
    We apply \cref{p:layer:even-odd} to the (infinite) undirected graph~$(\walks{G},\pprel{k})$,
    and sequences~$(V_i)_{i\in\N}$,~$(w_i)_{i\in\N}$.
    By construction, $(V_i)_{i\in\N}$ is a partition of $\walks{G}$, and Item~\ref{p:layer:even-odd}\ref{p:layer:even-odd:pre:subgraph-connected} holds by hypothesis.
    For every~$i\in\N$, we have~$w_{i+1}\pprel{1}w_{i}$ by construction, hence also~$w_{i+1}\left(\pprel{k}\right)^{\leq 2}w_{i}$; in other words Item~\ref{p:layer:even-odd}\ref{p:layer:even-odd:pre:direct-seq} holds.
    \cref{p:layer:even-odd} yields the $3k$-bounded ordering of~$\walks{G}$ since $(\pprel{k})^{\leq3}$ is equal to~$\pprel{3k}$.
    
    The case where~$G$ is acyclic is similar.  Let~$n$ be the length of the longest walk in~$G$, $I=[n+1]$. Hence in particular,~$\biguplus_{i\in I} V_i = \walks{G}$.
    For every~$i\leq n$, let $w_i$ be the prefix of length~$i$ of an arbitrary walk of length~$n$ in~$G$. Apply \cref{p:layer:even-odd} to the (finite) undirected graph~$(\walks{G},\pprel{k})$,
    and sequences~$(V_i)_{i\in I}$,~$(w_i)_{i\in I}$.
    \end{proof}
\end{toappendix}

\subsection{Peripherality}
\label{s:peripherality:def}

We now define the \emph{peripherality} and \emph{eventual peripherality} of
a graph, by first defining the \emph{peripherality} of a walk. The
peripherality of a walk is equal to the number of edges that must be popped 
to obtain a walk that can reach or be reached from a nontrivial scc. Formally:

\begin{definition}\label{d:peripherality}
    Let $w$ be a walk. The \defemph{peripherality of $w$}, denoted $\peri(w)$, is the length of the shortest walk $u$ such that $w = u\cdot v$ and $v$ is of type $\I^?\R^?$, or $w = v\cdot u$ and $v$ is of type $\L^?\I^?$.
We then define the \defemph{peripherality}~$\peri(G)$ of a graph~$G$
    and the \defemph{eventual peripherality}~$\evperi(G)$ \defemph{of a cyclic graph}~$G$ as follows.
    \begin{align}
    \label{eq:peri}
        \peri(G)={}&\max\{\peri(w)\mid w\in\walks{G}\}\\
        \evperi(G)={}&\lim_{\ell\to\infty}\max\{\peri(w)\mid w\in\walks[\geq\ell]{G}\}
        \label{eq:evperi}
    \end{align}
    We conventionally define the \defemph{eventual peripherality} of an acyclic graph to be equal to~$0$.
\end{definition}

As a direct consequence of \cref{pp:walk-type}, 
if we denote by $|w|_X$ the number of edges of kind $X$ in $w$, the peripherality of a walk can be computed as in \cref{p:peri-direct:classic}, below.
Moreover, 
it is easy to show that eventual peripherality can be computed as in \cref{p:peri-direct:eve}.
Together, they imply that peripherality and eventual peripherality are always well defined.

\begin{propertyrep} \label{p:peri-direct}
\begin{enumerate}[a.]
    \item \label{p:peri-direct:classic} For every walk~$w$, $\peri(w) = |w|_\F + \min\big (|w|_\L, |w|_\R\big)$.
    \item \label{p:peri-direct:eve} For every cyclic graph~$G$, $\evperi(G) = \max\big\{\peri(w)\;\big|\; w\in W\big\}$, where~$W$ is the subset of~$\walks{G}$ of those walks that encounter some nontrivial scc.
\end{enumerate}
\end{propertyrep}

\begin{proof}
  Item~\ref{p:peri-direct:classic} is a direct consequence of
  \cref{pp:walk-type}\ref{pp:walk-type:supertypes}, so we focus on
  item~\ref{p:peri-direct:eve}, that is on showing the following equality (where $W$ is defined in the statement).
  \begin{equation}
    \underbrace{\lim_{\ell\to\infty}\max\{\peri(w)\mid w\in\walks[\geq\ell]{G}\}}_{p_1} = \underbrace{\max\big\{\peri(w)\;\big|\; w\in W\big\}}_{p_2}
  \end{equation}
  
  ($p_1$ and $p_2$ are well defined) No walk in the graph can repeat a $\L$-edge, $\R$-edge nor an~$\F$-edge (a repeated edge is always of kind $\I$).
  Hence the peripherality of a walk is at most~$|F|+\min(|R|,|L|)$ where~$L$
  (resp.~$R$, resp~$F$) is the set of $\L$-edges (resp.~$\R$-edges, resp.~$\F$-edges) in the graph. 
  
  ($p_1\leq p_2$)  By definition, there exists a walk~$w$ such that~$\peri(w)=p_1$ and~$w>|V|$. In particular,~$w$ encounter a nontrivial scc hence~$w\in W$, yielding~$p_1\leq p_2$.
  
  ($p_1 \geq p_2$) Whenever
  a walk $w$ encounters a nontrivial scc and satisfy~$|w|=p_2$, then some vertex $v$ of a nontrivial scc must
  occur in~$w$.
  Thanks to the definition of a nontrivial scc, we can then obtain arbitrarily large walks from~$w$
  by inserting arbitrarily long $\I$-walks from~$v$ to itself. By
  item~\ref{p:peri-direct:classic}, the resulting walks have the same
  peripherality, so indeed the peripherality of~$w$ is reflected in~$p_1$ hence satisfying~$p_2=  |w| \leq p_1$.
\end{proof}

For instance, the graph in \cref{f:inner-connected:man} has peripherality~$2$,
attained by the walks~$\b6\b5\b4$ and~$\b6\b5$.
None of them encounters a nontrivial scc, so the eventual
peripherality of this graph is~$1$, witnessed, e.g., by the walks~$\b1\b2$ or $\b1\b3\b4$.
The graph in \cref{f:inner-disconnected2} has peripherality~$3$,
attained by~$\b1\b2\b3\b4$,
but has eventual peripherality~$0$.
The graph in \cref{f:inner-connected:x} 
has peripherality~$1$ and eventual
peripherality~$0$, and all the remaining graphs
in~\cref{f:inner-disconnected,f:inner-connected} have peripherality~$0$.
Note also that the peripherality of an acyclic graph is the length of its longest walk.

Finally, we illustrate in \cref{f:high-peripherality} some families of cyclic graphs whose peripherality is linear in the size
of the graph.
The peripherality of the graph in \cref{f:high-peripherality:finitary}
is attained by the bottom finitary walk; its eventual peripherality is~$0$.
The peripherality of the graph in \cref{f:high-peripherality:eventual}
is $k$ and it is attained by a walk that encounters the nontrivial scc (the right vertex with a
self-loop), hence this graph also has an eventual peripherality of~$k$, i.e., linear in the size of the graph.
The peripherality of the graph in \cref{f:high-peripherality:no-finitary} is attained only by one walk; although this walk contains an inner vertex, it does \textbf{not} encounter a nontrivial scc; hence the eventual peripherality of this graph is in fact~$0$. 
These families showcase the worst cases for peripherality and eventual peripherality: by the pigeonhole principle, it is easily seen that the following inequalities hold for every graph~$G=(V,E)$:
\begin{lemmarep}\label{l:peri-max}
We have: $\peri(G) < |V|$ and $\evperi(G) < \frac{|V|}{2}$.
\end{lemmarep}

\begin{proof}
We distinguish between walks of 
type $\L^?\F\R^?$ and of
type $\L^?\I^?\R^?$.

For a walk $w$ of type $\L^?\F\R^?$, we show an upper bound of $\peri(w) < |V|$. Indeed, any walk of length $\geq |V|$ must contain a repeated vertex, hence it must contain an $\I$-edge. Thus, walks $w$ of type $\L^?\F\R^?$ have length $< |V|$, and since we have $\peri(w) \leq |w|$, we conclude.

For a walk $w$ of type $\L^?\I^?\R^?$, we show an upper bound of $\peri(w) < |V|/2$.
Write the $\L^?\I^?\R^?$-factorization of~$w$ 
as $w = w_\ell w_i w_r$. By 
\cref{p:peri-direct}\ref{p:peri-direct:classic}, we have $\peri(w) = \min(|w_\ell|, |w_r|)$. Hence, assume by contradiction that $\peri(w) \geq |V|/2$, then we have $|w_\ell|, |w_r| \geq |V|/2$.
In particular~$w_\ell$ and~$w_r$ collectively have~$|V|+2$ vertex occurrences.
By the pigeonhole principle, 3 of these occurrences must be part of nontrivial sccs, hence of kind~$\I$.  Since a walk between $\I$-vertices is of type $\I$, then either~$w_\ell$ or~$w_r$ contains an $\I$-edge.

The statement then follows by definition of $\peri(G)$ (\cref{eq:peri} in \cref{d:peripherality}) and by 
\cref{p:peri-direct}\ref{p:peri-direct:eve} for $\evperi(G)$ using the fact that walks of type $\L^?\F\R^?$ cannot encounter a non-trivial scc. 
This last point is because the only vertices in a non-trivial scc that could occur in such a walk would be at the end of the $\L^?$ factor or at the beginning of the $\R^?$ factor, but then this would contradict the fact that the middle edges are of type~$\F$.
\end{proof}
Since by definition, $\evperi(G)\leq \peri(G)$, we have just shown \cref{t:pp-rank:param} of \cref{t:pp-rank}.

\begin{figure}
\subfloat[]{\label{f:high-peripherality:finitary}\hspace*{3.5ex}\begin{tikzpicture}
    \newcommand{\dist}{10mm}
    \input{figure/graphstyle}
    
    \path node[vertex,inner] (v1) {} ++(0:4.25*\dist) 
          node[vertex,inner] (vk) {};
    \path (v1) ++(-90:\dist) node[vertex,outer] (v1p){};
    \path (vk) ++(-90:\dist) node[vertex,outer] (vkp){};
    \path (v1p) ++(0:\dist) node[vertex,finitary] (v2p) {};
    \path (v2p) ++(0:\dist) node[vertex,opacity=0] (v3p) {};
    \path (vkp) ++(180:\dist) node[vertex,opacity=0] (vkmop) {};

    \path[edge,inner] (vk) to (v1);
    \begin{pgfinterruptboundingbox}
        \path[edge,inner,north loop] (v1) to (v1);
        \path[edge,inner,north loop] (vk) to (vk);
    \end{pgfinterruptboundingbox}
    \path[edge,outer] (v1p) to (v1);
    \path[edge,outer] (vk) to (vkp);
    \path[edge,finitary] (v1p) to (v2p);
    \path[edge,finitary] (v2p) to (v3p);
    \path[edge,finitary] (vkmop) to (vkp);
    \path (v3p) to node[midway,finitary] {$\cdots$} (vkmop);
\end{tikzpicture}\vspace{-1.6em}}
\hspace*{11mm}\subfloat[]{\label{f:high-peripherality:eventual}\hspace*{4ex}\begin{tikzpicture}
    \newcommand{\dist}{10mm}\input{figure/graphstyle}

    \path node[vertex,outer] (v1) {} 
            ++(0:\dist) node [vertex, outer] (v2){}
            ++(0:\dist) node [vertex, outer,opacity=0] (v3){}
            ++(0:1.25*\dist) node [vertex, outer,opacity=0] (vkmo){}
            ++(0:\dist) node [vertex, outer] (vk) {}
            ++(-30:\dist) node [vertex, inner] (r) {}
            ++(-150:\dist) node [vertex, outer] (v1p) {}
            ++(-180:\dist) node [vertex, outer] (v2p) {}
            ++(-180:\dist) node [vertex, outer,opacity=0] (v3p) {}
            ++(-180:1.25*\dist) node [vertex, outer,opacity=0] (vkmop) {}
            ++(-180:\dist) node [vertex, outer] (vkp) {}
    ;

    \path[edge,outer] (v1) to (v2);
    
    \path[edge,outer] (v2) to (v3);
    \path (v3) to node[midway,outer] {$\cdots$} (vkmo);
    \path[edge,outer] (vkmo) to (vk);
    \path[edge,outer] (vk) to (r);
    \path[edge,outer] (r) to (v1p);
    \path[edge,outer] (v1p) to (v2p);
    \path[edge,outer] (v2p) to (v3p);
    \path (v3p) to node[midway,outer] {$\cdots$} (vkmop);
    \path[edge,outer] (vkmop) to (vkp);

    \begin{pgfinterruptboundingbox}
        \path (v1.north) node[xshift=-2mm,anchor=south] {$v_{-k}$};
        \path (v2.north) node[anchor=south] {$v_{-k+1}$};
        \path (vk.north) node[anchor=south] {$v_{-1}$};
        \path (r.south east) node[anchor=north] {$v_{0}$};
        \path (v1p.south) node[anchor=north] {$v_{1}$};
        \path (v2p.south) node[anchor=north] {$v_{2}$};
        \path (vkp.south) node[anchor=north] {$v_{k}$};
    \end{pgfinterruptboundingbox}

    \path[edge,inner,north east loop] (r) to (r);
\end{tikzpicture}%
\vspace{-1.6em}}

\bigskip\bigskip

\subfloat[]{\label{f:high-peripherality:no-finitary}\hspace*{3.5ex}\begin{tikzpicture}
    \newcommand{\dist}{10mm}\input{figure/graphstyle}

    \path node[vertex,outer] (v1) {} 
            ++(0:\dist) node[vertex, outer] (v2){}
            ++(0:\dist) node[vertex, outer,opacity=0] (v3){}
            ++(0:1.25*\dist) node[vertex, outer,opacity=0] (vkmo){}
            ++(0:\dist) node[vertex, outer] (vk) {}
            ++(-90:\dist) node[vertex, inner] (r) {}
            ++(-90:\dist) node[vertex, outer] (v1p) {}
            ++(180:\dist) node[vertex, outer] (v2p) {}
            ++(180:\dist) node[vertex, outer,opacity=0] (v3p) {}
            ++(180:1.25*\dist) node[vertex, outer,opacity=0] (vkmop) {}
            ++(180:\dist) node[vertex, outer] (vkp) {}
    ;
    
    \path (r) ++(150:\dist) node[vertex, inner] (rsource) {};
    \path (r) ++(-150:\dist) node[vertex, inner] (rtarget) {};

    \path[edge,outer] (v1) to (v2);
    \path[edge,outer] (v2) to (v3);
    \path (v3) to node[midway,outer] {$\cdots$} (vkmo);
    \path[edge,outer] (vkmo) to (vk);
    \path[edge,outer] (vk) to (r);
    \path[edge,outer] (r) to (v1p);
    \path[edge,outer] (v1p) to (v2p);
    \path[edge,outer] (v2p) to (v3p);
    \path (v3p) to node[midway,outer] {$\cdots$} (vkmop);
    \path[edge,outer] (vkmop) to (vkp);

    \path[edge,inner] (rsource) to (r); 
    \path[edge,inner] (r) to (rtarget);
    \path[edge,inner,west loop] (rsource) to (rsource);
    \path[edge,inner,west loop] (rtarget) to (rtarget);
\end{tikzpicture}%
\vspace{-1.6em}}\hspace*{12mm}\subfloat[]{\label{f:high-peripherality:low-orderability}\hspace*{1ex}\begin{tikzpicture}
    \newcommand{\dist}{10mm}\input{figure/graphstyle}

    \path 
node[vertex, finitary] (v2){}
            ++(0:\dist) node[vertex, finitary,opacity=0] (v3){}
            ++(0:1.25*\dist) node[vertex, finitary,opacity=0] (vkm1){}
            ++(0:\dist) node[vertex, finitary] (vk) {}
            ++(0:\dist) node[vertex, finitary] (vkp1) {}
            ++(0:\dist) node[vertex, finitary, opacity=0] (vkp2){}
            ++(0:1.25*\dist) node[vertex, finitary, opacity=0] (v2k){}
            ++(0:\dist) node[vertex, finitary] (v2kp1){}
    ;

    \path[edge,finitary] (v2) to  (v3);
    \path (v3) to node[midway,finitary] {$\cdots$} (vkm1);
    \path[edge,finitary] (vkm1) to (vk);
    \path[edge,finitary] (vk) to node[pos=.4,above] {$a$} (vkp1);
    \path[edge,finitary] (vkp1) to (vkp2);
    \path (vkp2) to node[midway,finitary] {$\cdots$} (v2k);
    \path[edge,finitary] (v2k) to (v2kp1);

    \path (v3) ++(-90:\dist) 
node[vertex, outer] (u2){}
            ++(0:\dist) node[vertex, outer,opacity=0] (u3){}
            ++(0:1.25*\dist) node[vertex, outer,opacity=0] (ukm1){}
            ++(0:\dist) node[vertex, outer] (uk) {}
            ++(0:\dist) node[vertex, inner] (ukp1) {};

    \path[edge,finitary] (uk) to node[pos=.4,right] {$b$} (vkp1);
    
    \path[edge,outer] (u2) to (u3);
    \path (u3) to node[midway,outer] {$\cdots$} (ukm1);
    \path[edge,outer] (ukm1) to (uk);
    \path[edge,outer] (uk) to node[pos=.4,below] {$c$} (ukp1);

\path[edge,inner,east loop] (ukp1) to node[right] {$d$} (ukp1);

    \draw [decorate,decoration={brace,amplitude=.6em,raise=.75em}]
        (v2) -- (vk) node[midway,yshift=2em]{\small a walk $w_1$ s.t.~$|w_1|=k$};
    \draw [decorate,decoration={brace,amplitude=.6em,raise=.75em}]
        (vkp1) -- (v2kp1) node[midway,yshift=2em]{\small a walk $w_2$ s.t.~$|w_2|=k$};
    \draw [decorate,decoration={brace,amplitude=.6em,mirror,raise=0.75em}]
        (u2) -- (uk) node[midway,yshift=-2em]{\small a walk $w_1'$ s.t.~$|w_1'|=k$};
    
\end{tikzpicture}\vspace{-1.95em}}\caption{Families of (inner-connected cyclic) graphs with high peripherality. Graphs in family~(b) moreover have a high eventual peripherality.}
\label{f:high-peripherality}
\end{figure}

\subsection{Walks with Peripherality Zero}
\label{s:peripherality:zero}
\begin{toappendix}
    \subsection{Walks with Peripherality Zero}
\end{toappendix}

Next, we focus on walks with peripherality~$0$, i.e., of type~$\L^?\I^?$
or~$\I^?\R^?$.  We show that, in inner-connected graphs, any two such walks of equal length are related
by~$\ppreleq*2$, as stated next. This result will be used 
afterwards to show the upper bounds of~\cref{t:pp-rank},~\cref{t:pp-rank:classic,t:pp-rank:incr,t:pp-rank:eve-incr}.

\begin{propositionrep}\label{prop:inner-connected=>peripherality-0-connected}
    Let~$G$ be an inner-connected graph and~$w,w'$ two walks in~$G$.
    If~$w$ and~$w'$ have peripherality~$0$ and~$|w|=|w'|$,
    then~$w \ppreleq*2 w'$.
\end{propositionrep}
\begin{proofsketch}
    Let~$S$ be the set of walks of~$G$ that entirely belong to a single nontrivial scc;
    the scc is not fixed, so that $w \in S$ whenever there is a nontrivial scc
    containing every edge of~$w$. \cref{prop:inner-connected=>peripherality-0-connected} follows directly from the \cref{clm:strong-crawling,clm:caterpillar}, stated next.

    \begin{claim}\label{clm:strong-crawling}
        For every nonempty walk~$w$ of peripherality~$0$, there exists a
        walk~$w'\in S$ such that~$w\ppreleq*{2}w'$.
    \end{claim}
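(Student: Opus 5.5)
The idea is to ``crawl'' $w$ into a nontrivial scc one edge at a time. At each step we pop one edge at one end and push one edge at the other end. Every intermediate walk then has length $|w|$, so each step is a $\ppreleq{2}$-step, and a chain of such steps witnesses $w\ppreleq*{2}w'$.

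\textbf{Case $w$ of type $\L^?\I^?$.} Since $\peri(w)=0$ and $w$ is nonempty, the definition of peripherality lets us assume $w$ has type $\L^?\I^?$ or $\I^?\R^?$; we treat the first case here. Its last edge $e$ is of kind $\L$ or $\I$, so it satisfies \eqref{eq:kind:before-scc}: $e$ lies on a walk that ends in some nontrivial scc $C$. Hence there is a (possibly empty) walk $p$ from $\tgt(w)$ to a vertex $v\in C$. Because $C$ is nontrivial, there is a nonempty closed walk $c$ from $v$ to $v$ using only vertices of $C$ (a self-loop if $C$ is a single looped vertex). Let $x = p\cdot c\cdot c\cdot c\cdots$ be the resulting infinite walk, so that $w\cdot x$ is an infinite walk.

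Write $w=e_1\cdots e_n$ and $x=x_1x_2\cdots$. For $j\ge 0$, let $w_j$ be the length-$n$ window of $w\cdot x$ starting at position $j+1$. Then $w_0=w$, and $w_{j+1}$ is obtained from $w_j$ by a pop-left followed by a push-right of $x_{j+1}$. This push is legal because $w\cdot x$ is a walk, and $|w_{j+1}|=|w_j|$. Hence $w_j\ppreleq{2}w_{j+1}$ for every $j$.

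Take $m=n+|p|$. The window $w_m$ consists of the first $n$ edges of $c\,c\,c\cdots$, so all its vertices lie in $C$ and $w_m\in S$. The chain $w_0,\dots,w_m$ therefore gives $w\ppreleq*{2}w_m$.

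\textbf{Case $w$ of type $\I^?\R^?$.} This is symmetric. The first edge of $w$ satisfies \eqref{eq:kind:after-scc}, so $\src(w)$ is reachable from a vertex of some nontrivial scc $C$ via a walk $p$. We prepend a left-infinite sequence of copies of a closed walk in $C$, followed by $p$, and crawl leftwards: at each step we push-left and pop-right.

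The only step needing care is checking that the kind of the extreme edge really yields a walk from $\tgt(w)$ into a nontrivial scc (respectively, into $\src(w)$ from one). This follows by taking the suffix, after $e$, of the walk given by \eqref{eq:kind:before-scc}. We must also ensure the final window lies entirely inside a single nontrivial scc, which is guaranteed by cycling in $C$ for at least $n$ edges. The rest is routine.
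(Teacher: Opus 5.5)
Your proof is correct and follows essentially the paper's route. The paper also slides a length-$|w|$ window along a walk joining $w$ to a cycle of a nontrivial scc $C$; it just packages this sliding step as a separate auxiliary claim (two equal-length walks, the source of one reachable from the target of the other, are $\ppreleq*{2}$-related) and writes out the $\I^?\R^?$ case rather than your $\L^?\I^?$ case.
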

    \begin{claim}
        \label{clm:caterpillar}
        For any two walks~$w,w'\in S$ of the same length, we have~$w \ppreleq*2 w'$.
    \end{claim}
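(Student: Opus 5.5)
The plan is to reduce everything to one basic move, \emph{sliding a window}. Suppose $x$ is a walk and $n \leq |x|$. For $0 \leq i \leq |x|-n$, let $x[i,i+n)$ denote the factor of $x$ of length $n$ starting at position $i$. Then $x[i,i+n) \ppreleq{2} x[i+1,i+n+1)$: one pop-left followed by one push-right. The intermediate walk has length $n-1$, which is allowed because $\ppreleq{2}$ only constrains the two endpoints. Hence $x[0,n) \ppreleq*2 x[|x|-n,|x|)$, and every intermediate walk in the chain has length $n$. The whole proof consists of choosing suitable long walks $x$ that start with $w$ and end with $w'$.

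If $n=0$ then $w=w'=\varepsilon$ and there is nothing to show, so assume $n \geq 1$. Let $C$ and $D$ be nontrivial sccs containing all edges of $w$ and of $w'$ respectively.

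\emph{Case $C=D$.} By strong connectedness there is a walk $p$ inside $C$ from $\tgt(w)$ to $\src(w')$. Take $x = w\cdot p\cdot w'$ and slide the window from $w$ to $w'$. The intermediate walks even stay in $S$, although this is not needed.

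\emph{Case where there is a walk $q$ from some vertex of $C$ to some vertex of $D$.} Let $c$ be a walk of length $n$ in $C$ ending at $\src(q)$; it exists because we can loop inside the nontrivial scc $C$ backwards from $\src(q)$. Let $d$ be a walk of length $n$ in $D$ starting at $\tgt(q)$. Sliding along $c\cdot q\cdot d$ gives $c \ppreleq*2 d$. The intermediate walks leave $S$, but they all have length $n$, which is all that $\ppreleq*2$ requires. Combining this with the first case, applied to the pair $(w,c)$ and to the pair $(d,w')$, and using transitivity and symmetry of $\ppreleq*2$, gives $w \ppreleq*2 w'$.

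\emph{General case.} It remains to show that the graph $\mathcal{H}$ is connected. Its vertices are the nontrivial sccs, with $C$ and $D$ adjacent when some directed walk goes from one to the other. This is the step I expect to need the most care, and it is where inner-connectedness is used. I would argue it as follows.

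First, every inner object $o$ lies on a walk that starts in some nontrivial scc $C_o$ and ends in some nontrivial scc $D_o$. By definition, $o$ lies on a walk starting in a nontrivial scc and on a walk ending in one. Cut the first walk at $o$ and the second walk after $o$, then glue the two pieces, so that $C_o$ reaches $D_o$ through $o$. In particular, $C_o$ and $D_o$ are adjacent or equal in $\mathcal{H}$.

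Next, take two nontrivial sccs $C$ and $D$. Since the inner subgraph is weakly connected, there is an undirected chain of inner vertices and inner edges from a vertex of $C$ to a vertex of $D$. Consecutive objects $o,o'$ in this chain share a vertex $v$. The scc reaching $o$ therefore also reaches $v$, and $v$ lies on the walk through $o'$. So the scc reaching $o$ also reaches $D_{o'}$, and hence the classes $\{C_o,D_o\}$ and $\{C_{o'},D_{o'}\}$ are linked in $\mathcal{H}$. For the endpoints of the chain, a vertex of $C$ lies in $C$ itself, so we may take $C_o=C$.

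Finally, chaining these links gives a path in $\mathcal{H}$ from $C$ to $D$. Applying the previous case along each edge of this path, and using transitivity, yields $w \ppreleq*2 w'$.
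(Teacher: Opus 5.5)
Your proof is correct and is essentially the paper's argument: slide a length-$n$ window along a connecting walk (the paper's crawling claim, which only needs $\src(w')$ to be reachable from $\tgt(w)$), and chain this along a sequence of nontrivial sccs, consecutive ones being related by reachability, whose existence comes from inner-connectedness. There are two small differences. Your detour through $c$ and $d$ is unnecessary, since you can slide directly along $w\cdot u\cdot q\cdot u'\cdot w'$ with $u$ and $u'$ walks inside $C$ and $D$. And your explicit proof that $\mathcal{H}$ is connected spells out a step the paper only asserts, namely that weak connectedness of the condensation of the inner subgraph yields such a chain of nontrivial sccs.
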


    \cref{clm:strong-crawling} relies on the fact that any walk~$w$ with a peripherality of~0 is the prefix (or the suffix, which we handle symmetrically) of a walk~$w\cdot w''$ that ends in a nontrivial scc~$C$. Given any walk~$w'$ that is entirely in~$C$, starts in~$\tgt(w\cdot w'')$ and satisfies~$|w|=|w'|$, the claim follows from the fact that consecutive factors of length~$|w|$ in~$w\cdot w'' \cdot w'$ are related by~$w\ppreleq{2}w'$.
    Now, for the proof of \cref{clm:caterpillar}, the reasoning depends on the relationship between the scc~$C$ associated with~$w$  and the scc $C'$ associated with~$w'$.
    In the special case where $C'$ is reachable from~$C$ (including $C=C'$), we show \cref{clm:caterpillar} just like \cref{clm:strong-crawling}, using the fact that there is a walk~$w''$ such that~$w\cdot w''\cdot w'$ is a valid walk. This special case corresponds to the case where~$C$ can reach~$C'$ in the 
    condensation~$D$
of the inner subgraph~$I$ of~$G$. 
    By iterating this reasoning, we show \cref{clm:caterpillar} in the general case by following a chain connecting~$C$ and~$C'$ in~$D$, which exists thanks to our assumption that $I$ is weakly connected (because $G$ is inner-connected).
\end{proofsketch}

\begin{proof}
Let~$S$ be the set of walks of~$G$ that entirely belong to a single nontrivial scc;
the scc is not fixed, so that $w \in S$ whenever there is a nontrivial scc
containing every edge of~$w$. The proof proceeds in two steps: every walk of
peripherality~$0$ is related by~$\ppreleq*2$ to some walk in~$S$
(\cref{clm:strong-crawling}), and any two walks of~$S$ of the same length are
related by~$\ppreleq*2$ (\cref{clm:caterpillar}). Both steps rely
on~\cref{clm:crawling}, which we prove first.

\begin{claim}\label{clm:crawling}
    Let~$w_1,w_2$ be two walks of the same length.
    If the source of~$w_2$ is reachable in $G$ from the target of~$w_1$,
    then~$w_1\ppreleq*{2}w_2$.
\end{claim}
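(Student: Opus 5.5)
The plan is to build an explicit chain of equal-length walks from $w_1$ to $w_2$, each consecutive pair differing by one push-right and one pop-left. Set $n=|w_1|=|w_2|$. If $n=0$, both walks are $\varepsilon$ and there is nothing to prove, so assume $n\geq 1$.

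By hypothesis, $\src(w_2)$ is reachable from $\tgt(w_1)$. So fix a (possibly empty) walk $u$ from $\tgt(w_1)$ to $\src(w_2)$; if $\tgt(w_1)=\src(w_2)$, take $u=\varepsilon$. Then $z\coloneq w_1\cdot u\cdot w_2$ is a valid walk of length $2n+|u|$. For each $0\leq j\leq n+|u|$, let $z_j$ be the factor of $z$ of length $n$ starting at position $j$. This gives $z_0=w_1$ and $z_{n+|u|}=w_2$.

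Each step $z_j \to z_{j+1}$ consists of pushing the next edge of $z$ on the right and popping the first edge on the left. Pushing first yields the factor of length $n+1$ starting at position $j$, which is a walk since it is a factor of $z$. Popping then yields $z_{j+1}$. Hence $z_j\pprel{2}z_{j+1}$. Since $|z_j|=|z_{j+1}|=n$, this gives $z_j\ppreleq{2}z_{j+1}$. Chaining these steps gives $w_1\ppreleq*{2}w_2$.

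The only point that needs care is that every intermediate object is a genuine walk: all factors of the walk $z$ are walks, and the push is performed before the pop, so no step passes through something that is not a factor of $z$. The case $u=\varepsilon$ is covered by the same argument, because concatenation is defined when $\tgt(w_1)=\src(w_2)$. I do not expect any real obstacle here; the claim is essentially a sliding-window argument.
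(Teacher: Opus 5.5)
Your proposal is correct and is essentially the paper's own proof: concatenate $w_1\cdot u\cdot w_2$ and slide a length-$n$ window one position at a time, each step being one push-right and one pop-left, hence related by $\ppreleq{2}$. Your explicit handling of $n=0$ and of the intermediate walks is a harmless extra; popping before pushing would also pass only through factors of $z$.
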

\begin{claimproof}[Proof of \cref{clm:crawling}]
    We let~$n$ denote the length of~$w_1$ and~$w_2$.
    The hypotheses imply the existence of a walk~$w'$ such that $w=w_1\cdot w' \cdot w_2$ is a walk of~$G$.
    Let~$k$ be the length of~$w'$.
    For every~$i$ in $\{0, \ldots, n+k\}$ we denote by~$w=p^{(i)}w^{(i)}s^{(i)}$
    the unique factorization of~$w$ such that~$p^{(i)}$ is of length~$i$
    and~$w^{(i)}$ is of length~$n$.
    Note that~$w^{(0)}=w_1$ and~$w^{(n+k)}=w_2$.
    Since a pop-left and a push-right transform $w^{(i)}$ into $w^{(i+1)}$, we have~$w^{(i)} \ppreleq2 w^{(i+1)}$, for every~$i$, hence the statement follows.
\end{claimproof}

\begin{falseclaim}{\ref{clm:strong-crawling}}
    For every nonempty walk~$w$ of peripherality~$0$, there exists a
    walk~$w'\in S$ such that~$w\ppreleq*{2}w'$.
\end{falseclaim}
\begin{claimproof}[Proof of \cref{clm:strong-crawling}]
    A walk of peripherality~$0$ has type $\L^?\I^?$  or $\I^?\R^?$.
    By symmetry we only focus on the case where $w$ is a (nonempty)
$\I^?\R^?$-walk.
    Thus, by definition, $\src(w)$ satisfies \cref{eq:kind:after-scc}. Hence there exists a walk~$w''$ such that~$w''\cdot w$ starts in a nontrivial
    scc~$C$. Let~$w'$ be any walk of length~$|w|$ that ends at $\src(w'')$ and
    that entirely belongs to~$C$, as can be obtained following any cycle of the nontrivial scc~$C$. The claim follows by applying \cref{clm:crawling} to $w'$ and $w$.
\end{claimproof}

\begin{falseclaim}{\ref{clm:caterpillar}}
    For any two walks~$w,w'\in S$ of the same length, we have~$w \ppreleq*2 w'$.
\end{falseclaim}
\begin{claimproof}[Proof of~\cref{clm:caterpillar}]\newcommand{\locrel}[1][]{\mathrel{\sim^{#1}}}
    Let~$D$ be the condensation 
of the inner subgraph of~$G$.
    Let $\locrel$ be the relation on nontrivial sccs of $G$ defined by
    $C \locrel C'$ if there exists a path in~$D$ from $C$ to $C'$ or vice versa. 
    Since $G$ is inner-connected,~$D$ is weakly connected, hence $C \locrel[*] C'$ holds for every sccs~$C,C'$.
    
    The claim is obtained by an induction on~$n$ showing that $C \locrel[\leq n] C'$ implies that we have $w \ppreleq*2 w'$, for every walk $w$ that entirely belongs to~$C$ and $w'$ that entirely belongs to~$C'$ and has the same length as~$w$.
    \begin{itemize}
        \item The base case ($n=0$) implies~$C=C'$ hence~$w$ and~$w'$ both entirely belong to the same scc~$C$. We can then conclude using \cref{clm:crawling}
          because the source of $w'$ and the target of~$w$ are in the same
          scc~$C$.
        \item For the inductive step, assume $C \locrel[\leq n] C'' \locrel C'$,
          for some scc~$C''$. Let $w''$ be a walk of length $|w|$ that entirely belongs 
          to~$C''$.
        The inductive hypothesis shows 
that~$w \ppreleq*2 w''$.
        On the other hand, $C'' \locrel C'$ implies by definition that~$C'$ is reachable from~$C''$ in~$D$, or vice-versa.
        In the former case, the source of~$w'$ is reachable from the target of~$w''$.
        In the latter case, the source of~$w''$ is reachable from the target of~$w'$.
        In both cases, applying \cref{clm:crawling} yields~$w''\ppreleq*2 w'$.
        We conclude by the transitivity of
        $\ppreleq*2$.
\renewcommand\qedsymbol{\textcolor{lipicsGray}{\ensuremath{\vartriangleleft}}}\qedhere
    \end{itemize}
\end{claimproof}

\cref{prop:inner-connected=>peripherality-0-connected} then follows directly from 
    \cref{clm:strong-crawling,clm:caterpillar}.
\end{proof}

\subsection{Upper Bounds}
\label{s:peripherality:upper}
\begin{toappendix}
\subsection{Upper Bounds}
\end{toappendix}

We now show the main results from this section: upper bounds for pp-rank (\cref{p:inner-connected=>3-orderable}), pp-increasing-rank (\cref{p:inner-connected=>6k-increasing-orderable}) and eventual pp-increasing-rank (\cref{p:inner-connected=>eventual-6k-increasing-orderable}).

\begin{propositionrep}\label{p:inner-connected=>3-orderable}
    Every inner-connected graph is 3-orderable.
\end{propositionrep}
\begin{proofsketch}
    First note that, for any graph $G$ (cyclic or acyclic), the graph $\walks{G}$ is $\pprel1$-connected (every walk is $\pprel1^*$-related to the empty walk). Thus, when $G$ is acyclic then $\walks{G}$ is finite and \cref{t:even-odd} applies. We assume from now on that $G$ is cyclic.

    For every nonnegative integer~$k$, we write~$Y_k=\{ w\in \walks{G} \mid \peri(w)=0 \text{ and } |w|=k\}$.
    The following claim can be shown using \cref{prop:inner-connected=>peripherality-0-connected}.

    \begin{claim}\label{cl:inner-connected=>3-orderable-1}
        For every nonnegative integer~$k$,~$(Y_k\cup Y_{k+1}, \pprel1)$ is a connected graph.
    \end{claim}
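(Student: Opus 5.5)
The plan is to show that every walk of $Y_k\cup Y_{k+1}$ is linked to one fixed walk of $Y_k$ by a chain of single push or pop operations that never leaves $Y_k\cup Y_{k+1}$. Both sets are nonempty, because $G$ is cyclic and so following a cycle of a nontrivial scc gives walks of every length.

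\emph{Step~1 (from $Y_{k+1}$ down to $Y_k$).} Peripherality is monotone under taking factors: by \cref{p:peri-direct}\ref{p:peri-direct:classic}, $\peri(w)=|w|_\F+\min(|w|_\L,|w|_\R)$, and this quantity can only decrease when edges are removed from either end. So for $w\in Y_{k+1}$, popping its last edge gives a walk of $Y_k$ that is $\pprel{1}$-adjacent to $w$. It therefore suffices to show that $Y_k$ is connected through chains in $Y_k\cup Y_{k+1}$.

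\emph{Step~2 (connecting $Y_k$ internally).} I would not apply \cref{prop:inner-connected=>peripherality-0-connected} as a black box, because its $\ppreleq{2}$-chains say nothing about the peripherality of intermediate walks. Instead I would reopen its proof. There, every $\ppreleq{2}$-step comes from \cref{clm:crawling}: it passes between consecutive length-$k$ factors $w^{(i)}$ and $w^{(i+1)}$ of a walk $x=w_1\cdot w'\cdot w_2$. I will realise each step as a push-right followed by a pop-left. The intermediate walk is then the length-$(k+1)$ factor of $x$ starting at position $i$, and both endpoints are length-$k$ factors of $x$. So each step stays inside $Y_k\cup Y_{k+1}$ as soon as every factor of $x$ has peripherality~$0$.

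\emph{Step~3 (checking the guiding walks).} This is the main verification: every walk $x$ used in \cref{clm:strong-crawling} and \cref{clm:caterpillar} has all its factors of peripherality $0$.
\begin{itemize}
\item In \cref{clm:caterpillar}, $x$ starts in a nontrivial scc $C''$ and ends in another nontrivial scc $C'$ (or symmetrically). Hence every edge of $x$ lies on a walk from a nontrivial scc to a nontrivial scc, so every edge is of kind $\I$ and every factor has type $\I$.
\item In \cref{clm:strong-crawling}, with $w$ an $\I^?\R^?$-walk, $x=w'\cdot w''\cdot w$ starts in the nontrivial scc $C$. Every edge of $x$ therefore satisfies \eqref{eq:kind:after-scc}, so by \cref{pp:walk-type} $x$ has type $\I^?\R^?$. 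Every factor of $x$ inherits this type and so has peripherality $0$.
\item The $\L^?\I^?$ case is symmetric. There I would use pop-right then push-left instead (or the reversed crawl), again keeping the intermediate walk a length-$(k+1)$ factor.
\end{itemize}
Chaining these steps as in the proof of \cref{prop:inner-connected=>peripherality-0-connected} connects any two walks of $Y_k$ inside $(Y_k\cup Y_{k+1},\pprel{1})$.

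\emph{Edge cases.} For $k=0$, $Y_0=\{\varepsilon\}$, and every single edge of peripherality $0$ is a push away from $\varepsilon$, so the claim is immediate. Only the type bookkeeping of Step~3 needs care; I expect it to go through because all guiding walks start or end inside nontrivial sccs.
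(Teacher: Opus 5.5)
Your proof is correct, but it is organised differently from the paper's. The paper argues upward. It first shows $(\dagger)$: two walks of $Y_{k+1}$ related by $\ppreleq{2}$ are joined through their common length-$k$ factor (pop, then push). That factor lies in $Y_k$ automatically, since peripherality cannot increase on factors. It then lifts each walk of $Y_k$ into $Y_{k+1}$ by one push. Finally it applies $(\dagger)$ to each consecutive pair of a $\ppreleq{2}$-chain obtained from \cref{prop:inner-connected=>peripherality-0-connected} used as a black box.

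You argue downward and cross each crawling step through a length-$(k+1)$ factor. The peripherality of such a factor is not controlled by that of its length-$k$ prefix and suffix: an $\L$-edge followed by an $\R$-edge already has peripherality~$1$. This is why you must inspect the guiding walks.

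That inspection is needed in either direction. The paper's use of $(\dagger)$ tacitly requires every walk of the black-box chain to lie in $Y_{k+1}$. The statement of \cref{prop:inner-connected=>peripherality-0-connected} does not provide this. Your Step~3 supplies exactly the missing justification: every guiding walk of \cref{clm:strong-crawling,clm:caterpillar} starts or ends in a nontrivial scc, so all its factors have peripherality~$0$. Your route costs unpacking the proposition's proof, but it is the more self-contained of the two.

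One slip: in the $\L^?\I^?$ case, doing pop-right \emph{then} push-left passes through a walk of length $k-1$, which lies outside $Y_k\cup Y_{k+1}$. No change of operations is needed there anyway. The guiding walk $w\cdot w''\cdot w'$ ends in a nontrivial scc, so all its edges satisfy \eqref{eq:kind:before-scc} and it is of type $\L^?\I^?$. Hence all its factors have peripherality~$0$, and the push-right/pop-left crawl of your Step~2 applies verbatim.
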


     Now, we write~$Z_k = \{ w\in \walks{G} \mid \big( |w|-\peri(w)\big) = k \}$, for every~$k\in\N$.
     Notice that~$Z_k\supseteq Y_k$ and~$\biguplus_{i\in\N} Z_{i}=\walks{G}$.
     Note moreover that given a walk~$w$ such that~$\peri(w)>0$
     then either the prefix or the suffix of~$w$ of length~$\big(|w|-1\big)$, which we write~$w'$, has a peripherality of~$\big(\peri(w)-1\big)$.
     In particular,~$w\in Z_k$ implies that~$w'\in Z_k$, and by iterating this reasoning we can reach $Y_k$, so \cref{cl:inner-connected=>3-orderable-2} follows.
     
    \begin{claim}\label{cl:inner-connected=>3-orderable-2}
        For every nonnegative integer~$k$,~$(Z_k\cup Z_{k+1}, \pprel1)$ is a connected graph.
    \end{claim}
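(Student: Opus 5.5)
The claim follows by showing that every walk of $Z_k$ (resp.\ $Z_{k+1}$) is joined by a $\pprel1$-chain staying inside $Z_k$ (resp.\ $Z_{k+1}$) to some walk of $Y_k$ (resp.\ $Y_{k+1}$), and then invoking \cref{cl:inner-connected=>3-orderable-1}, which already says that $Y_k\cup Y_{k+1}$ is $\pprel1$-connected.

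The key step is a one-step reduction lemma. If $\peri(w)>0$, then popping one suitably chosen end edge of $w$ yields a walk $w'$ with $\peri(w')=\peri(w)-1$ and $|w'|=|w|-1$. Hence $w'\in Z_k$ whenever $w\in Z_k$, and $w\pprel1 w'$. I would prove it with the closed formula $\peri(w)=|w|_\F+\min(|w|_\L,|w|_\R)$ from \cref{p:peri-direct}\ref{p:peri-direct:classic}. This formula applies to any walk, in particular to prefixes and suffixes of $w$, since the kind of each edge is fixed. By \cref{pp:walk-type}\ref{pp:walk-type:supertypes}, write $w=w_\L w_M w_\R$, where $w_M$ is of type $\I^?$ or of type $\F$. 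There are three cases.
\begin{enumerate}
\item If $\min(|w_\L|,|w_\R|)>0$, pop the first edge when $|w_\L|\le|w_\R|$ and the last edge otherwise. This removes an edge from the smaller of the two outer counts, so the minimum drops by exactly one and $|w|_\F$ is unchanged.
\item If the minimum is $0$, say $w_\L$ is empty, and $w_M$ is of type $\F$, then $w_M$ is nonempty and the first edge of $w$ is an $\F$-edge. Pop it: $|w|_\F$ drops by one and the minimum stays $0$.
\item The remaining situation is a minimum of $0$ together with $w_M$ of type $\I^?$. Then $\peri(w)=0$, which is excluded.
\end{enumerate}
The case where $w_\R$ is empty is symmetric to the second case.

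Iterating the reduction lemma $\peri(w)$ times gives a $\pprel1$-chain inside $Z_k$ from $w$ to a walk of length $|w|-\peri(w)=k$ and peripherality $0$, that is, to a walk of $Y_k$. The same argument applies to $Z_{k+1}$ and $Y_{k+1}$. Since $Y_k\cup Y_{k+1}$ is connected, and both sets are nonempty because $G$ is cyclic (take prefixes of a long walk inside a nontrivial scc), the union $Z_k\cup Z_{k+1}$ is connected. The degenerate case $k=0$ needs no special treatment: $Y_0=\{\varepsilon\}$, and the argument goes through unchanged.

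The only delicate point is the second case: I must make sure that popping an $\F$-edge does not accidentally change the outer counts. The formula settles this, since the $\L$- and $\R$-counts of the shortened walk are unchanged. I must also make sure the popped edge really lies at an end of $w$, which holds because $w_\L$ is empty.
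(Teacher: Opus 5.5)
Your proposal is correct and follows the paper's proof. Both arguments pop one end edge to lower peripherality by exactly one while staying in the same $Z_k$ (resp.\ $Z_{k+1}$), iterate down to $Y_k\cup Y_{k+1}$, and conclude with \cref{cl:inner-connected=>3-orderable-1}; your three-case check via $\peri(w)=|w|_\F+\min(|w|_\L,|w|_\R)$ spells out the step the paper leaves as ``it may be verified''.
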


    Since~$G$ is cyclic, it contains an infinite walk that is of type~$\I$, e.g., following any cycle in a nontrivial scc. We denote by~$w_i$ the prefix of length~$2i$ of this walk, for~$i\in\N$; note that $w_i \in Y_{2i} \subseteq Z_{2i}$. 
    We write~$U=(\walks{G},\pprel{1})$ and,~$V_i=Z_{2i}\cup Z_{2i+1}$, for~$i\in\N$.
    Applying \cref{p:layer:even-odd} to $U$, $(V_i)_{i\in\N}$ and~$(w_i)_{i\in\N}$ yields a 3-bounded ordering of $\walks{G}$.
\end{proofsketch}
\begin{proof}
    If~$G$ is acyclic then the undirected graph~$U=(\walks{G},\pprel{1})$ is finite.  Moreover, $U$ features a chain from any walk~$w\in\walks{G}$ to the empty walk using the prefixes of~$w$, hence~$U$ is connected.
    A direct application of~\cref{t:even-odd} to~$U$ concludes the proof in this case.
    
    We assume~$G$ to be cyclic in the remainder of the proof.
    For every nonnegative integer~$k$, we define~$Y_k$ as follows.
    \begin{equation}
        Y_k=\{ w\in \walks{G} \mid \peri(w)=0 \text{ and } |w|=k\}
    \end{equation}
    We show the following claim using \cref{prop:inner-connected=>peripherality-0-connected}.

    \begin{falseclaim}{\ref{cl:inner-connected=>3-orderable-1}}
        For every nonnegative integer~$k$,~$(Y_k\cup Y_{k+1}, \pprel1)$ is a connected graph.
    \end{falseclaim}
    \begin{claimproof}[Proof of \cref{cl:inner-connected=>3-orderable-1}]
      Notice that $(Y_k\cup Y_{k+1}, \pprel1)$ is a bipartite graph. We first show the following: 
      \begin{itemize}
          \item[$(\dagger)$\hspace*{-2mm}] for~$w_1$ and~$w_2$ two walks in~$Y_{k+1}$ such that~$w_1\ppreleq{2}w_2$, we have that
        $w_1$ is connected to~$w_2$ in~$(Y_k\cup Y_{k+1},\pprel{1})$.
      \end{itemize}
Since~$k+1\geq 1$, there is a (possibly empty) walk~$w'$ and two edges~$e_1$ and~$e_2$ such that~$w_1$ and~$w_2$ are of one of the following shapes ($e_1$ is the edge that we remove and $e_2$ the one we add to go from $w_1$ to $w_2$):
        \begin{gather}
            w_1=e_1\cdot w' \text{~~or~~} w_1=w' \cdot e_1 \label{eq:t:3-orderable-1:i}\\
            w_2=e_2\cdot w' \text{~~or~~} w_2=w' \cdot e_2
            \label{eq:t:3-orderable-1:ii}
        \end{gather}
        In particular, $w_1\pprel{1} w' \pprel{1} w_2$.
        Note that $|w'|=k$ and since $\peri(w_1)=0$, one can check that $\peri(w')=0$ as well, so that we have~$w'\in Y_k$.  Hence~$w_1$ and~$w_2$ are indeed connected in~$(Y_k\cup Y_{k+1}, \pprel1)$, which concludes the proof of $(\dagger)$.
    
        To finish the proof of \cref{cl:inner-connected=>3-orderable-1}, let~$w$ and~$w'$ now be arbitrary walks in~$Y_k\cup Y_{k+1}$.  
        If~$|w|=k$, then since~$w$ is of peripherality~$0$, for reasons similar to the proof of \cref{clm:strong-crawling} there exists an edge~$e$ such that~$w''=w \cdot e$ or~$w''=e\cdot w$ is a walk of length~$k+1$ and of peripherality~$0$ which satisfies~$w\pprel{1} w''$.
        The same holds for~$w'$. Therefore, to show \cref{cl:inner-connected=>3-orderable-1}, it is enough to prove that for any two walks $w,w'$ of $Y_{k+1}$, these walks are connected in $(Y_k\cup Y_{k+1}, \pprel1)$.
        \cref{prop:inner-connected=>peripherality-0-connected} yields that~$w\ppreleq*{2} w'$, hence there exist walks~$w_1,\ldots,w_n$ such that:
        \begin{equation*}
            w=w_1\ppreleq{2}w_2 \ppreleq{2} \cdots \ppreleq{2} w_n=w'
        \end{equation*}
        Applying $(\dagger)$ to each pair~$(w_i,w_{i+1})$ we obtain that~$w_i$ is connected to~$w_{i+1}$ in~$(Y_k\cup Y_{k+1},\pprel{1})$. 
        It follows that~$w$ is connected to~$w'$ in~${(Y_k\cup Y_{k+1},\pprel{1})}$, concluding the proof of \cref{cl:inner-connected=>3-orderable-1}.
    \end{claimproof}

     Now, we write~$Z_k = \{ w\in \walks{G} \mid \big( |w|-\peri(w)\big) = k \}$, for every~$k\in\N$.
     Notice that~$Z_k\supseteq Y_k$ and~$\biguplus_{i\in\N} Z_{i}=\walks{G}$.
     We then extend \cref{cl:inner-connected=>3-orderable-1} as follows.
     
    \begin{falseclaim}{\ref{cl:inner-connected=>3-orderable-2}}
        For every nonnegative integer~$k$,~$(Z_k\cup Z_{k+1}, \pprel1)$ is a connected graph.
    \end{falseclaim}
    \begin{claimproof}[Proof of \cref{cl:inner-connected=>3-orderable-2}]
        For every walk~$w$ in~$Z_k\cup Z_{k+1}$ with~$\peri(w)>0$, there is an edge~$e$ and a walk~$w'$ such that~$\peri(w')=\peri(w)-1$ and either~$w=e\cdot w'$ or~$w=w'\cdot e$.
        It may be verified that~$w\pprel1 w'$ and~$w'\in Z_k\cup Z_{k+1}$.
        Inductively applying this reasoning shows that there is a walk~$w''\in Y_k\cup Y_{k+1}$ that is connected to~$w$ in~$(Z_k\cup Z_{k+1}, \pprel1)$.
        Now, since $(Y_k \cup Y_{k+1},\pprel1)$ is connected by~\cref{cl:inner-connected=>3-orderable-1}, the proof of \cref{cl:inner-connected=>3-orderable-2} follows.
    \end{claimproof}

    We will now finish the proof of \cref{p:inner-connected=>3-orderable} by applying \cref{p:layer:even-odd} as follows. 
    We write~$U=(\walks{G},\pprel{1})$ and~$V_i=Z_{2i}\cup Z_{2i+1}$, for each~$i\in\N$. 
    Since $(Z_i)_{i\in \N}$ is a partition of $\walks{G}$,  $(V_i)_{i \in \N}$ is also a partition. By \cref {cl:inner-connected=>3-orderable-2}, $(V_i,\pprel{1})$ is connected for every~$i\in\N$ thus condition \ref{p:layer:even-odd}\ref{p:layer:even-odd:pre:subgraph-connected} holds.
    Since~$G$ is cyclic, it contains an infinite walk~$w$ that is of type~$\I$, e.g., following any cycle in a nontrivial scc. 
    For every~$i\in \N$, we write~$w_i$ for the prefix of~$w$ of length~$2i$.
    The sequence~$(w_i)_{i\in\N}$ thus defined satisfies Condition~\ref{p:layer:even-odd}\ref{p:layer:even-odd:pre:direct-seq}: we have~$w_i\in Y_{2i}\subseteq Z_{2i}\subseteq V_i$ and $w_i\left(\pprel{1}\right)^{2} w_{i+1}$, for every~$i\in\N$.
    \cref{p:layer:even-odd} thus yields an ordering of $\walks{G}$ that is 3-bounded.   
\end{proof}

The ordering~$w_1,\ldots,w_n,\ldots$ we build in the proof of \cref{p:inner-connected=>3-orderable} is not increasing, but it is in some sense \emph{quasi-increasing}.
Indeed, the difference in length between a walk~$w_{\ell+i}$ and any walk~$w_{\ell}$ coming before has a constant lower bound:
\begin{equation}
\forall \ell,i\in\N\qquad|w_{\ell+i}|-|w_\ell{}| \geq \underbrace{-\peri(G)-1}_{(\ddagger)}
\end{equation}
In order to be increasing, the ordering would need to satisfy (\theequation)
with~$0$ instead of~$(\ddagger)$, something that is never ensured by \cref{p:inner-connected=>3-orderable} since~$\peri(G)\geq 0$.

We have now shown the upper bound of \cref{t:pp-rank},~\cref{t:pp-rank:classic}, which improves the bound of \cite{amarilli2023enumerating}. We now show the upper bounds
of \cref{t:pp-rank:incr,t:pp-rank:eve-incr}, which focus on 
ordering by increasing length, and which are shown using \cref{p:incr-len-orderability<=layer-orderability}.

\begin{propositionrep}\label{p:inner-connected=>6k-increasing-orderable}
    Every inner-connected graph~$G$ is~$6k$-orderable by increasing length, where~$k=\max(1,\peri(G))$.
\end{propositionrep}
\begin{proofsketch}
    From \cref{p:incr-len-orderability<=layer-orderability}, we only have to prove that~${(\walks[\ell]{G}, \ppreleq{2k})}$ is connected for every~$\ell$.
    This was established in the case where $\peri(G)=0$ by \cref{prop:inner-connected=>peripherality-0-connected}, and it is obvious in the case where~$G$ is acyclic.
    The remaining case of a cyclic graph~$G$ with $\peri(G)>0$ is handled by \cref{cl:inner-connected=>6k-increasing-orderable:cyclic} below. The claim is proved with a simple case analysis on whether~$w$ is of type~$\L\I^?\R$ or~$\L^?\F\R^?$.
\begin{claim}\label{cl:inner-connected=>6k-increasing-orderable:cyclic}
        If~$G$ is cyclic, then for every walk~$w$ in $G$ with positive peripherality, there exists a walk~$w'$ in~$G$ such that~$\peri(w')=0$ and~$w \ppreleq{2k} w'$.
    \end{claim}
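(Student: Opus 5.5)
The plan is to transform $w$ into a walk of the same length and peripherality~$0$ in two phases. First we pop $p\coloneq\peri(w)$ edges from one end of~$w$. Then we push $p$ edges at that same end. Since $1\le p\le\peri(G)\le k$, this takes $2p\le 2k$ operations. The resulting walk $w'$ has $|w'|=|w|$, so $w\ppreleq{2k}w'$. Intermediate walks may have other lengths, which is harmless because $\ppreleq{2k}$ only constrains the endpoints. By symmetry (reversing all edges swaps the kinds $\L$ and $\R$), it suffices to handle the case where we pop on the right and push on the right.

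\emph{Step 1 (extension lemma).} First we show the following. Let $v$ be a walk of type $\L^?\I^?$, or empty, and let $m\in\N$. Then there is a walk $x$ with $|x|=m$ such that $v\cdot x$ is of type $\L^?\I^?$, hence of peripherality~$0$.

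If $v$ is empty, we take for $x$ a prefix of length~$m$ of an infinite walk cycling inside some nontrivial scc. Such an scc exists because $G$ is cyclic.

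If $v$ is nonempty, its last edge satisfies \eqref{eq:kind:before-scc}, so $\tgt(v)$ can reach some nontrivial scc~$C$. We take for $x$ the length-$m$ prefix of an infinite walk that goes from $\tgt(v)$ to~$C$ and then cycles in~$C$. Every edge of $v\cdot x$ then lies on a walk ending in~$C$, so it satisfies \eqref{eq:kind:before-scc} and is of kind $\L$ or~$\I$. Moreover, by the implication used in the proof of \cref{pp:walk-type}, an edge following an $\I$-edge satisfies \eqref{eq:kind:after-scc}, so it is again of kind~$\I$. Hence $v\cdot x$ is of type $\L^?\I^?$.

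\emph{Step 2 (case analysis).} We use \cref{pp:walk-type}\ref{pp:walk-type:supertypes} together with \cref{p:peri-direct}\ref{p:peri-direct:classic}.

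If $w=w_\L w_\I w_\R$ has type $\L^?\I^?\R^?$, then $p=\min(|w_\L|,|w_\R|)>0$. Assume $|w_\R|\le|w_\L|$, the other case being symmetric. Popping the $p=|w_\R|$ right edges leaves $v=w_\L w_\I$, which is of type $\L^?\I^?$.

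If $w=w_\L w_\F w_\R$ has type $\L^?\F\R^?$, then $p=|w_\F|+\min(|w_\L|,|w_\R|)$. Assume $|w_\R|\le|w_\L|$. Popping the $|w_\F|+|w_\R|=p$ right edges leaves $v=w_\L$, which is of type $\L^?$ or empty.

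In both cases we then apply Step~1 with $m=p$ and set $w'\coloneq v\cdot x$. This gives $|w'|=|w|$, $\peri(w')=0$, and $w\ppreleq{2k}w'$.

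\emph{Main obstacle.} The delicate point is Step~1: we must make sure the pushed edges never introduce an $\F$- or $\R$-edge, since that would make the peripherality positive again. This is exactly why we extend toward a nontrivial scc reachable from $\tgt(v)$ and use the closure properties of \eqref{eq:kind:before-scc} and \eqref{eq:kind:after-scc} under concatenation. The empty case is where the cyclicity hypothesis on~$G$ is needed.
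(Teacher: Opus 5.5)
Your proof is correct and follows essentially the same route as the paper. You pop the $\peri(w)\le k$ edges on the shorter outer side (taking the $\F$-factor along in the finitary case) and push back the same number of edges on that side so that the result meets a nontrivial scc and has peripherality~$0$, for at most $2k$ operations in total. The differences are cosmetic: you work on the right end while the paper works on the left, and your extension lemma covers the all-finitary case (both outer factors empty) uniformly, whereas the paper treats it separately by noting that then $|w|\le k$.
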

    
    Using~\cref{prop:inner-connected=>peripherality-0-connected} concludes the proof.
\end{proofsketch}
\begin{proof}
    Thanks to \cref{p:incr-len-orderability<=layer-orderability}, it is sufficient to show the fact that the undirected graph~${(\walks[\ell]{G}, \ppreleq{2k})}$ is connected for every~$\ell$. If $\peri(G)=0$ then \cref{prop:inner-connected=>peripherality-0-connected} concludes, so let us assume that $\peri(G)>0$.
    If $G$ is acyclic, the size of all walks is bounded by the peripherality thus any two walks of the same size $w,\,w'$ satisfy $w\ppreleq{2k}w'$ and so ${(\walks[\ell]{G}, \ppreleq{2k})}$ is indeed connected for every~$\ell$. We are left with the case of cyclic graphs of positive peripherality.
    We show the following claim.
    
    \begin{falseclaim}{\ref{cl:inner-connected=>6k-increasing-orderable:cyclic}}
        If~$G$ is cyclic, then for every walk~$w$ in $G$ with positive peripherality, there exists a walk~$w'$ in~$G$ such that the following two conditions hold.
    \begin{align}\label{eq:main-claim}
        \peri(w')=0 && w \ppreleq{2k} w'
    \end{align}
    \end{falseclaim}
    \begin{claimproof}[Proof of \cref{cl:inner-connected=>6k-increasing-orderable:cyclic}]
    Since $w$ is of positive peripherality, it is either of type~$\L\I^?\R$ or $\L^?\F\R^?$.
    \begin{itemize}
        \item
        Case~1: $w$ can be $\L\I^?\R$-factorized as~$w=w_\ell\cdot w_i\cdot w_r$.
        We assume without loss of generality that~$\peri(w)=|w_\ell|\leq k$
        because the other case ($|w_r|\leq k$) is symmetric.
        The first vertex~$v$ of~$w_i\cdot w_r$ 
        is necessarily inner,
        because $w_\ell$ is nonempty and of type $\L$ and continues with $w_i \cdot w_r$ which is also nonempty and of type $\I^?\R$.
        Hence, there are 
arbitrarily long walks that start in a nontrivial scc~$C$ and end at~$v$. All edges of such walks are inner, because they lie on a path from $C$ to~$v$ which is inner and hence can reach a nontrivial scc. 
Take such a walk of length $\geq |w_\ell|$, and let $w_{\ell'}$ be its suffix of length $|w_\ell|$.
The walk $w' \coloneq w_{\ell'} \cdot w_i \cdot w_r$ is of type $\I\R$ so it has peripherality~$0$ and we clearly have $|w|= |w'|$. Moreover, the walks differ by changing a prefix of length $\leq k$ so indeed  $w \ppreleq{2k} w'$ and we have established
        \eqref{eq:main-claim}.
\item 
        Case~2: $w$ can be~$\L^?\F\R^?$-factorized as~$w=w_\ell\cdot w_f\cdot w_r$. If both~$w_\ell$ and~$w_r$ are empty,  then~$|w|=\peri(w)\leq k$, hence any walk~$w'$ such that~$|w|=|w'|$ satisfies~$w \ppreleq{2k} w'$. Since $G$ is cyclic we can choose $w'$ of peripherality~$0$,  hence \eqref{eq:main-claim} holds.
        
         Otherwise, either~$w_r$ or~$w_\ell$ is nonempty. We assume without loss of generality that~$|w_r|\geq|w_\ell|$, hence that~$w_r$ is nonempty; the other case is symmetric.
        It follows that~$\peri(w)=\big(|w_\ell|+|w_f|\big)\leq k$.
        Since~$w_r$ is of type~$\R$, its origin is reachable from a nontrivial scc, so there exist two walks~$m_1$ of type $\I^?$, $m_2$ of type~$\R^?$ such that~$w'=m_1\cdot m_2 \cdot w_r$ is a walk 
        and~$|m_1\cdot m_2| = |w_\ell\cdot w_f|$.
        A simple verification then shows that~$w'$ satisfies~\eqref{eq:main-claim}.         
        \renewcommand\qedsymbol{\textcolor{lipicsGray}{\ensuremath{\vartriangleleft}}}\qedhere
    \end{itemize}    \end{claimproof}  
    
    Thus, any walk of positive peripherality is $\ppreleq{2k}$-related to one of the same length and of peripherality $0$. By~\cref{prop:inner-connected=>peripherality-0-connected}, the set of walk of peripherality~$0$ and length $\ell$ is connected for $\ppreleq{2}$. 
    This implies that $(\walks[\ell]{G}, \ppreleq{2k})$ is connected for
    every~$\ell$ and concludes the proof.
\end{proof}

\begin{proposition}\label{p:inner-connected=>eventual-6k-increasing-orderable}
    Every inner-connected graph~$G$ is eventually~$6k'$-orderable by increasing length, where~$k'=\max(1,\evperi(G))$.
\end{proposition}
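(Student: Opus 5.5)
Our plan mirrors the proof of \cref{p:inner-connected=>6k-increasing-orderable}, but only for sufficiently long walks. If $G$ is acyclic, $\walks[\geq \ell]{G}$ is empty for $\ell$ larger than the longest walk, so the empty ordering works. Assume now that $G$ is cyclic, and let $k'=\max(1,\evperi(G))$. By \cref{p:peri-direct}\ref{p:peri-direct:eve} and the definition of $\evperi$ as a limit, there is $\ell_0$ such that every walk of length at least $\ell_0$ has peripherality at most $\evperi(G)\leq k'$. One can take $\ell_0=|V|$: such a walk repeats a vertex, so it encounters a nontrivial scc.

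First, for every $\ell\geq \ell_0$, we show that $(\walks[\ell]{G},\ppreleq{2k'})$ is connected. Take $w$ of length $\ell$ with $\peri(w)>0$. Since $w$ encounters a nontrivial scc, it is not of type $\L^?\F\R^?$, because such walks never encounter a nontrivial scc, as noted in the proof of \cref{l:peri-max}. Hence $w$ is of type $\L\I^?\R$, and Case~1 of \cref{cl:inner-connected=>6k-increasing-orderable:cyclic} applies verbatim. Replace the shorter outer factor, of length $\peri(w)\leq k'$, by an inner walk of the same length reaching the first vertex of $w_i\cdot w_r$ (or leaving its last vertex, symmetrically). This gives $w'$ of peripherality $0$ with $w\ppreleq{2k'}w'$. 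Then \cref{prop:inner-connected=>peripherality-0-connected} connects all peripherality-$0$ walks of length $\ell$ under $\ppreleq{2}\subseteq\ppreleq{2k'}$, which proves the claim.

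Second, we build the ordering with \cref{p:layer:even-odd}. Apply it to $U=(\walks[\geq\ell_0]{G},\pprel{2k'})$ with $V_i=\walks[\ell_0+i]{G}$; each $V_i$ is finite, and nonempty because $G$ is cyclic. For the sequence $(w_i)$, fix an infinite walk around a cycle of a nontrivial scc and let $w_i$ be its prefix of length $\ell_0+i$. Then $w_{i-1}\pprel{1}w_i$, which gives condition~\ref{p:layer:even-odd:pre:direct-seq}, and condition~\ref{p:layer:even-odd:pre:subgraph-connected} is the first step. The resulting ordering is increasing by item~\ref{p:layer:even-odd:post:incr} and $6k'$-bounded because $(\pprel{2k'})^{\leq 3}\subseteq\pprel{6k'}$. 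This is exactly eventual $6k'$-orderability by increasing length.

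The main obstacle is the Case~1 substitution under the constraint that only walks of length at least $\ell_0$ are involved. Since $\ppreleq{2k'}$ relates walks of the same length, the modified walk $w'$ automatically has length $\ell\geq\ell_0$. The intermediate walks used by \cref{prop:inner-connected=>peripherality-0-connected} also all have length $\ell$, because $\ppreleq*{2}$ stays within one length. So the whole argument lives inside $\walks[\geq\ell_0]{G}$, and nothing beyond the exclusion of $\L^?\F\R^?$-walks has to be verified.
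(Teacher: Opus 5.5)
Your proposal is correct and follows the paper's argument. Restricting to walks of length at least $|V|$ forces each of them to encounter a nontrivial scc, so only Case~1 of \cref{cl:inner-connected=>6k-increasing-orderable:cyclic} arises, with the bound $\evperi(G)\leq k'$, and \cref{prop:inner-connected=>peripherality-0-connected} then gives connectivity of each layer; your explicit application of \cref{p:layer:even-odd} to $\walks[\geq |V|]{G}$, rather than citing \cref{p:incr-len-orderability<=layer-orderability}, which requires every layer to be connected, is exactly the step the paper's short proof leaves implicit.
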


The proof is the same as that of \cref{p:inner-connected=>6k-increasing-orderable}, except that we consider only the walks that are 
longer 
than~$|V|$.
  This ensures by the pigeonhole principle that only Case~1 in the proof of (the analogue of) \cref{cl:inner-connected=>6k-increasing-orderable:cyclic} can occur, and then the walk $w$ under consideration encounters a nontrivial scc so its peripherality is at most the eventual peripherality of~$G$, so we can indeed use~$k'$ instead of~$k$ in the statement.

\section{Criticality and Lower Bounds}
\label{s:criticality}
\begin{toappendix}
    \label{a:criticality}
\end{toappendix}

Having shown the upper bounds on \cref{t:pp-rank}, we now turn to the lower bounds.
These hinge on the notions of \emph{criticality} and \emph{critical walks}, on the fact that critical walks are $\ppreleq{2k}$-disconnected from the other walks when~$k$ is the criticality of the graph.
We give the relevant definitions and show this result in \cref{s:criticality:def}. We then show the lower bounds of \cref{t:pp-rank} in \cref{sec:lowercrit}, and discuss their tightness in \cref{sec:tightness}.

\subsection{Criticality and Critical Walks} 
\label{s:criticality:def}
\begin{toappendix}
    \subsection{Criticality and Critical Walks} 
    \label{a:criticality:def}     
\end{toappendix}

\begin{definition}\label{d:criticality}
  We define the \defemph{criticality} $\crit(w)$ of a walk $w$ as follows.
  If~$w$ is of type~$\L^?\I^?\R^?$, then~$\crit(w)=\peri(w)-1$.
    Otherwise (i.e.,~$w$ is of type~$\L^?\F\R^?$), its criticality is defined
    as the maximal number~$k$
    such that~$w$ can be factorized as~$w=w_1\cdot e\cdot w_2$
    where~$w_1$ is an $\L^? \F^?$-walk, $w_2$ is an $\F^? \R^?$-walk, $e$ is a single edge of type $F$ and $\min\big(|w_1|,|w_2|\big)\geq k$.

    We also define the \defemph{criticality} of a graph~$G$ as the maximal criticality among walks in~$G$, and the \defemph{eventual criticality} of~$G$ as the maximal criticality among walks that encounter a nontrivial scc (or $-1$ if there are no such walks, i.e., precisely when the graph is acyclic).
\end{definition}

The next property can be shown by an easy induction over the peripherality of walks.

\begin{propertyrep}\label{pp:peripherality-vs-criticality}
  Every walk or graph with peripherality~$p$ and criticality~$c$ satisfies~$c+1 \leq p \leq 2c+2$.
  Every graph with eventual peripherality~$p'$ and eventual criticality~$c'$ satisfies~$c'+1 = p'$.
\end{propertyrep}
\begin{proof}
  We first show the claim about walks, and turn to the claims about graphs at the end of the proof. Let $w$ be a walk. There are two cases.

    First, if~$w$ is a~$\L^?\I^?\R^?$-walk, then by definition we have~$c+1=p\leq 2c+2$.
    
    Second, if~$w$ is a~$\L^?\F\R^?$-walk, let us show that we have~$c+1 \leq p \leq 2c+2$ by induction over the criticality of~$w$ (which cannot be $<0$ according to \cref{d:criticality} because $w$ is of type $\L^?\F\R^?$).
    If~$c=0$, then~$w$ is of the form~$e\cdot w'$ (or symmetrically of the form~$w'\cdot e$) where either: $w'$ contains only outer elements, in which case~$p=1$; or~$w'$ consists of exactly one~$\F$-edge, in which case~$p=2$. In particular, we indeed have~$c+1 \leq p \leq 2c+2$.
    If~$c\geq 1$, then we have~$|w|>2$ and we factorize~$w$ as~$w=e\cdot w' \cdot e'$ for some walk~$w'$ and edges~$e,e'$.
    Since~$c\geq 1$, $w'$ contains an \F-edge and by definition of criticality, we have~$\crit(w')=c-1$.
    We apply the induction hypothesis which yields $c\leq \peri(w') \leq 2c$.
    On the other hand, it can be observed that either: $\peri(w)=\peri(w')+2$ (in the case where~$e$ and~$e'$ both are \F-edges); or~$\peri(w)=\peri(w')+1$ (in all other cases).
    Hence~
    \begin{equation*}
        c+1\leq(\peri(w')+1)\leq p=\peri(w) \leq (\peri(w')+2) \leq 2c+2
    \end{equation*} which concludes the second case and establishes the result for walks.
    
    The statement about graphs for peripherality and criticality follows immediately: if $c+1 \leq p \leq 2c+2$ for every walk, then the same is true for the maxima across all walks. As for eventual peripherality and eventual criticality, the statement follows because the only relevant case is then that of~$\L^?\I^?\R^?$, because sufficiently long walks do not feature any finitary edge.
\end{proof}

\begin{definition}\label{d:critical}
    A walk is \defemph{critical} if it is maximal for the preorder
    defined as $w_1 \preceq w_2$ if:
    \begin{enumerate}[1.]
        \item 
either each of~$w_1$ and~$w_2$ encounter some
          nontrivial scc (not necessarily the same), or neither of them encounters any nontrivial scc; and  
        \item 
        \label{d:critical:cond}
        either $\crit(w_1)<\crit(w_2)$, or both $\crit(w_1)=\crit(w_2)$ and
        $\big(|w_1|_\L+|w_1|_\F+|w_1|_\R\big)\leq\big(|w_2|_\L+|w_2|_\F+|w_2|_\R\big)$.
\end{enumerate}
\end{definition}

The following lemma shows that criticality is preserved when following~$\ppreleq{2k}$:
from a critical walk $w$ of criticality~$k$, one can only reach walks that are equivalent to $w$ wrt. $\preceq$; which intuitively means that these walks are disconnected from the other walks.
\begin{lemmarep}\label{l:crit-walk-connected}
  Let~$w$ be a critical walk with a positive criticality~$k$. Then every walk~$w'$
    such that~$w\ppreleq{2k}w'$ is equivalent to~$w$ w.r.t.~$\preceq$, hence critical.
\end{lemmarep}
\begin{proofsketch}
Since~$\crit(w) > 0$,
the type of~$w$ is either~$\L\I^?\R$ or~$\L^?\F\R^?$.
    We focus in this sketch on the case where~$w$ is of type $\L\I^?\R$, the other case is similar. We $\L\I^?\R$-factorize~$w$ as~$w=w_\ell \cdot w_i \cdot w_r$ with~$|w_\ell|> k$ and~$|w_r|> k$.
    Since~$w\ppreleq{2k} w'$, we have the following factorization of~$w= \underbrace{p_\ell\cdot s_\ell}_{\L} {}\cdot{} \underbrace{w_i}_{\I^?} {} \cdot {} \underbrace{p_r\cdot s_r}_{\R}$ and of~$w'= \underbrace{p_\ell'\cdot s_\ell}_{\L} {}\cdot{} \underbrace{w_i}_{\I^?} {} \cdot {} \underbrace{p_r\cdot s_r'}_{\R}$, for some walks~$p_\ell,p_\ell',s_r,s_r'$ and some \textbf{nonempty} walks~$s_\ell$ and~$p_r$. (Note that~$w_\ell=p_\ell\cdot s_\ell$ and~$w_r=p_r\cdot s_r$.)
    Since~$w$ is critical, it is maximal w.r.t.~$\preceq$, hence greater, equivalent or incomparable to~$p_\ell \cdot s_\ell {}\cdot{} w_i \cdot {} p_r\cdot s_r'$ and to~$p_\ell' \cdot s_\ell {}\cdot{} w_i {} \cdot {} p_r\cdot s_r$.  We are then able to conclude from this that~$|s_r|= |s_r'|$ and~$|p_\ell| = |p_\ell'|$, which finally yields that~$w'$ is equivalent to~$w$ w.r.t~$\preceq$.
\end{proofsketch}

\begin{proof}
Let $w$ be a walk
with~$\crit(w)> 0$.
    This implies that the type of~$w$ is either~$\L\I^?\R$ or~$\L^?\F\R^?$. We distinguish both cases below.

    \medskip
    
    \noindent\textbf{Case where~$w$ is of type $\L\I^?\R$}.
    We $\L\I^?\R$-factorize~$w$ as~$w=w_\ell \cdot w_i \cdot w_r$.
Recall that~$w$ and~$w'$ have the same length.
    Hence, $w\ppreleq{2k} w'$ implies there is~$n\leq k$ such that $n$-pop and~$n$ push operations are sufficient to transform~$w$ into~$w'$.
    Since~$|w_\ell|$ and~$|w_r|$ are both greater than or equal to~$k+1$, 
    it follows that $w_\ell$, $w_r$ and~$w'$ can be factorized as:
    \begin{align*}
        w'= \underbrace{p_\ell'\cdot s_\ell}_{\L} {}\cdot{} \underbrace{w_i}_{\I^?} {} \cdot {} \underbrace{p_r\cdot s_r'}_{\R} && w_\ell = p_\ell\cdot s_\ell && w_r = p_r \cdot s_r
    \end{align*}
    where~$s_\ell$ and~$p_r$ are nonempty.
    As indicated above, since~$s_\ell$ contains only edges of kind~$\L$ and is nonempty,
    it is of type~$\L$,
    hence $p_\ell'\cdot s_\ell$ is also of type~$\L$ (\cref{pp:walk-type}\ref{pp:walk-type:factor}).
    Symmetrically, $p_r\cdot s_r'$ is of type~$\R$
    
    Since~$w$ is critical, it must be greater, equivalent or incomparable to the two walks given below w.r.t.~$\preceq$. Note that their lengths might differ from the length of~$w$.
    \begin{align*}
        w''_1 = \underbrace{w_\ell}_{\L} {}\cdot{} \underbrace{w_i}_{\I^?} {} \cdot {} \underbrace{p_r\cdot s_r'}_{\R} && w''_2 = \underbrace{p_\ell' \cdot s_\ell}_{\L} {}\cdot{} \underbrace{w_i}_{\I^?} {} \cdot {} \underbrace{w_r}_{\R}
    \end{align*}

Both $w''_1$ and $w''_2$ encounter a nontrivial scc if and only if~$w$ does, because vertices of nontrivial sccs can only occur in $w_i$ (or, in the case~$w_i=\varepsilon$, the vertex~$\tgt(w_\ell)=\src(w_r)$ might be part of a nontrivial scc). Indeed the walks $w_\ell$ and $p_\ell' \cdot s_\ell$, respectively $p_r \cdot s_r'$ and $w_r$, are of type $\L$, respectively of type $\R$, so it is not possible that a vertex of a nontrivial scc occurs strictly inside them, hence there cannot be vertices from nontrivial sccs in the parts of the walks that have changed.
    Hence, $w$ is comparable to $w''_1$ and $w''_2$, and it is greater or equivalent to them w.r.t.~$\preceq$. If~$|p_r\cdot s_r'| > |w_r|$, then $\crit(w''_1) \geq \crit(w)$ and $\big(|w''_1|_\L+|w''_1|_\F+|w''_1|_\R\big)>\big(|w|_\L+|w|_\F+|w|_\R\big)$, hence by \cref{d:critical:cond} of  \cref{d:critical}, we would have $w_1''\succ w$, a contradiction.
    Hence, we have $|p_r\cdot s_r'| \leq |w_r|$, and the same reasoning applied to $w''_2$ yields that~$|p_\ell'\cdot s_\ell| \leq |w_\ell|$.
    Moreover, recall that~$|w'| = |w|$, hence we have that
$|p_r\cdot s_r'| + |p_\ell'\cdot s_\ell| = |w_r|+|w_\ell|$
which implies that $|p_r\cdot s_r'|= |w_r|$ and~$|p_\ell'\cdot s_\ell| = |w_\ell|$.
    In other words, we have 
    \begin{equation*}
        \crit(w)=\crit(w'),\qquad |w|_\L=|w'|_\L ,\qquad  |w|_\R=|w'|_\R, \quad\quad \text{and}\quad\quad 0=|w|_\F=|w'|_\F
    \end{equation*}hence $w'$ is equivalent to~$w$, hence critical.

    \medskip

    \noindent\textbf{Case where~$w$ is of type $\L^?\F\R^?$}.
    There exist walks~$w_1,w_2,w_3,w_4,w_1',w_4'$ and an \F-edge~$e$ such that:
    \begin{align}
        \label{c:supercritical:eq1}
        &w=\overbrace{\strut w_1 \cdot w_2}^{\L^?\F^?}{}\cdot{} e{} \cdot{} \overbrace{\strut w_3 \cdot w_4}^{\F^?\R^?} && w'=\overbrace{\strut w_1' \cdot w_2}^{\L^?\F^?} {}\cdot e\cdot {}\overbrace{\strut w_3 \cdot w_4'}^{\F^?\R^?}
        \\ 
        \label{c:supercritical:eq2}
        &|w_1|+|w_4|=|w_1'|+|w_4'|\leq k\\
        \label{c:supercritical:eq3}
        &|w_1\cdot w_2| \geq k, \quad |w_3\cdot w_4| \geq k 
    \end{align}
    \cref{c:supercritical:eq2} comes from the definition of~$\ppreleq{2k}$ and
    \cref{c:supercritical:eq3} from the definition of criticality.
    We know that $w$ and $w'$ cannot encounter a nontrivial scc by definition of the kinds $\L$, $\R$, and $\F$.
    Now, if~$w_1'$ were strictly longer than~$w_1$ then we would have~$(w_1' \cdot w_2\cdot e\cdot w_3\cdot w_4)\succ w$, a contradiction.
Similarly~$w_4'$ cannot be strictly longer than~$w_4$.
    Therefore, \eqref{c:supercritical:eq2} yields that~$|w_1|=|w_1'|$
    and~$|w_4|=|w_4'|$.
    It follows that the criticality of~$w'$ is at least~$k$, hence exactly~$k$ as~$k$ is the maximal criticality.
    Since moreover~$w$ and~$w'$ contain no edge of kind~$\I$, we have~$(|w|_\L+|w|_\F+|w|_\R)=|w|=|w'|=(|w'|_\L+|w'|_\F+|w'|_\R)$,
    hence~$w'$ is equivalent to~$w$ w.r.t.~$\preceq$, hence critical.
\end{proof}

\subsection{Lower Bounds}
\label{sec:lowercrit}
\begin{toappendix}
    \subsection{Lower Bounds}
\end{toappendix}

We now show the main result from \cref{s:criticality}: criticality and eventual criticality 
yield lower bounds for the pp-increasing-rank and eventual pp-increasing-rank, respectively.

\begin{propositionrep}\label{p:criticality-is-lowerbound}
    Let~$G$ be a cyclic graph.
    \begin{enumerate}[a.]
        \item \label{p:criticality-is-lowerbound:incr}
            If the criticality~$c$ of~$G$ is positive, then~$G$ is not~$(2c+1)$-orderable by increasing length.
        \item \label{p:criticality-is-lowerbound:eve-incr}
            If the eventual criticality~$c'$ of~$G$ is positive, then~$G$ is not eventually~$(2c'+1)$-orderable by increasing length.
    \end{enumerate}
\end{propositionrep}

The proof of \cref{p:criticality-is-lowerbound} hinges on \cref{l:incr-len-orderability=>layer-orderability}, stated below, which serves as an approximate converse to \cref{p:incr-len-orderability<=layer-orderability}.  Its proof is straightforward.

\begin{toappendix}

    The proof of \cref{p:criticality-is-lowerbound} hinges on kind of converse to \cref{p:incr-len-orderability<=layer-orderability}, stated next.
\end{toappendix}

\begin{lemmarep}\label{l:incr-len-orderability=>layer-orderability}
    Let~$G$ be a graph and~$k$ a positive integer.
    If~$G$ is $k$-orderable (resp.~eventually $k$-orderable) by increasing length, then the undirected graph $(\walks[\ell]{G}, \ppreleq{k})$ is connected for every~$\ell\in\N$ (resp.\@ for almost every~$\ell\in\N$).
\end{lemmarep}
\begin{proof}
    For orderability by increasing length, let us consider
    a $k$-bounded increasing ordering of~$\walks{G}$. For each $\ell$, we can consider the contiguous subsequence of the walks of length~$\ell$ and it forms a Hamiltonian path of $(\walks[\ell]{G}, \ppreleq{k})$, which thus is connected.
    The proof is similar for eventual orderability by increasing length.
\end{proof}

We can now sketch the proof of \cref{p:criticality-is-lowerbound}:

\begin{inlineproof}[Proof sketch for \cref{p:criticality-is-lowerbound}]
  We rely on the next two claims, shown using
  \cref{l:crit-walk-connected}:
    \begin{claim}\label{cl:criticality-is-lowerbound-layer:finite}
        If~$c>0$, there exists an integer~$\ell$ such that~$(\walks[\ell]{G},\ppreleq{2c+1})$ is disconnected.
    \end{claim}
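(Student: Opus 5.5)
We take $\ell$ to be the length of a suitably chosen critical walk. The goal is to exhibit two walks of length $\ell$, one critical with criticality $c$ and one not equivalent to it under $\preceq$, and then use \cref{l:crit-walk-connected} to show that no $\ppreleq{2c+1}$-chain joins them.

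First we fix a critical walk $w$. The preorder $\preceq$ may have infinitely many elements when $G$ is cyclic, so we must check that maximal elements exist. Criticality is bounded, since $\crit(w)\leq\peri(w)\leq\peri(G)$ by \cref{pp:peripherality-vs-criticality} and \cref{l:peri-max}. The count $|w|_\L+|w|_\F+|w|_\R$ is also bounded, because non-inner edges never repeat in a walk, as observed in the proof of \cref{p:peri-direct}. Hence within each of the two classes of condition~1 of \cref{d:critical} (walks encountering a nontrivial scc, walks encountering none) the preorder has a maximum. We take $w$ to be a maximal walk of criticality exactly $c$. Such a walk exists because some walk attains the graph criticality $c$, and a maximal element above it in its class has criticality at least $c$, hence exactly $c$. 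We set $\ell=|w|$.

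Next, since $\ppreleq{2c+1}=\ppreleq{2c}$ (\cref{r:ppreleq-expl}), it suffices to consider $\ppreleq{2c}$. Applying \cref{l:crit-walk-connected} repeatedly, with $c>0$, shows that every walk $\ppreleq{2c}^*$-reachable from $w$ is $\preceq$-equivalent to $w$. In particular all such walks have criticality $c>0$, so they have positive peripherality by \cref{pp:peripherality-vs-criticality}.

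Finally, since $G$ is cyclic, it has a walk $w_0$ of length $\ell$ entirely inside a nontrivial scc, obtained by following a cycle. This walk has type $\I$, so $\peri(w_0)=0$ and $\crit(w_0)=-1\neq c$. Thus $w_0$ is not equivalent to $w$, and $(\walks[\ell]{G},\ppreleq{2c+1})$ is disconnected.

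The step I expect to need the most care is the existence of a critical walk of criticality exactly $c$ in the relevant class. The argument there relies on the boundedness of the non-inner edge count, which holds because non-inner edges cannot repeat in a walk. The remainder of the argument is routine once \cref{l:crit-walk-connected} is available.
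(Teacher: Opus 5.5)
Your proposal is correct and matches the paper's argument: take a critical walk $w$ of criticality $c$, set $\ell=|w|$, compare it with an $\I$-walk of length $\ell$ (criticality $-1$), and conclude via \cref{l:crit-walk-connected} (iterated along a chain) together with ${\ppreleq{2c+1}}={\ppreleq{2c}}$. Your additional justification that a critical walk of criticality exactly $c$ exists is sound and fills in a step the paper leaves implicit: criticality and the non-inner edge count are both bounded, so each class of the total preorder has a maximum.
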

    \begin{claim}\label{cl:criticality-is-lowerbound-layer:eventual}If~$c'>0$, there exist infinitely many integers~$\ell$ such that~$(\walks[\ell]{G},\ppreleq{2c'+1})$ is disconnected.\end{claim}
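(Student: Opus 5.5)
The plan is to derive both claims from \cref{l:crit-walk-connected} by exhibiting, in a suitable length $\ell$, a critical walk $w$ of criticality $c$ (resp.~$c'$), together with a walk of the same length that is \emph{not} $\preceq$-equivalent to $w$. We will use throughout that ${\ppreleq{2c+1}}={\ppreleq{2c}}$ (\cref{r:ppreleq-expl}), so it suffices to show disconnectedness of $(\walks[\ell]{G},\ppreleq{2c})$. The proposition then follows at once from \cref{l:incr-len-orderability=>layer-orderability}. A $(2c+1)$-bounded increasing ordering would make every layer connected, contradicting \cref{cl:criticality-is-lowerbound-layer:finite}. Likewise, an eventual one would make almost every layer connected, contradicting \cref{cl:criticality-is-lowerbound-layer:eventual}.

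\textbf{Existence of critical walks.} Note first that criticality is bounded (by \cref{pp:peripherality-vs-criticality} and \cref{l:peri-max}). Moreover, $|w|_\L+|w|_\F+|w|_\R$ is bounded by the number of non-inner edges, since only inner edges can repeat in a walk. Hence, inside each of the two classes of condition~1 of \cref{d:critical} (walks encountering a nontrivial scc, and walks not encountering one), the preorder $\preceq$ has maximal elements. Walks from different classes are incomparable, so a maximal element within a class is maximal overall, hence critical. For \cref{cl:criticality-is-lowerbound-layer:finite}, pick the class containing a walk of criticality $c$ and take a $\preceq$-maximal walk $w$ in it; its criticality is then $c$. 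For \cref{cl:criticality-is-lowerbound-layer:eventual}, take $w$ maximal among walks that encounter a nontrivial scc; its criticality is $c'$.

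\textbf{Disconnection.} Set $\ell=|w|$. By \cref{l:crit-walk-connected}, every walk in the connected component of $w$ in $(\walks[\ell]{G},\ppreleq{2c})$ is $\preceq$-equivalent to $w$, and in particular has criticality $c>0$. Since $G$ is cyclic, following a cycle in a nontrivial scc gives a walk of length $\ell$ of type $\I$. Such a walk has peripherality $0$, hence criticality $-1$, and so it lies in another component. This proves \cref{cl:criticality-is-lowerbound-layer:finite}.

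\textbf{Pumping for the eventual claim.} For \cref{cl:criticality-is-lowerbound-layer:eventual}, the walk $w$ encounters a nontrivial scc, so it is of type $\L^?\I^?\R^?$, since $\L^?\F\R^?$-walks cannot encounter one (as noted in the proof of \cref{l:peri-max}). Let $v$ be a vertex of $w$ lying in a nontrivial scc, and let $\gamma$ be a nonempty $\I$-cycle through $v$. Inserting $\gamma^n$ at an occurrence of $v$ yields walks $w_n$ of pairwise distinct lengths. These walks still encounter the scc, have the same $\L$- and $\R$-counts and no $\F$-edges, and so, by \cref{p:peri-direct}, the same peripherality and hence criticality. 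Thus each $w_n$ is $\preceq$-equivalent to $w$ and therefore critical. Applying the disconnection argument to each length $|w_n|$ gives infinitely many disconnected layers.

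The main point needing care is this pumping step: we must check that inserting $\gamma^n$ does not change the $\L^?\I^?\R^?$-factorization counts. This holds because every edge of $\gamma$ is of kind $\I$, and by \cref{pp:walk-type} inner edges can only appear in the $\I^?$ factor.
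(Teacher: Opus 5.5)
Your proposal is correct and follows essentially the paper's argument. You take a critical walk of criticality $c'$ that encounters a nontrivial scc, pump a nonempty cycle through one of its scc vertices to get $\preceq$-equivalent (hence critical) walks of infinitely many lengths, and separate each from an \I-walk of the same length via \cref{l:crit-walk-connected} together with ${\ppreleq{2c'+1}}={\ppreleq{2c'}}$. Your explicit justification that such a critical walk exists — boundedness of criticality and of $|w|_\L+|w|_\F+|w|_\R$, plus incomparability across the two classes — fills in a step the paper only asserts.
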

    
    \noindent\cref{p:criticality-is-lowerbound:incr} follows from \cref{cl:criticality-is-lowerbound-layer:finite,l:incr-len-orderability=>layer-orderability}, and \cref{p:criticality-is-lowerbound:eve-incr} from \cref{cl:criticality-is-lowerbound-layer:eventual,l:incr-len-orderability=>layer-orderability}.
    The careful reader will have noticed that \cref{cl:criticality-is-lowerbound-layer:finite}
    is stated with~$\ppreleq{2c+1}$ while application of \cref{l:crit-walk-connected} yields the statement for~$\ppreleq{2c}$ only.
    The apparent discrepancy is resolved by the fact
    that~${\ppreleq{2c+1}}={\ppreleq{2c}}$ (see \cref{r:ppreleq-expl}).
\end{inlineproof}

\begin{toappendix}
\begin{proof}[Proof of \cref{p:criticality-is-lowerbound}]
    First, we use \cref{l:crit-walk-connected} to show that critical walks
    are disconnected from other walks in~$(\walks[\ell]{G},\ppreleq{k})$ for relevant~$\ell$ and~$k$.

    \begin{falseclaim}{\ref{cl:criticality-is-lowerbound-layer:finite}}
        If~$c>0$, there exists an integer~$\ell$ such that~$(\walks[\ell]{G},\ppreleq{2c+1})$ is disconnected.
    \end{falseclaim}
    \begin{claimproof}[Proof of \cref{cl:criticality-is-lowerbound-layer:finite}]
        There exists a critical walk $w$ in~$G$ with criticality~$c$: as $c>0$, $w$ is non-empty.
        We write~$\ell=|w|$ with $\ell>0$.
      Since~$G$ is cyclic, it contains an $\I$-walk~$w'$ such that~$|w'|=\ell$; in particular the peripherality of~$w'$ is equal to~$0$, hence its criticality is equal to $-1$
by \cref{pp:peripherality-vs-criticality}, hence~$w'$ is not equivalent to~$w$
      w.r.t.~$\preceq$ because they do not have the same criticality (indeed $w$ has criticality $c>0$).
\cref{l:crit-walk-connected} then yields that~$w$ and~$w'$
        are not related by~$\ppreleq*{2c}$, hence not by~$\ppreleq*{2c+1}$ either since these two relations are equal.
    \end{claimproof}

    \begin{falseclaim}{\ref{cl:criticality-is-lowerbound-layer:eventual}}
        If~$c'>0$, there exist infinitely many integers~$\ell$ such that~$(\walks[\ell]{G},\ppreleq{2c'+1})$ is disconnected.\end{falseclaim}
    \begin{claimproof}[Proof of \cref{cl:criticality-is-lowerbound-layer:eventual}]
        There exists a critical walk~$w$ that has criticality~$c'$
        and encounters a nontrivial scc.
Split $w = w_1 \cdot w_2$ such that the vertex~$x=\tgt(w_1)=\src(w_1)$ belongs to nontrivial scc. 
        Let $w'$ be an arbitrary nonempty walk both starting and ending in~$x$: all edges of~$w'$ are of kind~$\I$. It follows that $w''_m\coloneq w_1 \cdot (w')^m \cdot w_2$ is a walk for every $m \geq 0$. 
        Further, $w$ and $w''_m$ both visit some nontrivial scc (i.e., that of~$x$), neither contains an $\F$-edge so their criticality is their peripherality minus 1 and it is the same because that number only depends on the edges of kind $\L$ and $\R$ of $w$ and $w''_m$ which are the same edges. 
It follows that all walks of the form $w''_m$ are equivalent to~$w$ w.r.t.~$\preceq$ for each $m\geq 0$, and
        since $w$ is critical, they are all critical.
The remainder of the proof is similar to the one of \cref{cl:criticality-is-lowerbound-layer:finite}. 
    \end{claimproof}

    We conclude the proof of \cref{p:criticality-is-lowerbound} by noticing that \cref{p:criticality-is-lowerbound:incr} follows from \cref{cl:criticality-is-lowerbound-layer:finite,l:incr-len-orderability=>layer-orderability} and \cref{p:criticality-is-lowerbound:eve-incr} from \cref{cl:criticality-is-lowerbound-layer:eventual,l:incr-len-orderability=>layer-orderability}.
\end{proof}
\end{toappendix}

Finally, combining \cref{p:criticality-is-lowerbound} with \cref{pp:peripherality-vs-criticality} yields the lower bounds of \cref{t:pp-rank:incr,t:pp-rank:eve-incr} of~\cref{t:pp-rank}:
\begin{enumerate}
    \item  
If~$G$ is a (cyclic or acyclic) graph with at least two edges, then~$(\walks[1]{G},\ppreleq{1})$ is disconnected, hence~$G$ is not 1-orderable by increasing length from~\cref{l:incr-len-orderability=>layer-orderability}. So we immediately have Item~\ref{t:pp-rank}\ref{t:pp-rank:incr}'s lower bound for the pp-increasing-rank of acyclic graphs.
\item Let us show the lower bound of Item~\ref{t:pp-rank}\ref{t:pp-rank:incr} for the pp-increasing-rank of cyclic graphs. 
We distinguish two cases depending on whether $c>0$ or not.

    In the case where $c>0$,
    \cref{p:criticality-is-lowerbound:incr} of \cref{p:criticality-is-lowerbound}
yields that the pp-increasing-rank~$r$ of~$G$ satisfies~$r>2c+1$,
    hence in fact~$r\geq 2(c+1)$ since~$r$ is an integer.
From~\cref{pp:peripherality-vs-criticality}, we have~$2(c+1)\geq p$, hence finally~$r\geq p$ as promised.

    Otherwise, if $c\leq 0$, we have
$p\leq 2$ from \cref{pp:peripherality-vs-criticality}.
    Since~$G$ has at least two edges, we may use the same reasoning as item 1,
    yielding that its pp-increasing-rank is at least~$2$ again, so that $r \geq
    2 \geq p$.
    
    \item We similarly show that the eventual peripherality
      multiplied by two is a lower bound for the eventual pp-increasing-rank (Item~\ref{t:pp-rank}\ref{t:pp-rank:eve-incr}). The theorem statement excludes the case where the only cycle of~$G$ is a self-loop on an isolated vertex, so for every positive length $G$ contains two different walks of that length (using a cycle which is not a self-loop, or two different self-loops, or a self-loop together with one of its incident edges).
We can then use \cref{p:criticality-is-lowerbound:eve-incr} of \cref{p:criticality-is-lowerbound}
instead of \cref{p:criticality-is-lowerbound:incr},
    and equality~$c'+1=p'$ instead of inequality~$2(c+1)\geq p$ (\cref{pp:peripherality-vs-criticality}). The case of $c'\leq 0$ gives $p'\leq 1$ and the claimed lower bound of $2k' = 2$ is achieved like in the case of item~1 but using two walks of the same length for arbitrarily large lengths, whose existence was justified earlier.
\end{enumerate}

\subsection{Lower Bound Tightness}
\label{sec:tightness}
\begin{toappendix}
    \subsection{Lower Bound Tightness}
\end{toappendix}

We conclude \cref{s:criticality} with examples of graph families that meet or
are close to the lower bounds of \cref{t:pp-rank}. Three of the four lower
bounds are met, and the fourth is off by one:
\begin{itemize}
  \item \cref{p:tight}\ref{p:tight:1-orderable} shows that the lower bound in Item~\ref{t:pp-rank}\ref{t:pp-rank:classic} is tight.
  \item \cref{p:tight}\ref{p:tight:2-incr-orderable-dag} shows that the
    lower bound in Item~\ref{t:pp-rank}\ref{t:pp-rank:incr} is tight for acyclic
    graphs.
    \item \cref{p:tight}\ref{p:tight:2p'-eve-incr-orderable} shows that the lower bound in Item~\ref{t:pp-rank}\ref{t:pp-rank:eve-incr} is tight.
  \end{itemize}
  The remaining case is 
the lower bound in Item~\ref{t:pp-rank}\ref{t:pp-rank:incr} for cyclic
graphs:
  \cref{p:tight}\ref{p:tight:p+1-incr-orderable} shows that 
    this bound is almost tight but leaves a gap of one. We leave this gap open, as we do not know whether there is a family of cyclic graphs in which each graph has a pp-increasing-rank 
equal to its peripherality.

\begin{propositionrep}\label{p:tight}
    \begin{enumerate}[a.]
        \item\label{p:tight:1-orderable} There are arbitrarily large cyclic graphs that are 1-orderable.
        \item\label{p:tight:2-incr-orderable-dag} There are arbitrarily large acyclic graphs that are 2-orderable by increasing length.
        \item\label{p:tight:p+1-incr-orderable} There are graphs with arbitrary large peripherality~$p$ that are $(p+1)$-orderable by increasing length.
        \item\label{p:tight:2p'-eve-incr-orderable} There are graphs with arbitrary large eventual peripherality~$p'$ that are eventually $2p'$-orderable by increasing length.
    \end{enumerate}
\end{propositionrep}
\begin{proofsketch}It may be shown that, for every~$k\in\N$:
    \begin{enumerate}[a.]
        \item the cycle of length~$k$ is 1-orderable;
        \item the simple directed path of length~$k$ is 2-orderable by increasing length;
        \item the graph given in \cref{f:high-peripherality:low-orderability} has criticality~$k$, peripherality~$p=2k+1$ and is~$(p+1)$-orderable by increasing length, but not~$p$-orderable by increasing length;
        \item the graph given in \cref{f:high-peripherality:eventual} has eventual peripherality~$k$ and is eventually~$2k$-orderable by increasing length.\qedhere
    \end{enumerate}    
\end{proofsketch}

\begin{toappendix}
    \begin{proof}[Proof of \cref{p:tight}\ref{p:tight:1-orderable}]
        Consider the simple cycle~$C_k$ with~$k$ vertices~$v_0,\ldots,v_{k-1}$, and by abuse of notation we write~$v_j$ for any~$j\in\mathbb{Z}$, meaning~$v_j=v_{j\mod {k}}$. Hence the~$k$ edges of~$C_k$ are~$v_i\rightarrow v_{i+1}$ where~$i$ ranges in~$[k]$.
        For every~$i\in[k]$ and~$j>0$ we denote by~$\langle v_i{:} j\rangle$ the walk in~$C_k$ that starts at~$v_i$ and whose length is~$j$. For every~$i\in\N$ we define the 1-bounded ordering~$O_i$ of~$\walks[2i+1]{C_k}\cup\walks[2i+2]{C_k}$ as follows.
        \begin{equation*}
            \langle v_{k-i}{:} 2i+1\rangle, \langle v_{k-i}{:} 2i+2\rangle, \langle v_{k-i+1}{:} 2i+1\rangle, \langle v_{k-i+1}{:} 2i+2\rangle,\ldots, \langle v_{k-i-1}{:} 2i+1\rangle, \langle v_{k-i-1}{:} 2i+2\rangle
        \end{equation*} 
        Notice that~$O_i$ starts with~$\langle v_{k-i}{:} 2i+1\rangle$
        and ends with~$\langle v_{k-i-1}{:} 2i+2\rangle$.
        Hence, we can start with the empty walk and then connect the layers of the form $\walks[2i+1]{C_k}\cup\walks[2i+2]{C_k}$ by going from
        $\langle v_{k-i-1}{:} 2i+2\rangle$ to $\langle v_{k-i-1}{:} 2i+3\rangle$ at distance $1$. Thus, for every integer~$k\geq 1$, the pp-rank of~$C_k$ is~$1$.
    \end{proof}
    
    \begin{proof}[Proof of \cref{p:tight}\ref{p:tight:2-incr-orderable-dag}]
        Consider the graph~$L_k$, consisting of a single (directed) path of length~$k$. It contains~$(k+1)$ vertices~$v_0,\ldots,v_k$ and~$k$ edges~$v_i\rightarrow v_{i+1}$, for~$i\in[k]$.
Note that~$L_{k}$ has a criticality of~$\lfloor\frac{k-1}{2}\rfloor$ and a peripherality of~$k$.
For every~$i<j$ we denote by~$\langle i{:} j \rangle$ the unique walk in~$L_k$ starting in~$v_i$ and ending in~$v_j$, hence~$\walks{L_k} = \{\varepsilon\}\cup\{\langle i{:}j\rangle \mid i,j\in[k+1], i<j \}$.
For every~$k$, $L_k$ is 2-orderable by increasing length thanks to the following ordering.
        For odd~$\ell$, the set~$\walks[\ell]{L_k}$ is ordered~$\langle 0{:} \ell \rangle,~\langle 1{:} \ell+1 \rangle,\ldots,\langle k-\ell{:} k  \rangle$. For even~$\ell$, we use the inverse order. It yields:
        \begin{multline*}
            \varepsilon,\quad \overbrace{\langle 0{:} 1 \rangle,\langle 1{:} 2 \rangle,\cdots, \langle k-1 {:} k \rangle}^{\walks[1]{L_k}}
            ,\quad
            \overbrace{\langle k-2{:} k \rangle,\langle k-3{:} k-1 \rangle,\cdots, \langle 0 {:} 2 \rangle}^{\walks[2]{L_k}}
            ,\\[5pt]
            \underbrace{\langle 0{:} 3 \rangle,\langle 1{:} 4 \rangle,\cdots, \langle k-3 {:} k \rangle}_{\walks[3]{L_k}}, \quad\cdots,\quad\underbrace{ \langle 0 {:} k \rangle}_{\walks[k]{L_k}} 
        \end{multline*}
        We have now shown the desired claim. We also remark that, in fact, the pp-increasing rank of~$L_k$ is exactly 2 for every~$k\geq 2$, because
        for~$k\geq2$, the graph~$L_k$ has two different walks of the same length hence it is not 1-orderable by increasing length.
    \end{proof}
    
    \begin{proof}[Proof of \cref{p:tight}\ref{p:tight:p+1-incr-orderable}]
        We consider the graph family shown in \cref{f:high-peripherality:low-orderability} and denote by~$G$ the graph of this family with a peripherality of~$2k+1$ and a criticality of~$k$.
Let us show that~$G$ is~$(2k+2)$-orderable by increasing length.
Note that any two walks~$w,w'$ such that~$|w|=|w'|\geq 2k+2$ are of type~$\L^?\I$ and satisfy $w\ppreleq{2k+2} w'$.
Hence, it is quite straightforward to find an increasing ordering~$O_{\geq 2k+2}$ of~$\walks[\geq 2k+2]{G}$ that is~$(2k+2)$-bounded, and that moreover can start by an arbitrary walk of length~$2k+2$.
Since all walks in $\walks[\leq k]{G}$ are related by~$\pprel{2k+2}$,
        it is trivial to build an increasing $(2k+2)$-bounded ordering~$O_{\leq k}$ of $\walks[\leq k]{G}$ that ends with an arbitrary walk of length~$k$.

For every length~$\ell$ such that~$(k+1)\leq\ell\leq(2k+1)$, let us build an ordering~$\overrightarrow{O}_\ell$ of~$\walks[\ell]{G}$.
        We write:
        \begin{align*}
            w = w_1\cdot a\cdot w_2 && w'= w_1' \cdot b \cdot w_2 && w''= w_1''
            \cdot c \cdot d^{\ell}
        \end{align*}
        It may be verified that each walk~$x$ in~$\walks[\ell]{G}$
        is the factor of~$w$, of~$w$ or of~$w''$, and of exactly one of those~since~$\ell\geq (k+1)$.
        We denote these factors as follows. For every~$i\in[2k+2-\ell]$, we denote by~$w_i^{(\ell)}$ the unique walk such that there are~$p,s$ satisfying~$w= p \cdot w_i^{(\ell)} \cdot s$
        and~$|p|=i$ and~$|w_i^{(\ell)}|=\ell$.
        The walks $w'^{(\ell)}_i$, with $i\in[2k+2-\ell]$ and~$w''^{(\ell)}_j$, with $j\in[k+2]$ are defined similarly.
        It may be verified that~$\overrightarrow{O}_\ell$, given below, is $(2k+2)$-bounded.
        \begin{equation*}
            \overrightarrow{O}_\ell = \underbrace{w^{(\ell)}_{0}, w^{(\ell)}_{1}, \ldots, w^{(\ell)}_{2k+1-\ell}}_{\text{factors of~$w$}},~\underbrace{w'^{(\ell)}_{2k+1-\ell}, w'^{(\ell)}_{2k+1-\ell-1}, \ldots, w'^{(\ell)}_{0}}_{\text{factors of~$w'$}},~\underbrace{w''^{(\ell)}_{0}, w''^{(\ell)}_{1}, \ldots, w''^{(\ell)}_{k+1}}_{\text{factors of~$w''$}}
        \end{equation*}
        (In the case where~$\ell=2k+1$, we have~$\overrightarrow{O}_{2k+1}=w,w',w''^{(2k+1)}_{0}, w''^{(2k+1)}_{1}, \ldots, w''^{(2k+1)}_{k+1}$, which can also be verified to be $(2k+2)$-bounded.)
        
        If~$k$ is even, we write $O=(O_{\leq k},\overrightarrow{O}_{k+1},\overleftarrow{O}_{k+2},\ldots,\overrightarrow{O}_{2k+1}, O_{\geq 2k+2})$, where we choose the last walk of~$O_{\leq k}$ to be the prefix of length~$k$ of the first walk of $\overrightarrow{O}_{k+1}$ and the first walk of~$O_{\geq 2k+2})$ to be~$d^{2k+2}$.
        It follows that $G$ is $(2k+2)$-orderable by increasing length.
        If~$k$ is odd, the reasoning is similar with~$O=(O_{\leq k},\overleftarrow{O}_{k+1},\overrightarrow{O}_{k+2},\ldots,\overrightarrow{O}_{2k+1}, O_{\geq 2k+2})$ for a suitable choice of~$O_{\leq k}$.
    \end{proof}

    \begin{proof}[Proof of \cref{p:tight}\ref{p:tight:2p'-eve-incr-orderable}]
We consider the graph family shown in \cref{f:high-peripherality:eventual}
        and let~$H_k$ denote the graph of this family with (eventual) peripherality~$k$.
We first consider the case where~$k$ is an even number.
As shown in the figure,~$H_k$ has $2k+1$ vertices: $v_{-k},\ldots,v_0,\ldots,v_k$.
As we are discussing orderability, we will only consider the walks of length $\geq 2k$.
        Let us first consider the walks of length~$\ell\geq 2k$.
Note that there is exactly one walk in~$\walks[\ell]{H_k}$, denoted by~$w_{s,t}^{(\ell)}$, that starts at each~$v_s\in\{v_{-k},\ldots,v_0\}$ and ends at each~$v_t\in\{v_0,\ldots,v_k\}$.
Note also that, for every~$s\in\{-k,\ldots,0\}$ and~$t,t'\in\{0,\ldots,k\}$, we have ~$w_{s,t}^{(\ell)}\ppreleq{2i}w_{s,t'}^{(\ell)}$ where~$i=\max(t,t')$, hence also~$w_{s,t}^{(\ell)}\ppreleq{2k}w_{s,t'}^{(\ell)}$.
It follows that $\overrightarrow{O^{(\ell)}_s} \coloneq (w_{s,0},w_{s,1},\ldots, w_{s,k})$
        is a $2k$-bounded ordering of~$\{w_{s,t}\mid t\in\{0,\ldots,k\}\}$ for each $s$ in $\{-k,\ldots,0\}$. Denoting $\overleftarrow{O^{(\ell)}_s}$ the reverse ordering,
        let us consider the following ordering of~$\walks[\ell]{H_k}$.
        \begin{align}
            \overrightarrow{O^{(\ell)}} \coloneq \overrightarrow{O^{(\ell)}_{-k}},~\overleftarrow{O^{(\ell)}_{-k+1}},~\overrightarrow{O^{(\ell)}_{-k+2}},\overleftarrow{O^{(\ell)}_{-k+3}},~\ldots,~\overrightarrow{O^{(\ell)}_{0}} 
        \end{align}
        Note that~$\overrightarrow{O^{(\ell)}}$ always ends by $\overrightarrow{O^{(\ell)}_{0}}$ and never by~$\overleftarrow{O^{(\ell)}_{0}}$ since we assumed~$k$ to be an even number.
Since~$w_{s,k}^{(\ell)}\pprel{2k}w_{s+1,k}^{(\ell)}$ and~$w_{s,0}^{(\ell)}\pprel{2k}w_{s+1,0}^{(\ell)}$, for each~$s\in\{-k,\ldots,-1\}$, the ordering~$\overrightarrow{O^{(\ell)}}$ is~$2k$-bounded. 
The last walk of~$\overrightarrow{O^{(\ell)}}$ is~$w_{0,k}^{(\ell)}$ and the first walk of~$\overleftarrow{O^{(\ell+1)}}$ is~$w_{0,k}^{(\ell+1)}$, which satisfy~$w_{0,k}^{(\ell)}\pprel{1}w_{0,k}^{(\ell+1)}$.
Symmetrically, the last walk of~$\overleftarrow{O^{(\ell)}}$
        is~$w_{-k,0}^{(\ell)}$ 
        and the first walk of~$\overrightarrow{O^{(\ell+1)}}$ 
        is~$w_{-k,0}^{(\ell+1)}$,
        which also 
        satisfy~$w_{-k,0}^{(\ell)}\pprel{1}w_{-k,0}^{(\ell+1)}$.
It follows that~$O$, given below, is a $2k$-bounded ordering of~$\walks[\geq2k+1]{H_k}$.
        \begin{equation}
            O \coloneq \overrightarrow{O^{(2k+1)}},~\overleftarrow{O^{(2k+2)}},
            ~\ldots,~\overrightarrow{O^{(2\ell+1)}},~\overleftarrow{O^{(2\ell+2)}},~\ldots
        \end{equation}
        In other words, the graph~$H_k$ is eventually $2k$-orderable by increasing length when~$k$ is even.

        In the case where~$k$ is odd the ordering is defined in the same way.
        However, $\overrightarrow{O^{(\ell)}}$ ends with $\overleftarrow{O^{(\ell)}_{0}}$ hence with walk~$w_{0,0}^{(\ell)}$.
        Then $\overleftarrow{O^{(\ell+1)}}$ starts with~$w_{0,0}^{(\ell+1)}$, which is related by $\pprel{2k}$ (in fact even $\pprel{1}$) to~$w_{0,0}^{(\ell)}$,
        and ends with~$w_{-k,0}^{(\ell+1)}$.
    \end{proof}
\end{toappendix}

\section{Parameter Computation}
\label{s:param-comp}
\begin{toappendix}
    \label{a:param-comp}
\end{toappendix}

In this section, we show that inner-connectedness, peripherality, criticality, and their eventual versions can all be computed in polynomial time.
First, by computing the condensation in linear time (using, e.g., Tarjan's
algorithm~\cite{tarjan1972depth}), we obtain:

\begin{lemmarep}\label{p:type:linear}
    One may compute in linear time the kind of all objects in a graph.
\end{lemmarep}
\begin{proof}
    We first compute the strongly connected components of~$G$ and the condensation 
in linear time using Tarjan's algorithm~\cite{tarjan1972depth}.
    From this information, one may determine for every object whether it satisfies
    \cref{eq:kind:after-scc,eq:kind:before-scc}: for this we do one forward traversal of the graph starting
    from all nontrivial sccs, and one backward traversal of the graph
    starting from all nontrivial sccs. The information of which objects are
    reached by these traversals allows us to compute the kinds of all objects
    in linear time.
\end{proof}

We now turn to (eventual) peripherality, and show the next statement.

\begin{propositionrep}\label{p:peri-comp}
    Let~$G=(V,E)$ be a graph, and~$F$ the set of finitary vertices in~$G$.
    \begin{enumerate}[a.]
    \item \label{p:peri-comp:eve}
        The eventual peripherality of~$G$ can be computed in time~$O(|E|)$.
    \item \label{p:peri-comp:classic}
        The peripherality of~$G$ can be computed in time~$O(|E|\times (|F|+1))$.
    \end{enumerate}
\end{propositionrep}
\begin{proofsketch}
We first compute the kinds of objects thanks to \cref{p:type:linear}.
For every vertex~$v$, we define the following quantities:
\begin{itemize}
    \item $\ell(v)$, the maximal length of the $\L$-factor across walks
          of type~$\L^?\I^?$ ending at~$v$;
    \item $r(v)$, the maximal length of the $\R$-factor across walks 
          of type~$\I^?\R^?$ starting at~$v$;
\item  $M_k(v) \coloneq \max\big\{\,\ell(u)\;\big|\;
    u\in V \text{ s.t.\ there is a walk of type~\F{} of length~$k$ from~$u$ to~$v$}\,\big\}$, for every~$k$, $1\leq k \leq |F|+1$, fixing~$M_k(v) \coloneq-\infty$ if no such walk exists.
\end{itemize}
    Since the subgraph $(V,E')$ is acyclic, where~$E'=\{e\in E\mid e\text{ is not inner}\}$, these quantities are all well-defined. The first two can be computed in time~$O(|E|)$ by dynamic programming on the condensation.  The third one can be computed in~$O\big(|E|\times(|F|+1)\big)$ by increasing~$k$ again using dynamic programming.
    It may then be observed that the peripherality of $\L^?\I^?\R^?$-walks corresponds~to $\max_{v\in V} \min(\ell(v),r(v))$, for~$v\in V$,
    and that the  peripherality of 
$\L^?\F\R^?$-walks 
    corresponds to~$\max_{v\in V} \max_k(k+\min(M_k(v),r(v)))$.
\end{proofsketch}
\begin{proof}
We first compute the kind of every object in linear time 
using \cref{p:type:linear}.
Since every cycle entirely belongs to a nontrivial scc, the subgraph formed of the edges that are of kind~$\L$, $\R$ or $\F$ is acyclic.
For every vertex~$v$, we define two symmetric quantities:
\begin{itemize}
    \item $\ell(v)$, the maximal length of the $\L$-factor of a walk 
          of type~$\L^?\I^?$ ending at~$v$;
    \item $r(v)$, the maximal length of the $\R$-factor of a walk
      of type~$\I^?\R^?$ starting at~$v$.
\end{itemize}
The values~$\ell(v)$ are computed in linear time in two phases: dynamic programming over
the \L-subgraph first yields, for every vertex~$u$, the maximal length of a walk of
type~$\L^?$ ending at~$u$; these values are then propagated, taking maxima, along a
topological ordering of the condensation 
of the inner subgraph.
This indeed computes~$\ell$, since the walks of type~$\L^?\I^?$ ending at~$v$ are exactly
the concatenations of a walk of type~$\L^?$ ending at some vertex~$u$ with an inner walk
from~$u$ to~$v$, both possibly empty.
The values~$r(v)$ are computed symmetrically.

We will first show how to compute the peripherality of walks of type
$\L^?\I^?\R^?$ in \cref{c:peri-comp:lir}, and then treat the walks of type $\L^?\F\R^?$ in \cref{c:peri-comp:lfr}.
For walks of type $\L^?\I^?\R^?$ we distinguish two cases depending on whether we want to study peripherality or eventual peripherality:

\begin{claim}\label{c:peri-comp:lir}
The following equalities hold.
\begin{enumerate}[a.]
    \item \label{c:peri-comp:lir:full}
    $\max\big\{\peri(w)\;\big|\;w\text{ is of type $\L^?\I^?\R^?$}\big\}
    \;=\;    \max_{v\in V}\;\min\big(\ell(v),r(v)\big)$
    \item \label{c:peri-comp:lir:eve}
    $\max\big\{\peri(w)\;\big|\;w\text{ encounters a nontrivial scc}\big\}
    \;=\;
    \max_{v\in S}\;\min\big(\ell(v),r(v)\big)$, where $S$ denotes the set of the vertices that belong to a nontrivial scc.
\end{enumerate}
\end{claim}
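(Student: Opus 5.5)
To prove \cref{c:peri-comp:lir}, we show both inequalities of each equality. For any walk of type $\L^?\I^?\R^?$ we have $\peri(w)=\min(|w|_\L,|w|_\R)$ by \cref{p:peri-direct}\ref{p:peri-direct:classic}. The common tool is to cut $w$ at a well-chosen vertex $v$ and compare the $\L$-part and $\R$-part with $\ell(v)$ and $r(v)$.

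For the inequality ``$\leq$'' in item~\ref{c:peri-comp:lir:full}, let $w$ be of type $\L^?\I^?\R^?$ with factorization $w=w_\ell w_i w_r$. If $w_i$ is nonempty, I pick $v=\src(w_i)$; otherwise I pick the junction vertex $v=\tgt(w_\ell)=\src(w_r)$, handling an empty $w_\ell$ or $w_r$ via the endpoint of $w$. Then $w_\ell$ is a walk of type $\L^?$ ending at $v$, hence of type $\L^?\I^?$, so $|w_\ell|\leq\ell(v)$. Also $w_i w_r$ starts at $v$ and is of type $\I^?\R^?$, so $|w_r|\leq r(v)$. Therefore $\peri(w)\leq\min(\ell(v),r(v))$.

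For item~\ref{c:peri-comp:lir:eve}, the walk $w$ encounters a nontrivial scc. That scc can only be met at a vertex of $w_i$, or at the junction vertex when $w_i$ is empty, since type-$\L$ and type-$\R$ walks cannot pass through nontrivial-scc vertices internally. I then choose $v\in S$ on $w$ at that position and split $w=w_1w_2$ at $v$. Here $w_1$ is of type $\L^?\I^?$ and contains all of $w_\ell$, and $w_2$ is of type $\I^?\R^?$ and contains all of $w_r$. This gives $\peri(w)\leq\min(\ell(v),r(v))$ with $v\in S$.

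For the inequality ``$\geq$'', I take the vertex $v$ attaining the maximum. I choose a walk $u$ of type $\L^?\I^?$ ending at $v$ whose $\L$-factor has length $\ell(v)$, and a walk $u'$ of type $\I^?\R^?$ starting at $v$ whose $\R$-factor has length $r(v)$. The key check is that $uu'$ is of type $\L^?\I^?\R^?$. This holds because \cref{pp:walk-type}\ref{pp:walk-type:no-i&f} together with the implications in its proof forbid an $\L$-edge after an $\I$- or $\R$-edge, and forbid an $\R$-edge before an $\I$-edge, so the factor order is respected. Hence $\peri(uu')=\min(\ell(v),r(v))$.

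For item~\ref{c:peri-comp:lir:eve}, the walk $uu'$ passes through $v\in S$, so it encounters a nontrivial scc. By \cref{p:peri-direct}\ref{p:peri-direct:eve} the left-hand side is well defined, so the construction yields the matching bound.

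The main obstacle is the bookkeeping in the ``$\leq$'' direction of item~\ref{c:peri-comp:lir:eve}: one must argue carefully that the scc vertex splits $w$ so that the $\L$-factor lies wholly before it and the $\R$-factor wholly after it. I would derive this from the kind definitions. An $\L$-edge cannot satisfy \eqref{eq:kind:after-scc}, so no $\L$-edge follows a vertex of a nontrivial scc. Symmetrically, an $\R$-edge cannot satisfy \eqref{eq:kind:before-scc}, so no $\R$-edge precedes such a vertex.
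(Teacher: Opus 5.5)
Your proof is correct and follows essentially the same route as the paper's. For $\le$ you cut $w$ at a vertex separating its $\L$-factor from its $\R$-factor, taken inside a nontrivial scc for the second item. For $\ge$ you concatenate walks realizing $\ell(v)$ and $r(v)$. Two small points. The fact that a walk encountering a nontrivial scc is of type $\L^?\I^?\R^?$ is left implicit; your final paragraph's argument, applied to $\F$-edges as well, supplies it. Also, the type of $uu'$ follows directly by merging the two $\I^?$-factors into one, with no need to appeal to \cref{pp:walk-type}.
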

 
\begin{claimproof}[Proof of \cref{c:peri-comp:lir}]
  We start with the ($\leq$) direction of~\eqref{c:peri-comp:lir:full}.  Let~$w=w_\ell\cdot w_i\cdot w_r$ be an $\L^?\I^?\R^?$-factorized walk.  
    We show that $\peri(w)$ is less than the right-hand-side of the claimed equality.
  If~$\peri(w)=0$, then the claim holds, hence we assume in the following that~$w_\ell$ and~$w_r$ are nonempty.
    In particular, there is a factorization~$w=w_1\cdot w_2$ such that~$w_1$ is of type~$\L\I^?$ 
    and~$w_2$ of type~$\I^?\R$. If we fix~$x=\tgt(w_1)$, we have~$|w_\ell|\leq\ell(x)$ and~$|w_r|\leq r(x)$, hence also~$\peri(w)\leq\min\big(\ell(x),r(x)\big)$.
    This immediately yields the ($\leq$) inequality of~\eqref{c:peri-comp:lir:full}.
    For the ($\leq$) inequality of~\eqref{c:peri-comp:lir:eve}, notice that any
    walk that encounters a nontrivial scc is of type~$\L^?\I^?\R^?$ hence the
    prior reasoning applies, picking the vertex~$x$ inside a nontrivial scc to be within the $\I^?$ factor or at the interface between the $\L^?$ and $\R^?$ factors, so achieving the maximum in the right-hand-side of equality~(\ref{c:peri-comp:lir:eve}) because $v$ can be chosen to be~$x$ which is indeed inside a nontrivial scc.

    \medskip
    
    We now show the ($\geq$) direction of \eqref{c:peri-comp:lir:full}. Let~$x$ be any vertex.
    Concatenating a walk that realizes~$\ell(x)$ and a walk that realizes~$r(x)$
    yields a walk~$w$ of type~$\L^?\I^?\R^?$ whose factorization has an $\L^?$-part of
    length~$\ell(x)$ and an $\R^?$-part of length~$r(x)$, hence of
    peripherality~$\min\big(\ell(x),r(x)\big)$.
    This yields the ($\geq$) inequality of~\eqref{c:peri-comp:lir:full}.
    Moreover, if~$x$ belongs to~$S$, then $w$~encounters a nontrivial scc, which yields the ($\geq$) inequality of~\eqref{c:peri-comp:lir:eve}.
\end{claimproof}

It remains to compute the maximal peripherality of $\L^?\F\R^?$-walks.
For every vertex~$v$ and integer~$k\geq1$, let
\begin{equation*}
    M_k(v) = \max\big\{\,\ell(u)\;\big|\; u \in V
    \text{s.t.\ there is a walk of type~\F{} of length~$k$ from~$u$ to~$v$}\,\big\},
\end{equation*}
with~$M_k(v) =-\infty$ if no such walk exists. Decomposing an \F-walk at its last
edge yields the recurrence
\begin{align*}
    M_1(v)&=\max\big\{\ell(\src(e))\;\big|\;e\text{ is an \F-edge},\ \tgt(e)=v\big\},\\
    M_{k+1}(v)&=\max\big\{M_k(\src(e))\;\big|\;e\text{ is an \F-edge},\ \tgt(e)=v\big\}.
\end{align*}
Since the \F-subgraph is acyclic, $M_k\equiv-\infty$ for all~$k>|F|+1$, so it suffices to consider values $1 \leq k \leq |F|+1$ and all values are
computed in time~$O(|E|\times(|F|+1))$.

\begin{claim}\label{c:peri-comp:lfr}$\displaystyle\max\big\{\peri(w)\,\big|\,w\text{ is of type $\L^?\F\R^?$}\big\}
    =
    \max_{\strut v\in V,1 \leq k\leq |F|+1}\big\{\,k+\min(M_k(v),r(v))\big\}$
\end{claim}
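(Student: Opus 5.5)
We prove the two inequalities separately, relying on \cref{p:peri-direct}\ref{p:peri-direct:classic}: for an $\L^?\F\R^?$-factorized walk $w=w_\ell\cdot w_f\cdot w_r$ we have $\peri(w)=|w_f|+\min(|w_\ell|,|w_r|)$.

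For the ($\leq$) direction, let $w=w_\ell\cdot w_f\cdot w_r$ be of type $\L^?\F\R^?$, with $w_f$ nonempty, and set $k=|w_f|$ and $v=\tgt(w_f)$. Since the \F-subgraph is acyclic, $w_f$ visits at most $|F|+2$ vertices, and we will check that this forces $1\leq k\leq|F|+1$. (The endpoints of $w_f$ may be non-finitary, but every inner vertex of $w_f$ is finitary.) The walk $w_\ell$ is of type $\L^?$, hence also of type $\L^?\I^?$, and it ends at $u=\src(w_f)$; therefore $|w_\ell|\leq\ell(u)\leq M_k(v)$. Symmetrically, $w_r$ is of type $\R^?\subseteq\I^?\R^?$ and starts at $v$, so $|w_r|\leq r(v)$. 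Combining these bounds gives $\peri(w)\leq k+\min(M_k(v),r(v))$.

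For the ($\geq$) direction, fix $v$ and $k$ with $M_k(v)>-\infty$. Pick $u$ and an \F-walk $w_f$ of length $k$ from $u$ to $v$ such that $\ell(u)=M_k(v)$. Pick a walk $a\cdot b$ of type $\L^?\I^?$ ending at $u$ with $|a|=\ell(u)$, and a walk $c\cdot d$ of type $\I^?\R^?$ starting at $v$ with $|d|=r(v)$. The main obstacle is that the $\I$-parts $b$ and $c$ must be empty for the concatenation to be an $\L^?\F\R^?$-walk. We handle this using \cref{pp:walk-type}\ref{pp:walk-type:no-i&f}. If $b$ were nonempty, then $b\cdot w_f$ would be a walk containing both an $\I$-edge and an \F-edge, which is a contradiction. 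Hence $b=\varepsilon$, and symmetrically $c=\varepsilon$. Thus $w=a\cdot w_f\cdot d$ is a walk of type $\L^?\F\R^?$ (the concatenations are defined since the endpoints match) with factors of lengths $\ell(u)$, $k$ and $r(v)$. By \cref{p:peri-direct}\ref{p:peri-direct:classic}, its peripherality is exactly $k+\min(M_k(v),r(v))$.

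If no \F-walk exists at all, both sides equal $-\infty$ (an empty maximum), and the claim holds trivially. Finally, the peripherality of $G$ is the maximum of the quantities given by \cref{c:peri-comp:lir}\ref{c:peri-comp:lir:full} and this claim, which yields \cref{p:peri-comp:classic}. For \cref{p:peri-comp:eve}, we use \cref{c:peri-comp:lir}\ref{c:peri-comp:lir:eve} together with \cref{p:peri-direct}\ref{p:peri-direct:eve}, which runs in linear time because no $M_k$ values are needed.
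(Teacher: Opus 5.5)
Your proof is correct and follows the paper's argument essentially step for step. For ($\leq$) you bound $|w_\ell|\leq\ell(u)\leq M_k(v)$ and $|w_r|\leq r(v)$; for ($\geq$) you concatenate walks realizing $\ell(u)$ and $r(v)$ with an \F-walk realizing $M_k(v)$, using the fact that no walk contains both an $\I$-edge and an $\F$-edge to force the $\I^?$-factors to be empty. You are slightly more explicit than the paper on that point and on why $k\leq|F|+1$, though ``inner vertex of $w_f$'' should read ``internal vertex'', since ``inner'' means kind~$\I$ in this paper.
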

\begin{claimproof}[Proof of \cref{c:peri-comp:lfr}]
($\leq$) Let~$w=w_\ell\cdot w_f\cdot w_r$ be an $\L^?\F\R^?$-factorized
walk, and let~$k=|w_f|$, $u=\src(w_f)$ and~$v=\tgt(w_f)$.
Let us show that $\peri(w)$ is less than the right-hand-side.
Then~$|w_\ell|\leq\ell(u)\leq M_k(v)$ and~$|w_r|\leq r(v)$,
and the following equation concludes the proof of ($\leq$).
\begin{equation*}
    \peri(w)=k+\min(|w_\ell|,|w_r|)\leq k+\min\big(M_k(v),r(v)\big)
\end{equation*}

\medskip

($\geq$) Let~$v$ and~$k\geq 1$ be such that~$M_k(v)\neq-\infty$, realized by a vertex~$u$ and
an \F-walk~$w_f$ from~$u$ to~$v$. Moreover, we let~$w_\ell$ and~$w_r$ be some walks that realize~$\ell(u)$ and~$r(v)$.
The walk~$w'=w_\ell \cdot w_f \cdot w_r$ cannot contain both an
\I-edge and an \F-edge from \cref{pp:walk-type}\ref{pp:walk-type:no-i&f}.
Since~$w'$ contains an \F-edge, it is necessarily of type $\L^?\F\R^?$ and~$w_\ell \cdot w_f \cdot w_r$ is its $\L^?\F\R^?$-factorization, hence: 
\begin{equation*}
    \peri(w') = k+\min\big(\ell(u),r(v)\big) = k+\min\big(M_k(v),r(v)\big)
\end{equation*}
which concludes the proof of $(\geq)$ hence of \cref{c:peri-comp:lfr}.
\end{claimproof}

The eventual peripherality of~$G$ is equal to the quantity in Item~\ref{c:peri-comp:lir}\ref{c:peri-comp:lir:eve} from \cref{p:peri-direct}\ref{p:peri-direct:eve}.
Since~$\ell(v)$ and $r(v)$ for all~$v \in V$ are computed in $O(|E|)$ total time, \cref{p:peri-comp:eve} of \cref{p:peri-comp} follows.
The peripherality of~$G$ is the maximum of the quantities in Items~\ref{c:peri-comp:lir}\ref{c:peri-comp:lir:full} and~\ref{c:peri-comp:lfr}.
Since all~$M_k(v)$ are computed for all~$v\in V$ and~$k\in\N$ in $O(|E|\cdot(|F|+1))$ total time, \cref{p:peri-comp:classic} of \cref{p:peri-comp} follows.
\end{proof}

We now show that the criticality and eventual criticality can be computed in linear time,
similarly to the proof of~\cref{p:peri-comp}\ref{p:peri-comp:eve}. Indeed, 
we simply compute the maximal length of a walk of type~$\L^?\F^?$ (resp.\ of type~$\F^?\R^?$) ending in
(resp.~starting in) each vertex~$v$.

\begin{propositionrep}\label{p:crit-comp}
    Given a graph~$G=(V,E)$, the criticality and eventual criticality of~$G$ can
    be computed in time~$O(|E|)$.
\end{propositionrep}

\begin{proof}
We first discuss criticality.  From the definition (\cref{d:criticality}), the criticality is the maximum between the criticality of walks of type $\L^?\I^?\R^?$ and those of type $\L^?\F\R^?$.
The maximal criticality of walks of type $\L^?\I^?\R^?$ can be computed in linear time as in the proof of
\cref{p:peri-comp} using the result of
    \cref{c:peri-comp:lir}\ref{c:peri-comp:lir:full}, because the maximal criticality over such walks is obtained as the maximal peripherality minus one. 
    Now, the maximal criticality of walks of type $\L^?\F\R^?$ can be computed as follows.
    Like in the beginning of the proof of \cref{p:peri-comp}, we first compute the kinds of all objects using \cref{p:type:linear}, and we use the fact that the subgraph of edges of kind $\L$, $\R$ and $\F$ is acyclic. This allows us to compute in linear time over this graph by dynamic programming the following quantities for every vertex $v$: the maximal length of a walk of type $\L^?\F^?$ that ends at~$v$, and the maximal length of a walk of type $\F^?\R^?$ that starts at~$v$. Once this is done, the maximal criticality of walks of type $\L^?\F\R^?$ is obtained following \cref{d:criticality} by taking the maximum, over all edges $(u,v)$ of kind $\F$, of the minimum of the first quantity for~$u$ and of the second quantity for~$v$.

   For eventual criticality, we use again that walks that encounter a nontrivial scc must be of type $\L^?\I^?\R^?$, so that we compute it from peripherality again simply using  \cref{c:peri-comp:lir}\ref{c:peri-comp:lir:eve}. This concludes the proof.
\end{proof}

Recall that the peripherality of graphs never exceeds twice the criticality plus
two (\cref{pp:peripherality-vs-criticality}), so
peripherality can be 2-approximated by criticality plus one, in linear time.

We now show that, by contrast, it is NP-hard to determine whether the pp-increasing-rank of a given graph~$G$ is no greater than~$4$:

\begin{propositionrep}
  \label{p:4ord-nphard}
    Deciding whether a graph is 4-orderable by increasing length is NP-hard, even for acyclic graphs.
\end{propositionrep}

\begin{proofsketch}
We reduce from the Edge Hamiltonian Path Problem on
  bipartite graphs~\cite{LaiWei1993}. In this problem, we are given as input
  an undirected bipartite graph~$G = (U \uplus V, E)$, and we ask whether the 
  \emph{line graph} $\mathrm{L}(G)$ of~$G$
  has a Hamiltonian path. Recall that $\mathrm{L}(G)$ is the graph on~$E$ where
  we connect any two edges that share an endpoint.
We then build an acyclic graph $G'$ from~$G$ by orienting its edges from~$U$ to~$V$ and adding a source $s$ connected to every vertex of~$U$ and a target~$t$ connected to every vertex of~$V$. This ensures that walks of length~$3$ in~$G'$ correspond to
  edges of~$G$, and that two such walks are at push-pop distance of 4 precisely
  when the corresponding edges of~$G$ share an endpoint. Thus, a 4-ordering of~$G'$ exists precisely when
  $\mathrm{L}(G)$  admits a Hamiltonian path.
\end{proofsketch}

\begin{proof}
  Recall that the \emph{line graph} of an undirected graph $G$ is the
  graph $G'$ whose vertex set consists of the edges of~$G$ where we connect two
  edges of~$G$ in~$G'$ whenever they
  share a vertex in~$G$.
  We reduce from the Edge Hamiltonian Path Problem~\cite{LaiWei1993} where we
  ask, given an undirected graph $G$ as input, whether it admits an
  \emph{edge Hamiltonian path}, which is simply a Hamiltonian path of the line
  graph of~$G$. The Edge Hamiltonian Path Problem is NP-hard even when $G$ is required
  to be a bipartite graph~\cite{LaiWei1993}.

  To describe the reduction, let $G = (A \uplus B, E)$ be the
  bipartite graph that we are given as input. We construct the directed graph $G' = (\{s,t\} \uplus A \uplus B, E')$ as follows:
    \begin{itemize}
        \item There is an edge from $s$ to each vertex in $A$;
        \item There is an edge from each vertex in $B$ to $t$;
        \item For each $e\in E$, there is an edge $e'\in E'$ such that $\{\src(e), \tgt(e)\} = \{\src(e'), \tgt(e')\}$ and $\src(e') \in A$, $\tgt(e')\in B$ (that is, we direct every edge of $G$ from $A$ to $B$).
    \end{itemize}
    It is immediate that the construction of~$G'$ is in polynomial time in the
    input~$G$.

    Remark that every walk of length $3$ in $G'$ is of the form $s \rightarrow a \rightarrow b \rightarrow t$, with $a\in A$ and $b\in B$. Thus, there is a bijection between the walks of length $3$ of $G'$ and the edges of $G$.

    Finally, remark that two walks of length $3$ are related by $\ppreleq{4}$ if and only if they share their vertex in $A$ or their vertex in $B$. Formally, let $w = s\xrightarrow{e_1}a\xrightarrow{e_2}b\xrightarrow{e_3}t$ and $w' = s\xrightarrow{e_1'}a'\xrightarrow{e_2'}b'\xrightarrow{e_3'}t$. 
    Then we have $w \ppreleq{4} w'$ if and only if $e_1 = e_1'$ or $e_3 = e_3'$, which yields $a = a'$ or $b = b'$.
    This is precisely the condition under which two edges of~$G$ are
    adjacent in the line graph of~$G$, i.e., 
    the condition under which
two edges of $G$ can occur
    consecutively in an edge Hamiltonian path. Thus there exists a
    $4$-ordering of the walks of length $3$ of $G'$ if and only if $G$
    admits
an edge Hamiltonian path.

    This concludes the proof. Indeed, walks of length $1$ or $2$ of $G'$ are
    always $4$-orderable. Thus $G'$ is $4$-orderable by increasing length if and
    only if $G$ has an edge Hamiltonian path.
\end{proof}

We note that the proof focuses on walks of small length, so it does not generalize to eventual pp-increasing-rank. We do not know whether hardness generalizes to this parameter, and we also do not know complexity upper bounds: we will discuss this in the conclusion.

\section{Conclusion and Future Directions}
\label{s:conclusion}

Our work introduces the question of finding walk orderings of a graph that bound the push-pop distance between consecutive walks.
Our main contribution is threefold. 
First, we proved that the pp-rank of inner-connected graphs is bounded by 3, improving upon the previous bound of $48(|V|+1)^2$.
Second, we introduced two graph measures, peripherality and criticality, that yield almost tight lower bounds (sometimes up to an additive constant of 1) and approximate upper bounds (up to a constant factor) to determine the pp-increasing-rank and eventual pp-increasing-ranks.
Third, we showed that peripherality and criticality can be computed in polynomial time, while computing the exact pp-increasing-rank is intractable.

Our third contribution leaves open several complexity questions.
Indeed, we have no lower bound on the complexity of computing the pp-rank or the eventual pp-increasing-rank.
More surprisingly, we have no complexity upper bounds on these tasks at all, even to show that they are computable.
We give in Appendix~\ref{a:open-problems} a more complete list of open questions.

Further, 
our initial motivation from the introduction is to 
efficiently enumerate walks in graph databases.  
Inner-connectedness 
determines the \emph{existence} of 
a bounded walk enumeration, 
but 
one would like to design \emph{effective} algorithms with good bounds on the \emph{delay} between two consecutive walks and on the \emph{memory usage}.
We have hope that 
such algorithms can be designed in the
model of deque machines from~\cite{CDCASGrayCode-icalp}.
This would be a substantial improvement over~\cite{amarilli2023enumerating},
which needs unbounded auxiliary memory.
Another 
question is whether walk enumeration on non-inner-connected graphs
can be done with a limited number of \emph{threads} like in~\cite{amarilli2023enumerating}; we expect that our results would generalize but leave this open.

In a different direction, a natural extension of our work would be to allow additional elementary operations beyond push and pop. 
For instance, we could allow the insertion and deletion of edges at arbitrary positions of the walk, leading to a distance on walks amounting to the Levenshtein distance on words. 
This raises the question of determining which graphs are Levenshtein-orderable and what is their Levenshtein-rank.
One can see by inspecting the proof of \cref{p:inner-disconnected=>nonorderable} (Appendix~\ref{a:typology}) or of \cite[Theorem 3.9]{amarilli2023enumerating} that inner-connectedness remains a necessary condition for Levenshtein-orderability, and of course Levenshtein-rank is upper bounded by pp-rank so it is always at most~$3$ for inner-connected graphs. 
However, when ordering walks in increasing order of length, the situation is different, because the Levenshtein-increasing-rank (defined in the expected way) no longer admits a lower bound from peripherality and criticality. 
Indeed, \cref{f:levenstein} shows a family of graphs with unbounded peripherality and bounded Levenshtein-increasing-rank.
Indeed, notice that each walk $e_1\,e_2\cdots e_i \,f_{i+1}\,g^j\,w $ is at Levenshtein distance $4$ from~$e_1\,e_2\,\cdots e_{i-1}\,f_{i}\,g^{j+1}\,w$, so we could use \cref{p:layer:even-odd} to upper bound the eventual Levenshtein-increasing-rank by~$12$.
Note that if we replace the loop labeled by $g$ with a simple cycle of length~$k$, then the Levenshtein increasing-rank is no longer bounded: this suggests that entirely new tools would be necessary to extend our results to other distances such as the Levenshtein distance.

\begin{figure}[ht]
    \centering
    \begin{tikzpicture}
    \newcommand{\dist}{14mm}\input{figure/graphstyle}

    \path node[vertex,outer] (v1) {} 
            ++(0:\dist) node [vertex, outer] (v2){}
            ++(0:\dist) node [vertex, outer,opacity=0] (v3){}
            ++(0:1.25*\dist) node [vertex, outer,opacity=0] (vkmo){}
            ++(0:\dist) node [vertex, outer] (vk) {}
            ++(0:\dist) node [vertex, inner] (r) {}
            ++(0:\dist) node [vertex, outer] (v1p) {}
            ++(0:\dist) node [vertex, outer,opacity=0] (v2p) {}
++(0:1.25*\dist) node [vertex, outer,opacity=0] (vkmop) {}
            ++(0:\dist) node [vertex, outer] (vkp) {}
    ;

    \path[edge,outer] (v1) to node[label, near start,above]{$e_{1}$} (v2);
    \path[edge,outer] (v2) to node[label, near start,above]{$e_{2}$} (v3);
    \path (v3) to node[midway,outer] {$\cdots$} (vkmo);
    \path[edge,outer] (vkmo) to node[label, near start,above]{$e_{k}$} (vk);
    \path[edge,outer] (vk) to node[label, pos=.4,above]{${f_{k+1}}$} (r);
    \path[edge,outer] (r) to (v1p);
    \path[edge,outer] (v1p) to (v2p);
    \path (v2p) to node[midway,outer] {$\cdots$} (vkmop);
    \path[edge,outer] (vkmop) to (vkp);
    
    \path[edge,inner,north loop,overlay] (r) to node[label, above](rlabel)
      {$g$} (r);

    \begin{pgfinterruptboundingbox}
        \path[edge,outer,overlay] (v1) to[out=-20,in=-90,looseness=.6] node[label, near start, below]{${f_{1}}$} (r);
        \path[edge,outer] (v2) to[out=-20,in=-120,looseness=.7] node[label, near start, above right,yshift=-1mm]{${f_{2}}$} (r);
        \path[edge,outer] (vkmo) to[out=-20,in=-150] node[label, pos=.1, below left, xshift=3mm]{${f_{k}}$}(r);
    \end{pgfinterruptboundingbox}
    
    \draw [decorate,decoration={brace,amplitude=1em,mirror,raise=2em}]
  (r) -- (vkp) node[midway,yshift=-4em]{Some walk~$w$ s.t.~$|w|=k+1$};
    
\end{tikzpicture}
    \caption{Graphs of unbounded eventual peripherality and bounded Levenshtein increasing rank}
    \label{f:levenstein}
\end{figure}

\section{Bibliography}
\label{sec:bibliography}
\printbibliography[heading=none]

@inproceedings{MartensEtAl24,
  author       = {Wim Martens and
                  Matthias Niewerth and
                  Tina Popp and
                  Carlos Rojas and
                  Stijn Vansummeren and
                  Domagoj Vrgoc},
  title        = {Compact path representations for graph database pattern matching},
  booktitle    = {ICDE'24},
  -pages        = {379--380},
  -publisher    = {{IEEE}},
  year         = {2024},
  doi          = {10.1109/ICDEW61823.2024.00059},
}

@article{MartensEtAl23,
  author       = {Wim Martens and
                  Matthias Niewerth and
                  Tina Popp and
                  Carlos Rojas and
                  Stijn Vansummeren and
                  Domagoj Vrgoc},
  title        = {Representing paths in graph database pattern matching},
  journal      = {Proc. {VLDB} Endow.},
  volume       = {16},
  number       = {7},
  -pages        = {1790--1803},
  year         = {2023},
  doi          = {10.14778/3587136.3587151},
}

@inproceedings{LibkinEtAl25-gems,
  author       = {Leonid Libkin and
                  Wim Martens and
                  Filip Murlak and
                  Liat Peterfreund and
                  Domagoj Vrgoc},
  title        = {Querying graph data: Where we are and where to go},
  booktitle    = {Companion of {PODS}'25},
  -pages        = {9--26},
  publisher    = {{ACM}},
  year         = {2025},
  doi          = {10.1145/3722234.3725822},
}

@article{LiangEtAl2024,
author = {Liang, Qi and Ouyang, Dian and Zhang, Fan and Yang, Jianye and Lin, Xuemin and Tian, Zhihong},
title = {Efficient regular simple path queries under transitive restricted expressions},
year = {2024},
issue_date = {March 2024},
journal = {VLDB Endowment},
volume = {17},
number = {7},
issn = {2150-8097},
doi = {10.14778/3654621.3654636},
-pages = {1710–1722},
numpages = {13},
keywords = {enumeration, reachability, regular expression, simple path}
}

@inproceedings{amarilli2023enumerating,
  author = {Antoine Amarilli and Mikaël Monet},
  title = {\href{https://arxiv.org/abs/2209.14878}
    {Enumerating regular languages with bounded delay}},
  year = {2023},
  booktitle = {\mbox{\href{https://www.univ-orleans.fr/lifo/stacs/}{STACS}}},
  doi = {10.4230/LIPIcs.STACS.2023.8},
}

@inproceedings{adamson2024enumerating,
  title={Enumerating $m$-length walks in directed graphs with constant delay},
  author={Adamson, Duncan and Gawrychowski, Pawe{\l} and Manea, Florin},
  booktitle={LATIN},
  doi          = {10.1007/978-3-031-55598-5_3},
  year={2024},
  -publisher={Springer}
}

@article{Eppstein98-compact-shortest,
  author       = {David Eppstein},
  title        = {Finding the $k$ shortest paths},
  journal      = {{SIAM} J. Comput.},
  volume       = {28},
  number       = {2},
  -pages        = {652--673},
  year         = {1998},
  doi          = {10.1137/S0097539795290477},
}

@inproceedings{CDCASGrayCode-icalp,
  author       = {Antoine Amarilli and
                  Claire David and
                  Nadime Francis and
                  Victor Marsault and
                  Mika{\"{e}}l Monet and
                  Yann Strozecki},
  title        = {Gray codes with constant delay and constant auxiliary space},
    booktitle = {ICALP'26},
    -series = {LIPIcs},
    ISBN = {978-3-95977-428-4},
    ISSN = {1868-8969},
    volume =  {374}, 
    -publisher = {Schloss Dagstuhl -- Leibniz-Zentrum f{\"u}r Informatik},
    year    = {2026},
    doi     = {10.4230/LIPIcs.ICALP.2026.160}
}

@inproceedings{FrancisEtAl2018,
  author       = {Nadime Francis and
                  Alastair Green and
                  Paolo Guagliardo and
                  Leonid Libkin and
                  Tobias Lindaaker and
                  Victor Marsault and
                  Stefan Plantikow and
                  Mats Rydberg and
                  Petra Selmer and
                  Andr{\'{e}}s Taylor},
  -editor       = {Gautam Das and
                  Christopher M. Jermaine and
                  Philip A. Bernstein},
  title        = {Cypher: An evolving query language for property graphs},
  booktitle    = {SIGMOD'18},
  -pages        = {1433--1445},
  publisher    = {{ACM}},
  year         = {2018},
  url          = {https://doi.org/10.1145/3183713.3190657},
  doi          = {10.1145/3183713.3190657},
}

@article{tarjan1972depth,
  title={Depth-first search and linear graph algorithms},
  author={Tarjan, Robert},
  journal={SIAM journal on computing},
  volume={1},
  number={2},
  -pages={146--160},
  year={1972},
  publisher={SIAM}
}

@article{sekanina1960ordering,
  title={On an ordering of the set of vertices of a connected graph},
  author={Sekanina, Milan},
  journal={Publ. Fac. Sci. Univ. Brno},
  volume={412},
  -pages={137--142},
  year={1960}
}

@article{karaganis1968cube,
  title={On the cube of a graph},
  author={Karaganis, Jerome J.},
  journal={Canadian Mathematical Bulletin},
  volume={11},
  number={2},
  -pages={295--296},
  year={1968},
  -publisher={Cambridge University Press},
  doi={10.4153/CMB-1968-037-0}
}

@article{feder1994network,
  title={Network flow and 2-satisfiability},
  author={Feder, Tom{\'a}s},
  journal={Algorithmica},
  volume={11},
  number={3},
  -pages={291--319},
  year={1994},
  -publisher={Springer},
  doi={10.1007/BF01240738}
}

@article{Mutze23-gray, 
    title   = {Combinatorial Gray codes — An updated survey}, 
    volume  = {Dynamic Surveys}, 
    doi     = {10.37236/11023}, 
    journal = {The Electronic Journal of Combinatorics}, 
    author  = {Mütze, Torsten}, 
    year    = {2023}, 
}

@article{chung1992universal,
  title={Universal cycles for combinatorial structures},
  author={Chung, Fan and Diaconis, Persi and Graham, Ron},
  journal={Discrete Mathematics},
  volume={110},
  number={1-3},
  -pages={43--59},
  year={1992},
  -publisher={North-Holland},
  doi={10.1016/0012-365X(92)90699-G},
}

@article{LaiWei1993,
    title = {The edge Hamiltonian path problem is NP-complete for bipartite graphs},
    journal = {Information Processing Letters},
    volume = {46},
    number = {1},
    -pages = {21-26},
    year = {1993},
    issn = {0020-0190},
    doi = {https://doi.org/10.1016/0020-0190(93)90191-B},
    author = {Ten-Hwang Lai and Shu-Shang Wei},
}

@article{ng2017state,
  title={State complexity of prefix distance},
  author={Ng, Timothy and Rappaport, David and Salomaa, Kai},
  journal={Theoretical Computer Science},
  volume={679},
  -pages={107--117},
  year={2017},
  publisher={Elsevier}
}

@inproceedings{segoufin2014glimpse,
  title={A glimpse on constant delay enumeration (invited talk)},
  author={Segoufin, Luc},
  booktitle={STACS},
  year={2014},
  -organization={Schloss Dagstuhl-Leibniz-Zentrum f{\"u}r Informatik}
}

@unpublished{ruskey2003combinatorial,
  title={Combinatorial generation},
  author={Ruskey, Frank},
  note={Working Version (1j-CSC 425/520)},
  -volume={11},
  -pages={20},
  year={2003},
  url = {https://page.math.tu-berlin.de/~felsner/SemWS17-18/Ruskey-Comb-Gen.pdf}
}

@article{david2024distinct,
  title={Distinct shortest walk enumeration for RPQs},
  author={David, Claire and Francis, Nadime and Marsault, Victor},
  journal={Proceedings of the ACM on Management of Data},
  volume={2},
  number={2},
  -pages={1--22},
  year={2024},
  publisher={ACM New York, NY, USA}
}

@book{harary1969,
  author    = {Harary, Frank},
  title     = {Graph theory},
  publisher = {Addison-Wesley},
  -address   = {Reading, MA},
  year      = {1969}
}

@inproceedings{martens2018evaluation,
  title={Evaluation and enumeration problems for regular path queries},
  author={Martens, Wim and Trautner, Tina},
  booktitle={ICDT},
  year={2018},
  -organization={Schloss Dagstuhl-Leibniz-Zentrum f{\"u}r Informatik}
}

@misc{GQL-ISO,
    shorthand = {ISO24},
    author = {{International Organization for Standardization}},
    title = {{GQL}},
    howpublished = {\url{https://www.iso.org/standard/76120.html}},
    note = {Standard ISO/IEC CD 39075}, 
    year= {2024}
}

@misc{SQLPGQ-ISO,
    shorthand = {ISO23},
    author = {{ISO/IEC JTC 1/SC 32}},
    title = {{Part 16: Property Graph Queries (SQL/PGQ)}},
    howpublished = {\url{https://www.iso.org/standard/79473.html}},
    note = {ISO/IEC 9075-16:2023. Information technology – Database
languages SQL}, 
    year= {2023}
}

\end{document}